\newtheorem{theorem}{Theorem}
\newtheorem{lemma}[theorem]{Lemma}
\newtheorem{proposition}[theorem]{Proposition}
\newtheorem{definition}[theorem]{Definition}
\newtheorem{claim}{Claim}
\newtheorem{corollary}[theorem]{Corollary}
\newtheorem{observation}[theorem]{Observation}
\newtheorem{invariant}[theorem]{Invariant}
\newtheorem{fact}[theorem]{Fact}
\newcommand{\paren}[1]{\mathopen{}\left(#1\right)\mathclose{}}
\newcommand{\poly}{\operatorname{\text{{\rm poly}}}}
\title{The Power of Multi-Step Vizing Chains}
\author[1]{Aleksander B. G. Christiansen\thanks{This work was supported by VILLUM FONDEN grant 37507 ``Efficient Recomputations for Changeful Problems''.} \\
Technical University of Denmark}
\date{}
\begin{document}
\maketitle

\begin{abstract}
  Recent papers~\cite{BERNSHTEYN,duan,grebik2020measurable} have addressed different variants of the $(\Delta + 1)$-edge-colouring problem by concatenating or gluing together many Vizing chains to form what Bernshteyn~\cite{BERNSHTEYN} coined \emph{multi-step Vizing chains}. 
  In this paper, we consider the most general definition of this term and apply different multi-step Vizing chain constructions to prove combinatorial properties of edge-colourings that lead to (improved) algorithms for computing edge-colouring across different models of computation. 
  This approach seems especially powerful for constructing augmenting subgraphs which respect some notion of locality. 
  
  First, we construct strictly local multi-step Vizing chains and use them to show a local version of Vizing's Theorem thus confirming a recent conjecture of Bonamy, Delcourt, Lang and Postle~\cite{bonamy2020edge}. That is, we show that there exists a proper edge-colouring of a graph such that every edge $uv$ receives a colour from the list $\{1,2, \dots, \max\{d(u),d(v)\}+1\}$.
  Our proof is constructive and also implies an $O(n^2 \Delta)$ time algorithm for computing such a colouring.

  Then, we show that for any uncoloured edge there exists an augmenting subgraph of size $O(\Delta^{7}\log n)$, answering an open problem of Bernshteyn~\cite{BERNSHTEYN}. Chang, He, Li, Pettie and Uitto~\cite{pettie} show a lower bound of $\Omega(\Delta \log \frac{n}{\Delta})$ for the size of augmenting subgraphs, so the upper bound is asymptotically tight up to $\Delta$ factors.
  These ideas also extend to give a faster deterministic LOCAL algorithm for $(\Delta + 1)$-edge-colouring running in $\tilde{O}(\poly(\Delta)\log^6 n)$ rounds. 
  These results improve the dependency on $\log n$ compared to the recent breakthrough result of Bernshteyn~\cite{BERNSHTEYN}, who showed the existence of augmenting subgraphs of size $O(\Delta^6\log^2 n)$, and used these to give the first $(\Delta + 1)$-edge-colouring algorithm in the LOCAL model running in $O(\poly(\Delta, \log n))$ rounds.
  
  Finally for dynamic graphs, we show how to maintain a$(1+\varepsilon)\Delta$-edge-colouring fully adaptive to $\Delta$ in $O(\varepsilon^{-6} \log^9 n \log^6 \Delta)$ worst-case update time w.h.p without any restrictions on $\Delta$. 
  This should be compared to the edge-colouring algorithm of Duan, He and Zhang~\cite{duan} that runs in $O(\varepsilon^{-4}\log^8 n)$ amortised update time w.h.p under the condition that $\Delta = \Omega(\varepsilon^{-2}\log^2 n)$. Our algorithm avoids the use of $O(\varepsilon^{-1}\log n)$ copies of the graph, resulting in a smaller space consumption and an algorithm with provably low recourse. 
 
\end{abstract}
\newpage

\section{Introduction \& Related work}

In the edge-colouring problem one has to assign colours to the edges of a simple graph such that no edges, sharing an endpoint, receive the same colour. More formally speaking, given a graph $G = (V,E)$ on $n$ vertices and $m$ edges a (proper) $k$-edge-colouring of $G$ is a function $c:E \mapsto [k]$ satisfying that for any two edges $e,e'$ such that $e \cap e' \neq \emptyset$, we have $c(e) \neq c(e')$. 
Vizing famously showed that if $\Delta$ is the maximum degree of $G$, then $G$ has a $(\Delta+1)$-edge-colouring~\cite{vizing1964estimate}, and his proofs extends to give a polynomial time algorithm for computing such a colouring. There exist graphs for which this is tight, but there also exist graphs where $\Delta$ colours suffice. Note that since any such graph has a vertex of degree $\Delta$, one clearly needs at least $\Delta$ colours to properly colour its edges.
It was shown by Holyer that it is NP-complete to distinguish between when $G$ can be $\Delta$-edge coloured and when it cannot~\cite{Ian}.

Vizing actually showed something slightly stronger than the existence of a $(\Delta+1)$-edge-colouring: he showed that given a proper $(\Delta+1)$-edge-colouring of a subgraph $G'$ of $G$, we may extend it to a proper $(\Delta+1)$-edge-colouring of a larger subgraph $G' \subset G''$ of $G$ by recolouring at most $O(\Delta + n)$ edges of $G$. These edge form what we will refer to as an \emph{augmenting subgraph} that is a subgraph such that only recolouring edges inside this subgraph allows us to extend the colouring of an edge. 
Small augmenting subgraphs play an important role if one wants to extend the edge-colouring by colouring multiple uncoloured edges in parallel.

Recently there has been some exciting work proving better upper bounds for the sizes of such subgraphs. As one can perhaps imagine, the existence of small augmenting subgraphs is useful for constructing algorithms, and so these results arise in different models of computation.
In order to design an efficient algorithm for colouring dynamic graphs, Duan, He and Zhang~\cite{duan} essentially showed that if one allows $(1+\varepsilon)\Delta$ colours, then there must exist augmenting subgraphs for every uncoloured edge of size $O(\poly(\Delta, \log n))$. 
With the goal of designing an efficient LOCAL algorithm, Bernshteyn~\cite{BERNSHTEYN}, inspired by the approach of Greb\'\i k and Pikhurko~\cite{grebik2020measurable}, showed that in fact there exists augmenting subgraphs of size $O(\poly(\Delta)\log^2 n)$ even if one only has $\Delta+1$ colours available. 
All these result rely on the idea that if the augmenting subgraph constructed in Vizing's Theorem -- let us call such a subgraph a Vizing chain -- is too big, then one can truncate it early and create a new one. 
This idea can be repeated until a short Vizing chain is found.
If the Vizing chains do not overlap, the above papers show via probabilistic arguments that there has to exist a choice of chains and truncation points that terminates in a small augmenting subgraph quickly. Bernshteyn coined his construction by the name \emph{multi-step Vizing chain}. 

From the lower bound side, it was shown by Chang, He, Li, Pettie and Uitto~\cite{pettie} that augmenting subgraphs sometimes need to have size at least $\Omega(\Delta \log \frac{n}{\Delta})$. This lower bound immediately carries over to the round complexity of any LOCAL algorithm that works by naively extending colours through augmenting subgraphs. 

In this paper, we consider the most general definition of a multi-step Vizing chain. In particular, we will allow subsequent Vizing chains to overlap in edges. 
We will then consider constructions of various subclasses of Vizing chains. 

The different constructions from above all give rise to a subclass of Vizing chains which we will call \emph{non-overlapping}. 
We will also consider non-overlapping multi-step Vizing chains, but in contrast to Duan, He and Zhang~\cite{duan} and Bernshteyn~\cite{BERNSHTEYN} we view them through a non-probabilistic lens and show that they can be used to construct augmenting subgraphs of size $O(\poly(\Delta)\log n)$. This shows that edge-colourings of graphs with low degree are locally extendable -- in the sense that they can be extended through augmenting subgraphs in the close neighbourhood of the uncoloured edges. The existence of such subgraphs also allows us to give an improved LOCAL algorithm for computing $(\Delta+1)$ edge-colourings that falls into a long line of work where round complexity is parameterised by $\Delta$ and $\log n$ (see for instance~\cite{BERNSHTEYN,pettie,ghaffari2018deterministic,rozhovn2020polylogarithmic,su2019towards} for a subset) which we elaborate on below.

Moreover, we define another kind of multi-step Vizing chains that we call \emph{strictly local} Vizing chains and use them to prove the existence of an edge-colouring that takes the local structure around the edges into account. 
Namely, we will show the following conjecture of Bonamy, Delcourt, Lang and Postle~\cite{bonamy2020edge}.
\begin{theorem}[Local Vizing Theorem, Conjectured in~\cite{bonamy2020edge}]
For any graph $G$, there exists a proper edge-colouring of the edges of $G$ such that all edges $uv$ receives a colour in $\{1, \dots, 1+\max \{d(u), d(v)\}\}$.
\end{theorem}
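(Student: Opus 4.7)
The plan is to prove the theorem by induction, processing the edges of $G$ in decreasing order of $f(e) := 1 + \max\{d(u), d(v)\}$ for $e = uv$. With this ordering, when the induction reaches an edge $e_0 = uv$ (say with $d(u) \geq d(v)$), every already-coloured edge $e'$ satisfies $f(e') \geq f(e_0) = d(u)+1$. Consequently, the palette $P := \{1,\dots,f(e_0)\}$ is \emph{safe} for the current partial colouring, meaning any recolouring of already-coloured edges using only colours in $P$ automatically preserves the local palette constraint on every touched edge. The task thus reduces to extending the colouring to $e_0$ via a Vizing-style augmentation that uses only $P$ and only touches coloured edges.

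I would first attempt a standard Vizing chain. Build a fan $v_0 = v, v_1, v_2, \ldots$ of neighbours of $u$ where each $uv_{i+1}$ bears a colour missing at $v_i$ within $P$, and close the chain by either a fan rotation (when some $\alpha_i$ is also missing at $u$) or a Kempe swap on an $(\alpha, \beta)$-alternating path with $\alpha, \beta \in P$. Since $u$ has at most $d(u) = f(e_0)-1$ coloured neighbours, at least one palette colour is missing at $u$, so a Kempe seed always exists; and the Kempe path, being confined to already-coloured edges (all with $f$-value at least $f(e_0)$), is automatically safe.

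The obstruction which requires the \emph{strictly local} multi-step Vizing chain is that the fan can stall at an intermediate vertex $v_i$ having no missing palette colour at all -- a situation that arises when $d(v_i)$ greatly exceeds $d(u)$ and $v_i$ exhausts every colour of $P$. When this happens I would pick a colour $c > f(e_0)$ that is either missing at $v_i$ or used there by an edge $e' = v_i w$; because every edge at $v_i$ has $f$-value at least $d(v_i)+1 > f(e_0)$, swapping colours at level $c$ is safe on edges incident to $v_i$. I then launch a secondary Vizing chain operating under the enlarged palette $\{1,\dots,c\}$ and confined to edges of $f$-value at least $c$, whose goal is to swap some palette colour off of $v_i$, producing a missing $P$-colour there so that the primary fan can resume. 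The overall augmenting object for $e_0$ is the concatenation of the primary chain with these nested secondary chains, and the strictly-local discipline ensures that every sub-chain touches only edges whose $f$-value dominates every colour it exchanges.

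The main obstacle is proving that this nested recursion terminates and produces a valid colouring. The natural approach is to set up a potential -- for instance a lexicographic combination of the largest colour currently in play on the chain and the number of stalled fan vertices remaining to be resolved -- and to show that each secondary chain either completes the extension of $e_0$ or strictly reduces this potential by trading a high colour for a lower one in the appropriate palette. One must also ensure that secondary chains cannot re-enter earlier constructed segments in a way that invalidates prior swaps, and the strictly-local discipline (each sub-chain confined to high-$f$ edges at its level) is precisely what rules this out. Combining the potential argument with the finiteness of each level yields termination, and the resulting augmentation extends the colouring to $e_0$ while respecting each edge's local palette, completing the inductive step.
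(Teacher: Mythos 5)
Your high-level plan has the right shape and correctly identifies the two real obstructions (a fan vertex whose entire palette is exhausted, and a bichromatic path that passes through an edge whose local list does not contain one of the two swapped colours). The edge-ordering by decreasing $f(e)=1+\max\{d(u),d(v)\}$ is a genuine variation on the paper, which instead starts from the empty colouring and extends an arbitrary strictly local partial colouring; your ordering buys the "safety" of the palette $P=[f(e_0)]$ for the primary chain, but it does not remove either obstruction, so it does not substantially simplify the problem.

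The genuine gap is the termination argument, which is the heart of the proof and is not actually supplied. Your proposed potential --- a lexicographic pair of "the largest colour currently in play on the chain" and "the number of stalled fan vertices remaining to be resolved" --- is not well defined (the chain is being built as you go, and each truncation or stall spawns a new subproblem whose own chain can stall again), and there is no mechanism by which it decreases: a secondary chain launched to free a colour at a stalled vertex $v_i$ must use a colour $c>f(e_0)$, its own fan vertices can stall in turn, and the natural recursion pushes the largest colour in play \emph{up}, not down. Nothing prevents this nesting from cycling or growing without bound. The paper resolves exactly this issue with a \emph{global} potential $\Phi(G,c)=\sum_{\kappa}|\{v:\kappa\in A(v)\cap[1+d(v)]\}|$, together with a specific truncation rule (cut the bichromatic path at the first edge whose recolouring would violate strict locality) and a delicate accounting (Observation~\ref{obs:SL}, Propositions~\ref{prop:fanShift} and~\ref{prop:pathEP}, Lemma~\ref{lma:mainLVT}) showing that every shift of such a truncated chain strictly decreases $\Phi$; since $\Phi\le n(\Delta+1)$, the multi-step construction terminates. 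A second, smaller gap: your claim that a secondary chain can be "confined to edges of $f$-value at least $c$" is asserted rather than arranged --- a $(\alpha,c)$-alternating path with $\alpha\in P$ will in general meet edges of intermediate $f$-value, and handling these forces exactly the truncate-and-recurse step whose termination you have not established.
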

Here each edge receives a list of colour, depending on the local structure around it, and one has to find a proper edge-colouring of the graph so that every edge receives a colour from its prescribed list. 
This sort of local generalisation is well-studied in the context of \emph{list colouring} where each element that is to be coloured, for instance the edges or the vertices of a graph, has to receive a colour from the prescribed lists. 
In the context of vertex colouring, Bonamy, Kelly, Nelson, and Postle recently studied list colourings in terms of local clique sizes~\cite{Bonamy2}, and Davies, de Joannis de Verclos, Kang, and Pirot studied the problem for triangle-free graphs~\cite{Pirot}. 

In the context of edge-colourings, local generalisations of list edge-colourings was studied by Borodin, Kostochka, and Woodall~\cite{Borodin} who showed a local generalisation of Galvin's theorem~\cite{DBLP:journals/jct/Galvin95} and by Bonamy, Delcourt, Lang, and Postle~\cite{bonamy2020edge} who, under certain degree conditions, showed a local version of Kahn's theorem~\cite{DBLP:journals/jct/Kahn96}. Work on local generalisations of edge-colourings also appeared in the work of Erd\H{o}s, Rubin, and Taylor~\cite{erdos1979choosability}. 

As hinted to earlier, the constructions and ideas from above are useful when designing efficient edge-colouring algorithms in various models of computation. We will consider two such models. 

\paragraph{LOCAL algorithms} 
In~\cite{linial1992locality} Linial introduced the LOCAL model of computation. Here one is given an input graph that is viewed as a communication network. Computation is performed at each vertex in synchronous rounds. During a round, vertices are allowed to exchange messages of unbounded size with their neighbours in the communication network as well as perform unbounded computation. 

edge-colouring has been widely studied in this model of computation. Alon, Babai and Itai~\cite{alon1986fast} and independently Luby~\cite{luby1985simple} gave a randomised $O(\log n)$ round algorithm for $\Delta + 1$ vertex colouring, which can be transformed into a $2\Delta-1$ edge-colouring algorithm by colouring the line graph. 
Goldberg, Plotkin and Shannon~\cite{goldberg1987parallel} improved the round complexity for graphs of small maximum degree by giving a randomised algorithm running in $O(\Delta^2 + \log^* n)$ rounds. 
After a long line of work Fischer, Ghaffari and Kuhn~\cite{fischer2017deterministic} provided a determinstic algorithm for $2\Delta-1$ edge-colouring a graph in $O(\poly \log n)$ rounds. 
Recently, this was improved by Balliu, Brandt, Kuhn and Olivetti~\cite{Brandt} who gave a deterministic algorithm for $2\Delta-1$ edge-colouring a graph in $\poly(\log \Delta)+O(\log^* n)$ rounds.

Much work has also been dedicated to going below $2\Delta-1$ colours (see for instance~\cite{pettie,ghaffari2018deterministic} for more extensive surveys).
Chang, He, Li, Pettie and Uitto~\cite{pettie} gave a randomized algorithm using $\Delta+ O(\sqrt{\Delta})$ colours running in $O(\poly(\Delta, \log \log n))$ rounds. 
Recently, Davies~\cite{Davies} gave a $\log^{O(1)} \log n$ round randomized algorithm for computing a $\Delta + o(\Delta)$ colouring.
Su and Vu~\cite{su2019towards} designed a randomised algorithm using only $\Delta + 2$ colours that runs in $O(\poly(\Delta, \log n))$ rounds.
Su \& Vu arrive at their $(\Delta + 2)$-edge-colouring algorithm by truncating Vizing chains at a randomly chosen spot, but instead of building a new one like Bernshteyn does~\cite{BERNSHTEYN}, they instead use a special colour to colour the now uncoloured edge. They show that with high-probability no vertex ends up with two neighbouring edges receiving this special colour. 
Due to a general derandomisation technique developed by Ghaffari and Rozho{\v{n}}~\cite{rozhovn2020polylogarithmic} this can be turned into a deterministic algorithm.
Other deterministic results include a $\lfloor \frac{3\Delta}{2} \rfloor$ edge-colouring algorithm due to Ghaffari, Kuhn, Maus and Uitto~\cite{ghaffari2018deterministic} as well as the already mentioned result of Bernshteyn~\cite{BERNSHTEYN}, who gave an algorithm using only $\Delta + 1$ colours. Both of these algorithms run in $O(\poly(\Delta, \log n))$ rounds.

\paragraph{Dynamic graph algorithms}
In the dynamic graph setting, one attempts to maintain solutions to graph problems as the graph is subjected to insertions and deletions of edges. 
The goal is that the time spent updating the solution should be significantly faster than naively re-running a static algorithm from scratch each time an edge is inserted or deleted.
For many problems one aims at achieving $O(\poly \log n)$ update time. 

From a static point-of-view Gabow, Nishizeki, Kariv, Leven, and Tereda~\cite{Gabow} showed how to compute a $(\Delta + 1)$-edge-colouring in $\tilde{O}(m \sqrt{n})$ or $\tilde{O}(m \Delta)$ time. This improved over the $O(mn)$ time algorithm that follows directly from the work of Vizing~\cite{vizing1964estimate}. 
Alon~\cite{alon2003simple} and Cole, Ost and Schirra~\cite{cole2001edge} gave a near linear-time algorithm for $\Delta$-edge-colouring bipartite graphs. Via a reduction due to Karloff and Shmoys~\cite{karloff1987efficient}, these algorithms yield a $3 \lceil \frac{\Delta}{2} \rceil$ edge-colouring algorithm for general graphs. 
When $\Delta = \Omega{(\varepsilon^{-1} \log n)}$, Duan, He and Zhang~\cite{duan} give a $O(\varepsilon^{-2}m \log^6 n)$ time algorithm for computing a$(1+\varepsilon)\Delta$-edge-colouring.

The dynamic case has recently received attention and seen some development. 
Barenboim and Maimon~\cite{barenboim2017fully} gave a dynamic algorithm with $O(\sqrt{\Delta})$ worst-case update time using $O(\Delta)$ colours. This was subsequently improved by Bhattacarya, Chakrabarty, Henzinger and Nanongkai who gave a fully dynamic algorithm for computing a $2\Delta-1$ edge-colouring with worst case $O(\log \Delta)$ update time. 
Finally, Duan, He and Zhang~\cite{duan} reduced the number of colours even further. They designed an algorithm using $(1+\varepsilon)\Delta$ colours when $\Delta = \Omega(\varepsilon^{-2}\log^2 n)$ achieving an amortised update time of $O(\varepsilon^{-4}\log^8 n)$. 
This approach changes between $O(\varepsilon^{-1} \log n)$ copies of the graph, resulting in a $O(\varepsilon^{-1} \log n)$ space overhead and a recourse which is not bounded. 
Bhattacarya, Grandoni, and Wajc~\cite{doi:10.1137/1.9781611976465.168} studied the problem of dynamically maintaining edge-colourings with low recourse. 
More specificically, they show how to maintain a $(1+\varepsilon)\Delta$-edge-colouring with high probability with only $O(\poly{\varepsilon^{-1}})$ expected recourse in dynamic graphs of maximum degree $\Delta = \Omega(\poly(\varepsilon^{-1}, \log n))$. 

\subsection{Results}
We show the following local version of Vizing's Theorem originally conjectured by Bonamy, Delcourt, Lang and Postle~\cite{bonamy2020edge}:
\begin{theorem}[Local Vizing Theorem] \label{thm:LVT}
For any graph there exists a proper edge-colouring such that every edge $uv$ receives a colour from the list $L(uv) = \{1, \dots, 1+\max \{d(u), d(v)\}\}$.
\end{theorem}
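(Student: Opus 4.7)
The plan is to prove \ref{thm:LVT} by iteratively extending a partial proper edge colouring satisfying all list constraints, resolving each newly added edge with one strictly local multi-step Vizing chain. The high-level structure mirrors Vizing's original proof, but every fan rotation and Kempe swap must keep each recoloured edge $e'$ inside its list $l(e')$, which is the main new difficulty compared to the classical argument.

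First, I would fix an uncoloured edge $e_0 = uv$ with $k := d(v) \geq d(u)$, so that $l(e_0) = \{1, \ldots, k+1\}$. A counting argument shows that at least two colours of $l(e_0)$ are missing at $v$ (which has only $k-1$ other incident coloured edges) and at least $k - d(u) + 2 \geq 2$ are missing at $u$. If some colour is missing at both endpoints, assign it to $e_0$ and proceed. Otherwise, build a Vizing fan rooted at $v$ using missing colours chosen from $l(e_0)$. The key observation that makes this feasible is that every fan edge $vu_j$ exists in $G$ and hence satisfies $\max\{d(v), d(u_j)\} \geq k$, so $l(vu_j) \supseteq \{1, \ldots, k+1\}$; rotations along the fan are therefore automatically list-legal.

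Two obstacles can still arise. The fan may reach a vertex $u_j$ with $d(u_j) > k$ which has no missing colour in $\{1, \ldots, k+1\}$, so the fan cannot be extended safely. And the concluding Kempe chain on two colours $\beta, \gamma \in l(e_0)$ may propagate through an edge $xy$ whose list $\{1, \ldots, 1+\max\{d(x), d(y)\}\}$ omits $\max\{\beta, \gamma\}$, so the swap $\beta \leftrightarrow \gamma$ would push $xy$ out of its list. Either of these situations would derail a single-step Vizing chain. This is where the strictly local multi-step construction enters: whenever failure looms, I would truncate the chain immediately before the offending edge, carry out the safe prefix of fan rotations (which is legal by the observation above), and treat the freshly uncoloured edge $e_0'$ as the root of a new chain. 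By careful choice of truncation point, the new root $e_0'$ lives in a strictly ``smaller'' local regime -- for example with strictly smaller $\max$-degree or strictly more missing colours at its endpoints within the relevant palette.

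The hard part, I expect, is a clean termination argument for this multi-step construction. Since successive chains are allowed to overlap, the probabilistic termination used by Bernshteyn~\cite{BERNSHTEYN} and Duan--He--Zhang~\cite{duan} does not directly apply; instead, I would aim for a deterministic lexicographic potential that decreases at every truncation, such as the pair (current root's $\max$-degree, negated number of missing colours at its endpoints from the root's list) under lexicographic order. Proving that this potential strictly decreases across each truncation step will close the proof, and bounding the number of chain segments generated per edge by $O(n)$ will recover the claimed $O(n^2 \Delta)$-time algorithm.
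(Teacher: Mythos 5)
Your construction is essentially the paper's (strictly local fans, truncate the bichromatic path at the first edge whose recolouring would leave its list, uncolour that edge and iterate), but your termination argument has a genuine gap, and termination is the entire difficulty here. The lexicographic potential you propose --- first component being the current root's $\max$-degree --- does not decrease. The fan first relocates the uncoloured edge from $uv$ to $uw_k$, where $w_k$ is an arbitrary neighbour of $u$ whose degree can vastly exceed both $d(u)$ and $d(v)$; the colours of the bichromatic path satisfy $\kappa_1\in[d(u)+1]$ and $\kappa_2\in[d(w_k)+1]$, so the truncation edge $xy$ is only guaranteed to satisfy $\max\{d(x),d(y)\}+1<\max\{\kappa_1,\kappa_2\}\le\max\{d(u),d(w_k)\}+1$. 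Thus the new root's $\max$-degree is bounded by $\max\{d(u),d(w_k)\}$, not by $\max\{d(u),d(v)\}$, and can strictly increase across a truncation step. A secondary issue: by insisting that all fan colours come from $l(e_0)=[k+1]$ you manufacture your first ``obstacle'' (a fan vertex $u_j$ with $d(u_j)>k$ and no missing colour in $[k+1]$); the paper avoids this entirely by letting each fan vertex $w_j$ choose its representative colour from its \emph{own} palette $A(w_j)\cap[d(w_j)+1]$, which is never empty, and the rotated edge $uw_j$ then receives a colour in $[d(w_j)+1]\subseteq l(uw_j)$, so the fan never gets stuck and never needs truncating.

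What actually closes the argument in the paper is a \emph{global} potential rather than one attached to the current root:
\[
\Phi(G,c)=\sum_{\kappa=1}^{\Delta+1}\bigl|\{v:\kappa\in A(v)\cap[1+d(v)]\}\bigr|\le n(\Delta+1).
\]
Shifting one truncated strictly local Vizing chain makes $\kappa_1$ become used at $u$ and $\kappa_2$ become used at $w_k$, and both of these colours lie in the respective local ranges, so $\Phi$ drops by $2$ from those two vertices; one checks (the paper's Propositions~\ref{prop:fanShift} and~\ref{prop:pathEP}) that the fan shift and the path interior do not increase $\Phi$, and the precise inequality $\max\{d(x),d(y)\}+1<\max\{\kappa_1,\kappa_2\}$ at the truncation edge ensures that the newly freed colour there adds back at most $1$. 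Hence $\Phi$ strictly decreases by at least $1$ per truncation, the multi-step chain terminates after $O(n\Delta)$ steps, and since augmenting never increases $\Phi$, the whole algorithm performs $O(m+n\Delta)$ shifts of cost $O(\Delta+n)$ each, giving the $O(n^2\Delta)$ bound. To repair your write-up you would need to replace your local lexicographic potential with a globally monotone quantity of this kind (or prove a non-increase property for the root's parameters that, as shown above, simply does not hold).
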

This Theorem immediately implies Vizing's Theorem since $\Delta = \max \limits_{w} d(w)$. 
Then we shift our attention from local lists to extending colourings by recolouring edges locally. We show that the lower bound due to Chang, He, Li, Pettie and Uitto~\cite{pettie} is tight up to constant and $\Delta$ factors:
\begin{theorem} \label{thm:LOCALVT}
For any graph $G$ endowed with a proper partial $(\Delta+1)$-edge-colouring $c$, and for any edge left uncoloured by $c$, there exists an augmenting subgraph containing at most $O(\Delta^7 \log n)$ edges.
\end{theorem}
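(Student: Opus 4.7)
The plan is to build the augmenting subgraph as a multi-step Vizing chain, obtained by iteratively constructing Vizing chains and truncating them whenever they grow too long. If each constituent chain contributes at most $O(\Delta^5)$ edges and the process terminates after $O(\Delta \log n)$ truncations, the total size is $O(\Delta^6 \log n)$ as claimed.

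First I would recall the standard Vizing chain construction: given a partial $(\Delta+1)$-edge colouring with an uncoloured edge $uv$, one builds a fan at $u$ of at most $\Delta$ edges followed by an alternating $\alpha\beta$-path $P$ for two appropriately chosen colours. Fix a threshold $L = \Theta(\Delta^5)$. If $|P| \leq L$ then the fan together with $P$ is the desired augmenting subgraph. Otherwise, I would truncate $P$ after $L$ edges, shift colours along the truncated prefix, and recurse from the uncoloured edge that appears at the truncation point.

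The main step, and the source of the improvement over Bernshteyn's $O(\Delta^6 \log^2 n)$ bound, is a non-probabilistic analysis of truncation. Bernshteyn selects a truncation point uniformly at random and invokes a union bound to show that the resulting chain does not overlap with previously built chains with sufficiently high probability. Instead, I would argue directly that among the $\Theta(L/\Delta)$ candidate truncation positions only $O(\poly(\Delta))$ can be bad in the sense of producing a next chain that re-enters or conflicts with earlier structure, since each previously visited vertex and each previously used colour contributes only a bounded number of forbidden positions. Provided $L$ is a sufficiently large polynomial in $\Delta$, a safe truncation point therefore always exists deterministically, removing the $\log n$ factor previously paid for probabilistic amplification.

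Termination follows from a potential argument: a suitable global quantity — such as the number of available free colours at the chain's growing endpoint, or the remaining ``distance'' the uncoloured edge has to travel before it can be absorbed — decreases multiplicatively at each truncation, so $O(\Delta \log n)$ truncations suffice before a chain of length at most $L$ is returned. The main obstacle will be the bookkeeping behind the deterministic truncation argument: one must precisely define what it means for the next chain to be non-overlapping with all earlier chains, tracking vertices, edges, \emph{and} the pairs of colours used along each alternating path, and then show that the set of bad truncation points is uniformly small across iterations. Getting this combinatorial accounting to yield the stated polynomial dependence on $\Delta$ is the technical crux of the proof.
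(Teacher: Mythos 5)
Your high-level instinct --- replace Bernshteyn's probabilistic truncation with a deterministic counting argument to save a $\log n$ factor --- is the right one, but the architecture of your argument has a genuine gap at its core: you analyse a \emph{single} sequence of truncations and assert termination via a potential that ``decreases multiplicatively at each truncation.'' No such potential is identified, and neither of your candidates works: the number of free colours at the growing endpoint does not decrease in any controlled way across truncations, and there is no well-defined ``remaining distance'' for the uncoloured edge to travel. Indeed, for a single deterministic truncation sequence there is no reason to expect termination in $O(\log n)$ steps at all; this is precisely why both Bernshteyn's argument and the paper's must reason about \emph{many} candidate continuations simultaneously. Relatedly, your claim that only $O(\poly(\Delta))$ of the $\Theta(L)$ candidate truncation points per step are ``bad'' breaks down as the construction proceeds: after $i$ steps the previously built structure has $\Omega(iL)$ vertices, each of which can invalidate $O(\Delta^4)$ potential extension points (a coloured edge lies on $O(\Delta)$ bichromatic paths, each usable by $O(\Delta)$ fans, etc.), so along a single path the forbidden set can swallow all $L$ candidates after boundedly many steps.

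The paper escapes both problems with a branching density argument rather than a potential on one chain. It defines $R_i$ as the set of endpoints reachable by non-overlapping $i$-step chains of step-length $\le l$ and $N_i\subset R_i$ as those endpoints admitting a safe (non-overlapping) extension, and proves two facts: (i) extending from every point of $N_{i-1}$ and observing that each edge of the resulting subgraph $H$ is shared by at most $O(\Delta^2)$ chains while $H$ has density at most $\Delta/2$ forces $|R_i|=\Omega\bigl(\tfrac{l}{\Delta^3}|N_{i-1}|\bigr)$; and (ii) the total number of unsafe points is $O(\Delta^4)\sum_{k<i}|R_k|$, which is at most $|R_i|/2$ once $l=\Theta(\Delta^7)$, so $|N_i|\ge|R_i|/2$. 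Hence $|N_i|$ grows geometrically, and since $|N_i|\le n$ some branch must become augmenting within $\log n$ steps; termination is pigeonhole on the vertex set, not a decreasing potential. If you want to salvage your write-up, you must replace the single-path potential argument with this (or an equivalent) accounting over the whole tree of candidate extensions; the per-step ``few bad truncation points'' claim then becomes true in the aggregate because the candidate set $|N_{i-1}|$ grows faster than the forbidden set.
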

Finally, we extend the techniques used to derive the two combinatorial results from above to devise efficient algorithms. First of all, the following corollary follows from our proof of Theorem~\ref{thm:LVT}:
\begin{corollary}\label{cor:algoLVT}
There is an algorithm that runs in $O(n^2 \Delta)$-time for computing an edge-colouring such that every edge $uv$ receives a colour from the list $L(uv) = \{1, \dots, 1+\max \{d(u), d(v)\}\}$. 
\end{corollary}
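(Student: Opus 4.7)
The plan is to implement directly the constructive proof of Theorem~\ref{thm:LVT}. Fix an arbitrary ordering $e_1, \dots, e_m$ of the edges of $G$. We maintain a proper partial edge colouring that, at every moment, respects the local lists $l(uv) = \{1, \dots, 1 + \max\{d(u), d(v)\}\}$. When processing $e_i = xy$, we simulate the strictly local multi-step Vizing chain construction used to establish Theorem~\ref{thm:LVT}: iteratively build Vizing fans and walk along the associated colour-alternating paths until the termination condition of the construction triggers, then shift colours along the resulting augmenting subgraph so that $xy$ can be coloured from $l(xy)$.

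The running-time analysis amounts to bounding the work performed when extending the colouring by one edge. At every vertex $v$ we explicitly store the set $M(v)$ of colours currently missing at $v$; since $|M(v)| \le d(v)+1 \le n$, the collection $\{M(v)\}_v$ occupies $O(n\Delta)$ space and can be maintained in $O(1)$ time per recolouring of an incident edge. A single step of the multi-step chain consists of growing a fan at some vertex and then tracing one alternating path: the fan inspects at most $\Delta+1 \le n$ incident edges, and the alternating path has length at most $n$, so one step costs $O(n)$ time. The proof of Theorem~\ref{thm:LVT} furthermore guarantees that the strictly local chain used to extend any single uncoloured edge has total size $O(n)$ and is produced after a bounded number of such steps, so the overall cost of processing $e_i$ is $O(n)$. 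Since $m = O(n\Delta)$, summing over all edges yields the claimed $O(n^2\Delta)$ total running time.

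The main obstacle I expect is ensuring that the strictly local truncation criterion from the proof of Theorem~\ref{thm:LVT} can actually be tested in time proportional to the portion of the chain already built, rather than requiring a global scan of the graph. The way around this is that the criterion only references the missing-colour sets $M(\cdot)$ at the endpoints of the chain segment currently being extended, so all needed tests can piggyback on the fan-and-path traversal itself and fit inside the $O(n)$ per-step budget. With these data structures in place, the implementation of the constructive proof produces the stated $O(n^2\Delta)$-time algorithm, matching (up to the form of the local constraint) the bound one obtains for ordinary Vizing's theorem.
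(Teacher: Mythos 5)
Your implementation outline and data structures are fine, but the running-time analysis contains a genuine gap: you assert that the proof of Theorem~\ref{thm:LVT} guarantees that the strictly local multi-step chain extending a single uncoloured edge has total size $O(n)$ and consists of a bounded number of steps, so that processing one edge costs $O(n)$. The proof guarantees no such thing. What it actually shows is that each shift of a \emph{truncated} strictly local chain decreases the potential $\Phi(G,c)=\sum_{\kappa}|\{v:\kappa\in A(v)\cap[1+d(v)]\}|\le n(\Delta+1)$ by at least one, so a single edge may require up to $O(n\Delta)$ truncation-and-extend steps, each building a fan and tracing a path of length up to $O(\Delta+n)$; the multi-step chain for one edge can therefore have size as large as $O(n^2\Delta)$, and your claimed $O(n)$ per-edge worst-case bound does not hold.

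The correct accounting is amortised over the whole run rather than per edge: the potential never increases when an edge gets coloured (colouring $uv$ only removes colours from availability lists), and it starts at most $n(\Delta+1)$, so across the entire algorithm the total number of truncated-chain shifts is $O(n\Delta)$, to which one adds the $m$ final augmenting shifts. Each shift costs $O(\Delta+n)$, giving $O\paren{(m+n\Delta)(\Delta+n)}=O(n^2\Delta)$ in total. Your final bound is right, but only because the global potential argument happens to land on the same expression; the intermediate claim of $O(n)$ work per edge is unsupported and, given the size bound the potential argument actually yields, cannot be extracted from the proof of Theorem~\ref{thm:LVT} as written.
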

We improve the round complexity of the best LOCAL algorithm for computing a $(\Delta + 1)$-edge-colouring from\footnote{The exponent is not analysed in the paper~\cite{BERNSHTEYN}. From our understanding, the best we could extract is $\log^{11}(n)$.} $\tilde{O}(\poly(\Delta,\log n))$ to $\tilde{O}(\poly(\Delta)\log^6 n)$.
\begin{theorem} \label{thm:algoLOCALVT}
There is a deterministic LOCAL algorithm that computes a $\Delta+1$ edge-colouring in $\tilde{O}(\poly(\Delta)\log^6 n)$ rounds\footnote{Here $\tilde{O}(x)$ suppress $O(\poly \log x)$ factors.}.
\end{theorem}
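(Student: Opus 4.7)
The plan is to combine the small augmenting subgraphs guaranteed by Theorem~\ref{thm:LOCALVT} with the deterministic network-decomposition machinery of Rozhon and Ghaffari~\cite{rozhovn2020polylogarithmic} to parallelize the extension of the colouring. We would build up a $(\Delta+1)$-edge colouring iteratively: maintain a proper partial colouring and, in each epoch, simultaneously augment a large set of currently uncoloured edges. Uncoloured edges whose augmenting subgraphs do not intersect can be recoloured independently, so the task reduces to finding a large independent set of augmenting subgraphs in each epoch.

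The first step is to argue that the proof of Theorem~\ref{thm:LOCALVT} can be executed locally. Each augmenting subgraph is a multi-step Vizing chain, and all information needed to produce it (fans, alternating paths, the truncation decisions along the multi-step construction) is determined by the current colouring restricted to a ball of radius $O(\poly(\Delta)\log n)$ around the uncoloured edge $uv$. Hence every vertex owning an uncoloured edge can compute a candidate augmenting subgraph $F_{uv}$ of size $O(\Delta^6 \log n)$ in $O(\poly(\Delta)\log n)$ rounds.

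Next I would set up a conflict graph $H$ whose nodes are the currently uncoloured edges and whose edges join pairs $uv,u'v'$ such that $F_{uv} \cap F_{u'v'} \neq \emptyset$. Because each $F_{uv}$ contains $O(\Delta^6 \log n)$ edges and each edge of $G$ belongs to at most $O(\Delta)$ candidate subgraphs, the maximum degree of $H$ is $\tilde{O}(\poly(\Delta)\log n)$. Adjacency in $H$ is determined inside the $O(\poly(\Delta)\log n)$-hop neighbourhoods we have already explored, so $H$ can be simulated in LOCAL with a $O(\poly(\Delta)\log n)$-factor slowdown. Running the Rozhon--Ghaffari deterministic network decomposition on $H$ yields an MIS of augmenting subgraphs in $\poly(\log n)$ rounds of the simulated graph, which we can then apply in parallel to extend the partial colouring. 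Iterating $\tilde{O}(\poly(\Delta)\log n)$ times ensures all edges are eventually coloured, since each epoch colours at least a $1/\tilde{O}(\poly(\Delta)\log n)$ fraction of the remaining uncoloured edges.

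The main obstacle is the delicate bookkeeping of polylogarithmic factors. One must check that the simulation radius for computing $F_{uv}$, the slowdown for simulating $H$, the $\poly(\log n)$ network-decomposition cost, and the number of epochs compose to $\log^6 n$ rather than the $\log^{11} n$ that a naive extraction from~\cite{BERNSHTEYN} would give. The key win comes from Theorem~\ref{thm:LOCALVT}: replacing the $O(\Delta^6 \log^2 n)$ augmenting subgraphs of~\cite{BERNSHTEYN} with our $O(\Delta^6 \log n)$ ones shrinks both the simulation radius and the maximum degree of $H$ by a $\log n$ factor, and the improvement propagates multiplicatively through the analysis. A secondary subtlety is ensuring that the \emph{constructive} procedure behind Theorem~\ref{thm:LOCALVT} genuinely exposes only a ball of radius $O(\poly(\Delta) \log n)$ — that is, that the truncation tests used to form the multi-step chain can be decided locally without inspecting deeper layers of the graph — which must be verified from the explicit construction.
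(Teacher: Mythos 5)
Your overall framework (compute a small augmenting subgraph per uncoloured edge locally, select a large pairwise-disjoint subfamily, augment in parallel, iterate) matches the paper's, and the locality claim for constructing the chains is fine. But there is a genuine gap at the step you pass over most quickly: the assertion that the conflict graph $H$ has maximum degree $\tilde{O}(\poly(\Delta)\log n)$ because ``each edge of $G$ belongs to at most $O(\Delta)$ candidate subgraphs.'' That bound holds (up to the right power of $\Delta$) for a \emph{single} bichromatic path or a \emph{one}-step Vizing chain (cf.\ Lemma~\ref{lma:packingLemma}), but the candidate subgraphs here are $i$-step chains with $i$ up to $\Theta(\log n)$. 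The number of $i$-step non-overlapping chains that can reach a fixed edge grows like $O(\Delta^3)^i$, i.e.\ $n^{O(\log\Delta)}$, and if each uncoloured edge independently fixes one arbitrary candidate chain, nothing prevents up to $|U|$ of these candidates from funnelling through a common edge. So the conflict graph can have enormous degree, the MIS returned by the Rozhon--Ghaffari machinery can be a vanishing fraction of $|U|$, and the per-epoch progress guarantee $1/\tilde{O}(\poly(\Delta)\log n)$ collapses. This is exactly the difficulty the paper flags as ``the most difficult part left.''

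The paper resolves it differently and the difference is essential. It does not fix one candidate per edge; instead it puts \emph{all} short augmenting non-overlapping chains (at most $l^{\log n}$ per uncoloured edge) as hyperedges of a hypergraph and invokes Harris's deterministic $O(r)$-approximate maximum matching (Theorem~\ref{thm:harris}). The large-matching guarantee is then supplied by Theorem~\ref{thm:paraAug}, which is proved greedily via the \emph{family-avoiding} Vizing chains of Lemma~\ref{lma:mainApp3}: having chosen $k$ disjoint chains spanning $k\cdot s$ vertices, one builds for each remaining uncoloured edge a family $\mathcal{F}(e)$ of extension points to avoid, and a counting argument shows that only $O(\Delta^4 k s)$ edges end up with unavoidable (not $4(\Delta+1)^4$-bounded) families, so a fresh disjoint chain exists whenever $k < |U|/O(\poly(\Delta)\log n)$. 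Your proposal contains no substitute for this existence argument, and without it the claimed round complexity does not follow. To repair your route you would need to show that the candidates can be \emph{coordinated} so that few overlap any edge, which is essentially the content of Theorem~\ref{thm:paraAug} and cannot be obtained from a naive degree bound on a one-candidate-per-edge conflict graph.
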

Finally, we give a dynamic algorithm which adapts to the current value of $\Delta$ without the need to change between multiple copies of the graph.
\begin{theorem}\label{thm:dynVT}
Let $G$ be a dynamic graph subject to insertions and deletions with current maximum degree $\Delta$. There exists a fully-dynamic and $\Delta$-adaptive algorithm maintaining a proper $(1+\varepsilon)\Delta$-edge-colouring with $O(\varepsilon^{-6} \log ^6 \Delta \log^9 n)$ worst-case update time with high probability.
\end{theorem}

\subsection{High-level overview}
In this section, we give an informal description of our techniques and approaches, before we give the precise definitions and technical proofs later. 
As mentioned in the introduction, the starting point is the idea of 'gluing' together several Vizing chains to form subgraphs with interesting properties. In the literature there are several ways of constructing Vizing chains, but we will be working with the following:
a Vizing chain  consists of two parts: a) a \emph{fan} $F$ of edges and b) a bichromatic path $P$ consisting only of edges coloured with 2 colours, let us call them $\kappa_1$ and $\kappa_2$. 
The fan $F$ consists of a center vertex $u$ together with edges $uw_1, \dots, uw_k$ such that the colour of $uw_{i+1}$ is available at $w_i$, meaning that no edges incident to $w_i$ has the colour $c(uw_{i+1})$.
The bichromatic path $P$ has to then begin at $w_k$ and consist only of edges coloured $\kappa_1$ and $\kappa_2$, where $\kappa_1$ is available at $u$ and $\kappa_2$ is available at $w_k$. 
Given a Vizing chain, we can \emph{shift} it by recolouring edges in $F$ and subsequently in $P$ as shown in Figure~\ref{fig:shiftVC} to the left. This will transform one proper partial edge-colouring (the \emph{pre-shift} colouring) into another proper partial edge-colouring (the \emph{post-shift} colouring).
\begin{figure}%
    \centering
    \subfloat{{\includegraphics[width=4cm]{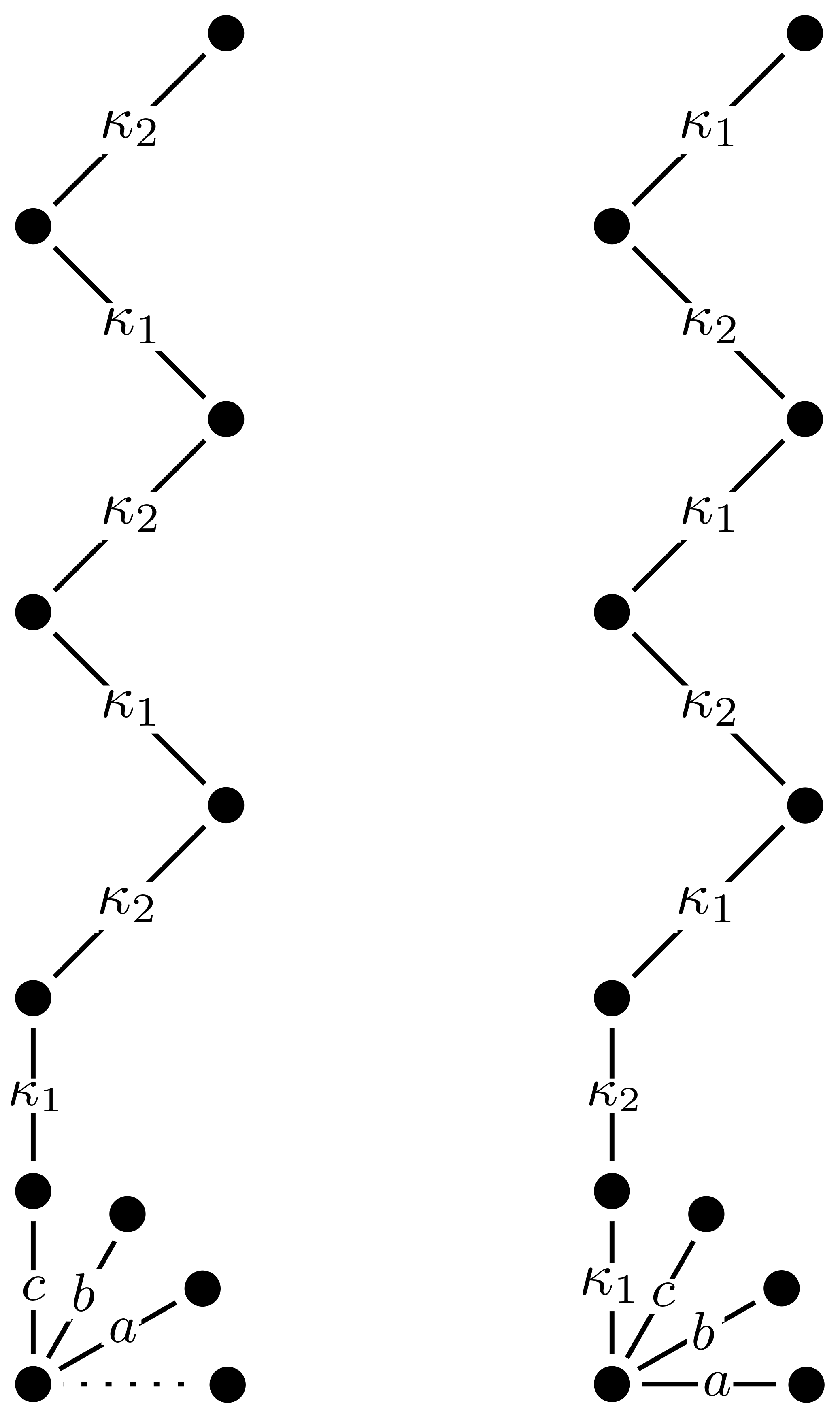} }}%
    \hspace{30mm}%
    \subfloat{{\includegraphics[width=4cm]{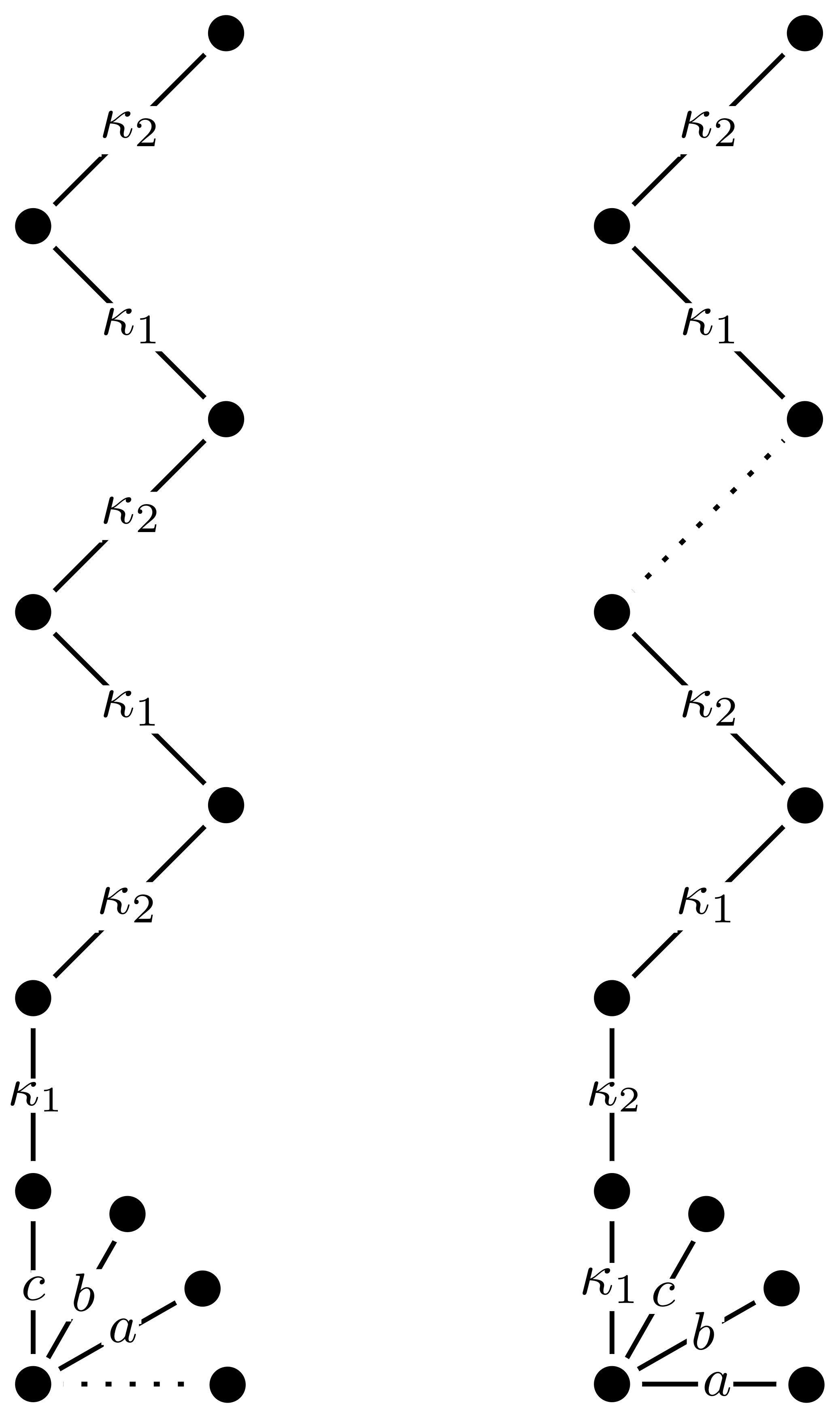} }}%
    \caption{An augmenting Vizing chain is shifted on the left, and a truncated Vizing chain is shifted on the right.}
    \label{fig:shiftVC}%
\end{figure}
In this example, we see that the shift allows us to extend the colouring to colour one extra edge. The idea is that if $P$ is not too long, we can shift the Vizing chain to extend our colouring, and if $P$ is too long, we can \emph{truncate} it by uncolouring an edge along $P$ and only shift the first part of $P$ as shown in Figure~\ref{fig:shiftVC} to the right. 

Then we may build a new Vizing chain on top of the new uncoloured edge in the hope that this chain has a shorter length. We may iterate this to construct \emph{multi-step Vizing chains} (see Figure~\ref{fig:multistep}). This approach is originally due to Bernshteyn~\cite{BERNSHTEYN}, but where Bernshteyn only uses Vizing chains that cannot overlap in edges, we will relax this requirement and allow the concatenation of all types of Vizing chains. 
\begin{figure}%
    \centering
    \subfloat{{\includegraphics[width=14cm]{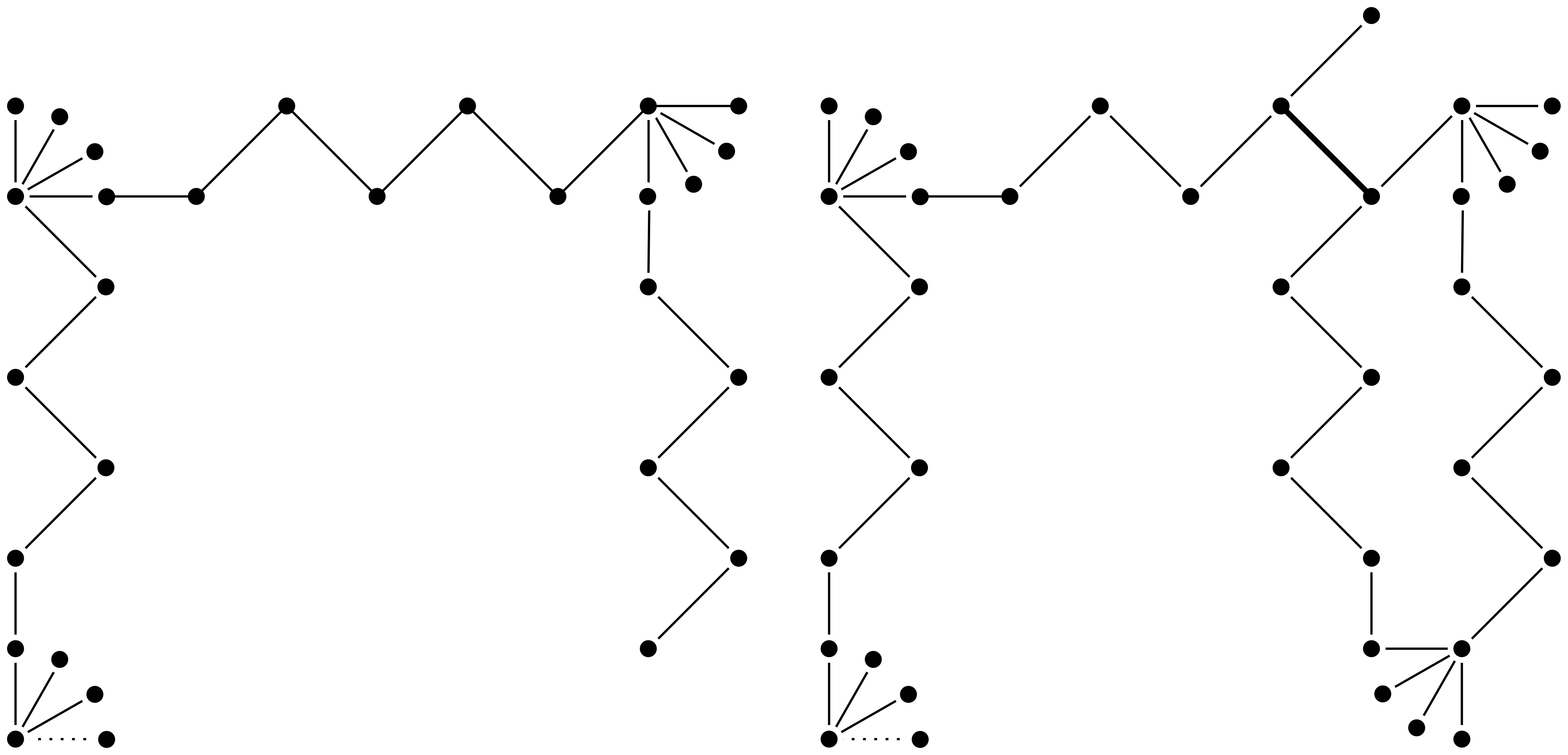} }}%
    \caption{On the left, we see a non-overlapping multi-step Vizing chain. On the right, we see a multi-step Vizing chain overlapping in a single edge.}
    \label{fig:multistep}%
\end{figure}
First we turn our attention to what we will call \emph{strictly local} Vizing chains. In these Vizing chains, we require that the post-shift colouring is strictly local, meaning that it satisfies the condition of the Local Vizing Theorem for every coloured edge. This means that a coloured edge $uv$ has to receive a colour from $[\max\{d(u),d(v)\}+1] = \{1, 2, \dots, \max\{d(u),d(v)\}+1\}$. 
The idea is that if we can show how to construct augmentable strictly local Vizing chains, then we can use them to show Theorem~\ref{thm:LVT} as follows: we begin with an uncoloured graph. This colouring is trivially strictly local since it does not assign a colour to any edges. By continually constructing and shifting augmenting strictly local Vizing chains, we iteratively build a proper non-partial strictly local colouring.

Hence, we shift our attention to constructing such an augmenting subgraph. Given a strictly local pre-colouring, we construct a strictly local Vizing chain by, first of all, demanding that every vertex $v$ picks its available colour in $[1+ d(v)]$, when we are constructing the fan and the bichromatic path. 
This means that we enforce that $c(uw_{i+1}) \in [d(w_i)+1]$ and that $\kappa_1 \in [d(u)+1]$ and $\kappa_2 \in [d(w_{k})+1]$, when we construct our fan $uw_1, \dots, uw_k$ and our $(\kappa_1,\kappa_2)$-bichromatic path at $w_k$.
Note that this is always possible, as any vertex $v$ is incident to at most $d(v)$ colours. 
Second of all, we will truncate the Vizing chain at the first edge we meet that would violate strict locality were we to shift the Vizing chain. 
The key insight is that shifting such a Vizing chain lowers the total number of times some colour $\kappa$ is available at a vertex $v$ where $\kappa \leq 1+d(v)$.
This allows us to define a potential that drops at the shift of such a strictly local Vizing chain. 
Hence, we may glue together enough of these Vizing chains until we reach a point where the path contains no edges that would invalidate our colouring after being shifted. This will then form our desired augmenting subgraph. 

After this, we turn our attention to what we call \emph{non-overlapping} Vizing chains. These are multi-step Vizing chains, where the $1$-step Vizing chains that make up the multi-step Vizing chains only are allowed to overlap in the edges, where they are glued together. 
This set of chains contains the Vizing chains constructed in~\cite{BERNSHTEYN, duan, grebik2020measurable}.
The key idea behind enforcing that the Vizing chains are non-overlapping is that it allows one to reason about how many Vizing chains can reach a certain point or a certain edge in the graph. 
The main observation is that if we consider a single point $v$, then $v$ can be reached by at most $O(\Delta^2)$ bichromatic paths, and hence by at most $O(\Delta^3)$ 1-step Vizing chains. Indeed, a point lies in at most $O(\Delta^2)$ different bichromatic paths, since we have $\Delta+1$ choices for each of the two colours classes that make up the path. Since each bichromatic path can only be part of 1-step Vizing chains neighbouring its endpoints, we have at most $O(\Delta^3)$ such 1-step Vizing chains.
This idea can be applied iteratively to show that at most $O(\Delta^3)^{i}$ $i$-step non-overlapping Vizing chains can reach the point $v$. 
Bernshteyn~\cite{BERNSHTEYN}, who was inspired by Greb\'\i k and Pikhurko~\cite{grebik2020measurable}, used this to control the error rate of a probabilistic construction. 
Bernshteyn argued that if one constructs a multi-step Vizing chain by picking the truncation edges uniformly at random, then the construction has to terminate fast. 
Indeed, one is increasingly unlikely to continually pick right extension points leading one to a specific vertex $w$ at the $T$'th step. 
In particular, after enough steps, the probability of reaching any point is vanishing.
This shows the existence of a relatively short augmenting Vizing chain. 
This leaves only the question of how to ensure that the Vizing chains are non-overlapping. 
Here it is difficult to control the probability of a certain truncation edge leading to an overlapping Vizing chain. 
To overcome this challenge, Bernshteyn increases the step length to also depend on the number of steps taken. Since one has to take $\log n$ steps, it is inevitable that this approach leads to Vizing chain of length $\Omega(\log^2 n)$

We proceed completely differently compared to Bernshteyn. Instead of arguing probabilistically, we will take a density-based approach. The idea is to first construct a Vizing chain as is usually done when proving Vizing's theorem.
We then let $R_1$ be the set containing the first $\ell$ vertices of the bichromatic path of the Vizing chain. We say that these vertices are \emph{reachable} via a $1$-step Vizing chain, since we can reach them by truncating the Vizing chain at an edge incident to them, and shifting the uncoloured edge to be incident to such a vertex. 
If the bichromatic path of this chain is too long, say it has length $> \ell$, we consider the act of truncating this Vizing chain at any of the first $\ell$ edges and \emph{extending} it to a $2$-step Vizing chain through one of the \emph{points} in $R_1$ using the construction of Greb\'\i k and Pikhurko~\cite{grebik2020measurable}.
This allows us to get a second Vizing chain build on the truncation edge such that 1) the fan from this Vizing chain does not overlap with the bichromatic path of the first Vizing chain, and 2) the new Vizing path $P_2$ consists of the same colours as $P_1$ or of two completely different colours. 
We then examine all of these bichromatic paths. 
Some of them may overlap the first Vizing chain, and this renders the respective Vizing chains, and the points that we build them on, unfit to be used as an extension point. We let $N_1$ contain all of the points in $R_1$ that are not unfit for extension. Hence, we can safely extend our multi-step Vizing chain through these extension points.

Therefore, we can consider the set of points reachable via $1$-step Vizing chains of the points in $N_1$. These are exactly the points that we can reach via a $2$-step non-overlapping Vizing chain from the first point. 
If all of these Vizing chains have length $>\ell$, we have not yet found a short augmenting non-overlapping Vizing chain. Therefore, we can let $R_2$ be the set of points that we can reach through non-overlapping $2$-step Vizing chains, and similarly $N_2 \subset R_2$ be the set of points which we can safely extend our Vizing chain through. Similarly, if we fail to find a short, augmenting $i$-step Vizing chain, we can recursively define $R_i$ to be the set of points we can reach through $i$-step non-overlapping Vizing chains, and $N_i \subset R_i$ the set of points through which we can safely extend our Vizing chains. 

The idea is then to show that both $R_i$ and $N_i$ will grow exponentially in $\Omega(\frac{\ell}{\Delta^3})$ for $\ell \in \poly(\Delta)$ (see Figure~\ref{fig:neighbourhoods} for an illustration). 
\begin{figure}%
    \centering
    \subfloat{{\includegraphics[width=6cm]{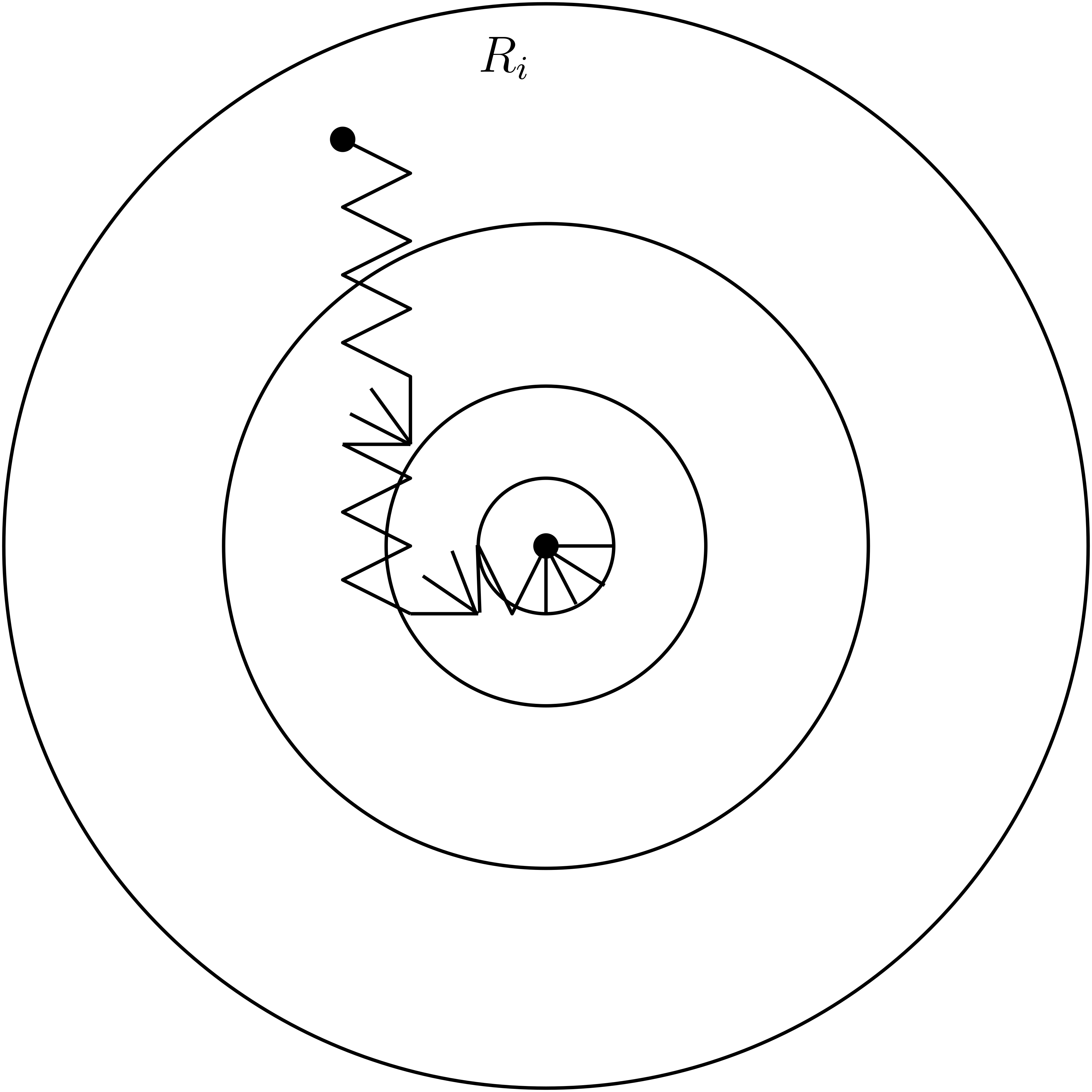} }}%
    \caption{The number of reachable vertices through non-overlapping Vizing chains of length $i$ form $R_i$. We will later show that $R_i$ grows exponentially as a function of $i$.}
    \label{fig:neighbourhoods}%
\end{figure} 
This implies that $i \leq \log n$, and so we can conclude that there has to exist some augmenting Vizing chain that takes few steps and has a very short step-length. The idea is to show that if there is no short augmenting and non-overlapping $j$-step Vizing chain for any $j\leq i$, then $|R_j| = \Omega(\frac{\ell}{\Delta^3} |N_{j-1}|)$ and $|N_{j}| \geq \frac{|R_{j}|}{2}$ for all $j \leq i$. Combining these things inductively, then shows the exponential growth. 

To this end, we consider the act of extending exactly one $(i-1)$-step Vizing chain through a point in $N_{i-1}$. We may have multiple options for which $(i-1)$-step Vizing chain to extend, but we just pick one arbitrarily. 
Consider the subgraph $H$ spanned by the edges of the bichromatic paths of these Vizing chains. The key observations are that $V(H) \subset R_{i} \cup N_{i-1}$ and that $\ell \cdot{} |N_{i-1}| = O(\Delta^2|E(H)|)$. Indeed, each vertex in $N_{i-1}$ contributes $\ell$ edges to $H$, and in this way we only count each edge of $H$ $O(\Delta^2)$ times. This is because a coloured edge is part of at most $O(\Delta)$ bichromatic paths -- one for each other choice of colour -- and each bichromatic path can be part of at most $O(\Delta)$ Vizing chains, namely Vizing chains situated at neighbours of the endpoints of the bichromatic paths. Since the density of $H$ satisfies $\rho(H) = \frac{|E(H)|}{|V(H)|} \leq \frac{\Delta}{2}$, we find that 
\[
|R_i| = \Omega(\frac{\ell}{\Delta^3} |N_{i-1}|)
\]
This leaves only the task of counting how many points in $R_i$ that are unfit for extension. 
To do this, we recall that at most $O(\Delta^3)$ Vizing chains can go through a point, and so at most 
\[
O\paren{\Delta^4 \cdot{}\paren{\sum \limits_{k=1}^{i-1} |R_{k}| + |N_{i-1}|}}
\]
points in $R_{i}$ can extend to overlapping Vizing chains. This follows from the fact that for a Vizing chain to be overlapping, it certainly has to go through either its own fan situated at a point in $N_{i-1}$ or through a point in the one-hop neighbourhood of $\sum \limits_{k=1}^{i-1} |R_{k}|$. 
By choosing $\ell$ properly, we find that $|N_{i}| \geq \frac{|R_{i}|}{2}$.

\paragraph{Distributed algorithm} In order to turn this into a distributed algorithm, we first note that if augmenting subgraphs are vertex disjoint, then we may augment them in parallel, since changing the colours within one subgraph does not change the colour of any edges incident to vertices in the other subgraphs. 

The most difficult part left is to show that there actually exist many such small vertex disjoint subgraphs that we can augment in parallel.
This is because of the following reduction due to Bernshteyn~\cite{BERNSHTEYN}. One can define a hypergraph on $V(G)$ by including a hyperedge for each small augmenting subgraph containing exactly the vertices of the small augmenting subgraph. Given that we do not consider too many augmenting subgraphs and that there exists a large vertex disjoint subset of the subgraphs, one can find such a large vertex disjoint subset by computing an (approximation) of the maximum matching in the hypergraph using for instance the algortihm due to Harris~\cite{harris2019distributed}. 

Hence, we turn our attention to showing that if we include all augmenting Vizing chain with both a small number of steps and a small step-length, then we will have a large set of vertex disjoint augmenting subgraphs. 
The main intuition that we want to formalise is that even if we allow some adversary to remove $O(\Delta^4)^{i+1}$ from $N_i$ at each step above, the construction from above still goes through. 
This motivates the definition of what we call \emph{family-avoiding} avoiding Vizing chains. 
Given a family of sets $\mathcal{F} = \{\mathbb{F}_{j}\}_{j = 1}^{\infty}$, we say that a multi-step Vizing chain avoids $\mathcal{F}$ if the $i^{\textrm{th}}$ Vizing chain is not extended through a point belonging to $\{\mathbb{F}_j\}_{j = 1}^{i-1}$. 
We also say that $\mathcal{F}$ is $k$-bounded if for all $j$ we have $|\mathbb{F}_j| \leq k^j$. 
By above, our previous construction can actually be made to avoid any $O(\Delta^4)$-bounded family. 

This concept allows us to build a large set of vertex disjoint small augmenting subgraphs.
First we pick an arbitrary such vertex disjoint subgraph. Having picked the first $j$ vertex disjoint subgraphs, we pick the $(j+1)^{\textrm{th}}$  as follows: for any uncoloured edge $e$, we build a family to avoid, when we use the previous construction to build small augmenting subgraphs. 
In particular, we add a vertex to $\mathbb{F}_{j-1}(e)$, if a non-overlapping $j$-step Vizing chain beginning at $e$ can reach any of the vertex disjoint subgraphs that we already picked. 
If this is the case, we add the vertex that the $j^{\textrm{th}}$  Vizing chain was extended through to $\mathbb{F}_{j-1}(e)$. 
Next we note that if the total size of our subgraphs is $T$, then, as we saw earlier, we put at most $T\cdot{}O(\Delta^4)^{i+1}$ points into some $\mathbb{F}_i$ for all $i$. Indeed, at most $T\cdot{}O(\Delta^4)^{i}$ many $i$-step Vizing chains can reach $T$. 
In the worst case, the points are added in a way that maximizes the number of families $\mathcal{F}(e)$ that become \emph{unavoidable}, meaning that they are not sufficiently bounded for our construction to be able to avoid them. 
However, since the number of points we can avoid grows faster than the number of points we have to avoid, the number of edges for which $\mathbb{F}_{j}(e)$ is too large falls exponentially in $j$, and so in total at most $O(\Delta^4 \cdot{} \Delta \cdot{} T)$ edges have unavoidable families. 
We pick the $(j+1)^{\textrm{th}}$  subgraph as a small augmenting subgraph of an edge that avoids the family constructed for it above. This means that this subgraph has to be vertex disjoint form the subgraphs we already picked, since we are certain to never extend through a point that can reach any of these subgraphs in $1$-step.

\paragraph{Dynamic algorithm} 
Let us first recall the approach of Duan, He and Zhang~\cite{duan}. It has 3 steps: first they make an amortised reduction to the easier non-adaptive version of the problem, where we may assume $\Delta$ is fixed. 
They do this by using multiple copies of the graph -- each with a different upper bound on the maximum degree. The second step is to reduce the problem to one where $\Delta = O(\frac{\log n}{\varepsilon})$. 
This is done by sampling a palette $S \subset [(1+\varepsilon)\Delta]$ such that every vertex in $G$ has an available colour in $S$. Note that it is crucial to have $(1+\varepsilon)\Delta$ colours for this step be efficient. 
The third and final step is to show how to insert an uncoloured edge into such a low-degree graph. 
To do this, Duan, He and Zhang~\cite{duan} show that one can construct non-overlapping multi-step Vizing chains by using a new, disjoint palette for each step, and that short Vizing chains of this type have a good probability of being augmenting. 
Since they need to sample disjoint palettes for each step, they require that $\Delta = \Omega(\frac{\log^2 n}{\varepsilon^2})$. 

Our approach will circumvent the first step. Namely, we will maintain a colouring where we can control the number of edges coloured with a specific colour. In particular, for each colour $\kappa$ we will require that the number of edges coloured $\kappa$ is upper bounded by the number of vertices $v$ where $\kappa \in [(1+\varepsilon)d(v)]$. In fact, we will even maintain a slightly stricter invariant. 
To accomplish this we define a notion of $\varepsilon$-local Vizing chains, where we require all available colours of a vertex $v$ to be in $[(1+\varepsilon)d(v)]$. 
We show that shifting such Vizing chains does not violate the invariant from above, and so this allows us to reduce deletions to insertion via a very simple scheme that recolours $O(1)$ edges in order to accommodate a deletion. 

To realise this approach, we need to sample $\varepsilon$-local palettes. These are palettes $S$ where every vertex $v$ has an available colour in $S \cap [(1+\varepsilon)d(v)]$. 
We show that one can sample such a palette by combining several sub-palettes sampled from intervals of length powers of $2$. 
Finally, we use the insights from above to slightly alter an algorithm of Bernshteyn~\cite{BERNSHTEYN} so that it constructs augmenting, $\varepsilon$-local and non-overlapping Vizing chains. This allows us to get rid of the assumption that $\Delta = \Omega(\frac{\log^2n}{\varepsilon^2})$.

\subsection{Outline of the paper}
In Section~\ref{sec:prelim}, we recall notation and preliminaries. We then proceed to discuss different versions of multi-step Vizing chains in Section~\ref{sec:VC}
In Section~\ref{sec:app1}, we prove a local version of Vizing's Theorem i.e.\ Theorem~\ref{thm:LVT}. In Section~\ref{sec:app2}, we show our new upper bound on the size of augmenting subgraphs as described in Theorem~\ref{thm:LOCALVT}. We then, in Section~\ref{sec:app3}, use these insights to give a faster LOCAL algorithm for $(\Delta + 1)$-edge-colouring i.e.\ we show Theorem~\ref{thm:algoLOCALVT}. Finally, we describe our dynamic algorithm and show Theorem~\ref{thm:dynVT} in Section~\ref{sec:dynCol}. 

\section{Preliminaries \& Notation} \label{sec:prelim}
For an integer $t$, we let $[t] = \{1, 2, \dots, t \}$ denote the numbers in $\mathbb{N}$ that are greater than $0$ but at most $t$.
We let $G = (V,E)$ be a graph on $n$ vertices and $m$ edges. A subgraph $G' \subset G$ of $G$ is a graph such that $V(G') \subset V(G)$ and $E(G') \subset E(G)$. The \emph{$t$-hop neighbourhood} of $G'$, $N^{t}(G')$, are all of the vertices in $G$ of distance at most $t$ (in $G$) to a vertex in $G'$. 
A \emph{proper (partial) $k$-edge-colouring} $c$ of $G$ is a function $c:E(G) \mapsto [k] \cup \{\neg\}$ such that two coloured edges $e,e'$, sharing an endpoint, does not receive the same colour. Here an edge $e$ such that $c(e) = \neg$ is said to be uncoloured. Two uncoloured edges may share an endpoint in a proper partial colouring.
If no edges are uncoloured, we say that the colouring is a proper $k$-edge-colouring. Given a \emph{proper (partial) $k$-edge-colouring} $c$ and a set of colours $S \subset [k]$, we let $G[S]$ be the graph induced by the set of edges coloured with a colour from $S$. Unless otherwise stated, we assume from now on that any edge-colouring is a $(\Delta + 1)$ (partial) edge-colouring. 

A hypergraph $H$ is a graph where edges may be any set from the power set of $V(H)$, $E(H) \subset \mathcal{P}(V(H))$. The \emph{rank} of a hypergraph is the cardinality of the largest edge. The degree of a vertex in $H$ is the number of edges that contain $v$. A \emph{matching} in $H$ is as set of pairwise disjoint hyperedges. A maximum matching is a matching of maximum size. 

Given a proper (partial) edge-colouring of a graph $G$, we define the set $A(v)$ of \emph{available} colours at a vertex $v$ to consist of precisely the colours that no edge incident to $v$ has received.

\subsection{Chains and shifts}
In order to extend a partial edge-colouring, we might have to move between different edge-colourings obtained by re-colouring a subset of the edges. 
In order to describe this process formally, we briefly recall the chain and shift terminology as it was used by for example Bernshteyn~\cite{BERNSHTEYN} (see~\cite{BERNSHTEYN} for a more in-depth treatment). 

Let $e_1, e_2 \in E(G)$ be two adjacent edges in a graph $G$, and let $c$ be a proper partial colouring of $G$. Then we can define a colouring $\operatorname{Shift}(c,e_1,e_2)$ by setting: 
\[
\operatorname{Shift}(c,e_1,e_2)(e) = \begin{cases}
     c(e_2) &\quad\text{if } e = e_1 \\
     \neg &\quad\text{if } e = e_2 \\
    c(e) &\quad\text{if } e \notin \{e_1, e_2\} \\
     \end{cases}
\]
We say that such a pair of adjacent edges $e_1, e_2$ are $c$-\emph{shiftable} if $c(e_1) = \neg$ and $c(e_2) \neq \neg$ and the colouring $\operatorname{Shift}(c,e_1,e_2)$ defined above is a proper partial colouring. If the colouring $c$ is clear from the context, we will sometimes leave out this argument. 

A \emph{chain} $C$ of size $k$ is then a set of edges $C = (e_1, \dots, e_k)$ such that $e_{i}$ and $e_{i+1}$ are adjacent for all $i$ and $c(e_i) = \neg$ iff $i = 1$. 
For $0\leq j \leq k-1$, we can $j$-shift such a chain by performing $\operatorname{Shift}_{j}(c,C)$ defined as:
\begin{align*}
    \operatorname{Shift}_{0}(c,C) &= c \\
    \operatorname{Shift}_{i}(c,C) &= \operatorname{Shift}(\operatorname{Shift}_{i-1}(c,C),e_{i},e_{i+1})
\end{align*}
We say that $C$ is $c$-shiftable if every pair of edges $e_{i}, e_{i+1}$ is $\operatorname{Shift}_{i-1}$-shiftable. 
It is straightforward to check that $j$-shifting such a chain $C$ yields a proper colouring, where the unique uncoloured edge in $C$ is the edge $e_{j+1}$. 
We say that the chain ends at the edge $e_k$ and at the vertex $v$ that is shared by $e_k$ and $e_{k-1}$. 
Note that in the context of simple graphs, this vertex is well defined. 
We will refer to the act of $k$-shifting a chain of size $k$ as simply \emph{shifting} the chain. We let $c$ be the \emph{pre-shift} colouring of $C$ and $\operatorname{Shift}(c,C)$ the \emph{post-shift} colouring of $C$. 
 
The goal is to find a chain $C$ such that shifting $C$ yields an uncoloured edge $uv$ such that $A(u) \cap A(v) \neq \emptyset$ so that we may colour $uv$ to extend our partial colouring.
If this is the case, we say $C$ is an \emph{augmenting chain}, otherwise we refer to it is a \emph{truncated chain}.

The \emph{initial segment} of length $s$ of a chain $C$ is then the chain $C|s = (e_1, \dots, e_s)$. 
One can concatenate two chains $C_1 = (e_1, \dots, e_{s_1})$ and $C_2 = (e_{s_1},f_2, \dots, f_{s_2})$ to get the chain $C_1 + C_2 = (e_1, \dots, e_{s_1}, f_2, \dots, f_{s_2})$. 
Then in order to shift such a chain, we use $\operatorname{Shift}(c, C_1+C_2) = \operatorname{Shift}(\operatorname{Shift}(c, C_1), C_2)$.

Now we can define a \emph{fan} chain on $u$ to be a chain $F$ of the form $F = (uw_1, \dots, uw_k)$. Here we denote $u$ as the \emph{center} of the fan. 
For $k = 1$ this definition is ambiguous, but in any case where we do not specify the center, one can pick the center arbitrarily. 
Bernshteyn~\cite{BERNSHTEYN} showed that for such fans to be shiftable chains, we require $c(uw_{i+1}) \in A(w_i)$. 
We will let the colour of $c(uw_{i+1})$ be the \emph{representative available colour} at $w_i$. 
We will also pick a representative available colour at $w_k$ in $A(w_k)$. If this representative available colour is also either available at $u$ or if it is the representative available colour for some $w_j$ with $j < k$, then we say that the fan is \emph{maximal}. 

A $(\kappa_1, \kappa_2)$-bichromatic path is a subpath of $G$ consisting only of edges coloured $\kappa_1$ and $\kappa_2$. 
It is not difficult to check the following fact:
\begin{fact}
Let $P$ be a maximal $(\kappa_1, \kappa_2)$-bichromatic path in a proper colouring. Then the colouring obtained by colouring all of the $\kappa_1$-coloured edges with the colour $\kappa_2$, and all of the $\kappa_2$-coloured edges with the colour $\kappa_1$ is also a proper colouring.
We refer to this as \emph{shifting} the colours of $P$.
\end{fact}
A \emph{path chain} is a chain of the form $P = (e_1, \dots, e_k)$, such that the edges in the chain form a path in the graph. The \emph{length} of a path chain $P = (e_1, \dots, e_k)$ is $k-1$.
A \emph{bichromatic path chain} is then a chain $P = (e_1, \dots, e_k)$ that forms a path in $G$ such that $c(e_1) = \neg$ and the colour of $c(e_2) = c(e_4) = \dots = \kappa_1$ and $c(e_3) = c(e_5) = \dots = \kappa_2$, and furthermore such that $e_2$ is either the first or the last edge in some maximal $(\kappa_1, \kappa_2)$-bichromatic path. 
In order to highlight the colours present in the chain, we will sometimes refer to such a chain as a $(\kappa_1, \kappa_2)$-bichromatic path chain.

We will only consider $c$-shiftable bichromatic path chains, so we will always have that the bichromatic path chains are $c$-shiftable, even if we do not explicitly state it.
Also, we will sometimes say that a shiftable bichromatic path chain $P=(e_1, \dots, e_k)$ is \emph{rooted} at the endpoint of $e_2$ that is also the endpoint of the maximal bichromatic path containing $(e_2, \dots, e_k)$. 

Vizing originally showed how to choose the fan and path chains in order to construct what we will refer to as an \emph{augmenting subgraph} i.e.\ a subgraph such that by recolouring edges only in this subgraph, we can arrive at a new bigger proper partial edge-colouring. Vizing used the existence of such augmenting subgraphs to prove a theorem which we will recall below. It is illustrative to also give a proof of this theorem, so we will sketch a proof as well:
\begin{theorem}[Vizing's Theorem~\cite{vizing1964estimate}] \label{thm:VT}
Let $G$ be a graph of maximum degree $\Delta$. Then there exists a proper edge-colouring of $G$ using only colours in $[\Delta + 1]$.
\end{theorem}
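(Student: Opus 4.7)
The plan is to prove Vizing's Theorem by induction on the number of uncoloured edges in a proper partial $(\Delta+1)$-edge colouring of $G$. The base case is the empty colouring in which every edge is uncoloured. For the inductive step, it suffices to show that whenever a proper partial $(\Delta+1)$-edge colouring has an uncoloured edge $uv$, one can construct an augmenting subgraph whose shift yields a proper partial $(\Delta+1)$-edge colouring with strictly fewer uncoloured edges.

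To construct the augmenting subgraph, I would first build a maximal fan $F = u,w_{1},w_{2},\dots,w_{k}$ centred at $u$ with $w_{1}=v$, greedily choosing at each $w_{i}$ a representative available colour which becomes $c(uw_{i+1})$. The construction is well-defined because every vertex $w$ satisfies $|A(w)| \geq \Delta + 1 - d(w) \geq 1$, and it terminates since the $w_{i}$'s are required to be distinct. Let $\kappa_{2}$ denote the representative available colour chosen at $w_{k}$. By maximality, either $\kappa_{2} \in A(u)$, in which case a $k$-shift of the fan directly colours $uv$ and we are done, or $\kappa_{2}$ already equals the representative available colour at some earlier $w_{j-1}$ in the fan, equivalently $\kappa_{2}=c(uw_{j})$ for some $2 \leq j \leq k$.

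In this latter case, pick $\kappa_{1} \in A(u)$ (which exists since $|A(u)| \geq \Delta + 1 - (d(u)-1) \geq 2$, using that $uv$ is uncoloured), and note that $\kappa_{1} \neq \kappa_{2}$ because $\kappa_{2}$ is used at $u$. Now consider the $(\kappa_{1},\kappa_{2})$-bichromatic path $P$ rooted at $w_{k}$. Since $P$ is a component of $G[\{\kappa_{1},\kappa_{2}\}]$ and $u$, $w_{j-1}$ are distinct, $P$ cannot terminate simultaneously at both of them. I would then distinguish cases according to where $P$ ends: if $P$ avoids $u$, shift $P$ and then perform a $k$-shift of the fan; if $P$ ends at $u$, shift $P$ and then perform a $(j-1)$-shift of the truncated fan $u,w_{1},\dots,w_{j-1}$. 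In either scenario the colour $\kappa_{2}$ (respectively $\kappa_{1}$) becomes available at the appropriate endpoint so that the shift of the fan prefix legally colours $uv$.

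The main obstacle is verifying that the two operations compose into a proper edge colouring in each case: shifting $P$ changes colour classes along a whole component and may alter availability at $u$, $w_{j-1}$ or $w_{k}$, and we must argue that the fan prefix we subsequently shift remains a valid fan under the updated colouring, with the representative colour at its last vertex now lying in the available set at $u$. This is exactly ensured by the dichotomy on the endpoint of $P$: if $P$ avoids $u$ then $\kappa_{2}$ becomes available at $w_{k}$ (or stays available) and we can close the $k$-shift by colouring $uw_{k}$ with $\kappa_{2}$, whereas if $P$ ends at $u$ then shifting $P$ frees $\kappa_{2}$ at $u$, allowing the $(j-1)$-shift of the truncated fan to assign $\kappa_{2}$ to $uw_{j-1}$. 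In both sub-cases the uncoloured edge $uv$ becomes coloured, completing the inductive step and hence the theorem.
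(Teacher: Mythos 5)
Your overall strategy is the same as the paper's (maximal fan, then a $(\kappa_1,\kappa_2)$-bichromatic path from $w_k$, then a case analysis on where the path ends), but your case analysis has a genuine gap: you collapse the paper's three cases into two, and the merged case fails when the path $P$ ends at the earlier fan vertex $w_{j-1}$ (the paper's $w_i$). Since $\kappa_2\in A(w_{j-1})$, the vertex $w_{j-1}$ is a legitimate candidate for the second endpoint of $P$, and in that event the unique edge of $P$ at $w_{j-1}$ is coloured $\kappa_1$ before the swap and $\kappa_2$ after it. Your recipe ``shift $P$, then $k$-shift the full fan'' then breaks: the $k$-shift must recolour $uw_{j-1}$ with $c(uw_j)=\kappa_2$, but after shifting $P$ the colour $\kappa_2$ is no longer available at $w_{j-1}$, so the resulting colouring is not proper. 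The paper handles this by a separate case: when $P$ ends at $w_{j-1}$, one shifts $P$, performs only the $(j-1)$-shift of the truncated fan, and colours $uw_{j-1}$ with $\kappa_1$ (which the path shift has freed at $w_{j-1}$ and which remains free at $u$).

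Two smaller points. First, in your ``$P$ avoids $u$'' case you claim $\kappa_2$ becomes (or stays) available at $w_k$ and that $uw_k$ is coloured $\kappa_2$; in fact shifting a non-empty $P$ makes $\kappa_2$ \emph{unavailable} at $w_k$ and $\kappa_1$ available there, so the edge $uw_k$ must receive $\kappa_1$ — the operations are salvageable but the colour is wrong as written. Second, you should justify that the fan prefix remains shiftable after the path swap: this holds because no representative available colour in the fan equals $\kappa_1$ (as $\kappa_1\in A(u)$) and only $w_{j-1}$ and $w_k$ have representative colour $\kappa_2$, so the swap cannot invalidate the intermediate fan steps; making this explicit is exactly what exposes the problematic $w_{j-1}$ case above.
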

\begin{proof}[Proof-sketch]
Assume we are given a proper partial edge-colouring of $G$ containing a maximum number of coloured edges. We will show that there exists an even bigger proper partial edge-colouring, and thus the Theorem follows by contradiction. Note that the same argument may be applied to extend any proper partial edge-colouring to one that is bigger.

If no edge is left uncoloured by the proper (partial) edge-colouring, there is nothing to prove, so we assume there exists at least one uncoloured edge $e = uv$. 
In order to extend the colouring to one where $e$ is also coloured, we construct a maximal fan chain centered at $u$ with the first edge being $e_1 = uv$. We set $v = w_1$. We pick an available representative colour in $A(v)$ for $v$. If this colour is available at $u$, we stop the construction. Otherwise, assume we have inductively constructed a fan chain $(uw_1, uw_2, \dots, uw_{j})$. 
we show how to pick $w_{j+1}$ or conclude that the fan is maximal. First, we check if $A(w_{j})$ contains a colour present at either $uw_{\ell}$ for some $\ell<j$ or at $u$. If this is the case, we may choose this colour as our representative available colour, and subject to the chosen representative colours our fan is now maximal.
If no such colour exists, we pick an arbitrary colour $\kappa$ in $A(w_{j})$, and let $w_{j+1}$ be the other endpoint of the $\kappa$-coloured edge incident to $u$. Hence $\kappa$ is then the representative available colour at $w_j$. Notice that this construction cannot continue indefinitely. 

In conclusion, the maximal fan consists of vertices $u,w_{1}, \dots, w_{k}$ and either the representative free colour at $w_k$ is also free at $u$, or it is identical to the representative free colour at $w_i$ for some $i < k-1$. Without loss of generality we may assume the latter case occurs. 
Indeed; suppose otherwise. Then we $k$-shift the fan and colour $uw_k$ with the now shared available colour of $w_k$ and $u$. 
So assume we are in the second case.

Now let $\kappa_1$ be a free colour at $u$ and $\kappa_2$ the representative free colour at $w_k$ and $w_i$. Note that by construction all of $u,w_k,w_i$ will be endpoints of maximal $(\kappa_1,\kappa_2)$-bichromatic paths. 
Consider the $(\kappa_1,\kappa_2)$-bichromatic path $P = p_1p_2 \dots p_t = w_k$ ending at $w_k$. If it starts at $w_i$, we consider the chain 
$$C = (uw_1, uw_2, \dots, uw_i) + (uw_i, p_1p_2, p_2p_3, \dots, p_{t-1}w_k)$$
This chain is shiftable, and shifting it leaves $\kappa_2$ as an available colour at both $p_{t-1}$ and $w_k$, and so we can extend the colouring by giving this edge the colour $\kappa_2$. See Figure~\ref{fig:i} for an illustration. 
\begin{figure}%
    \centering
    \subfloat{{\includegraphics[width=6cm]{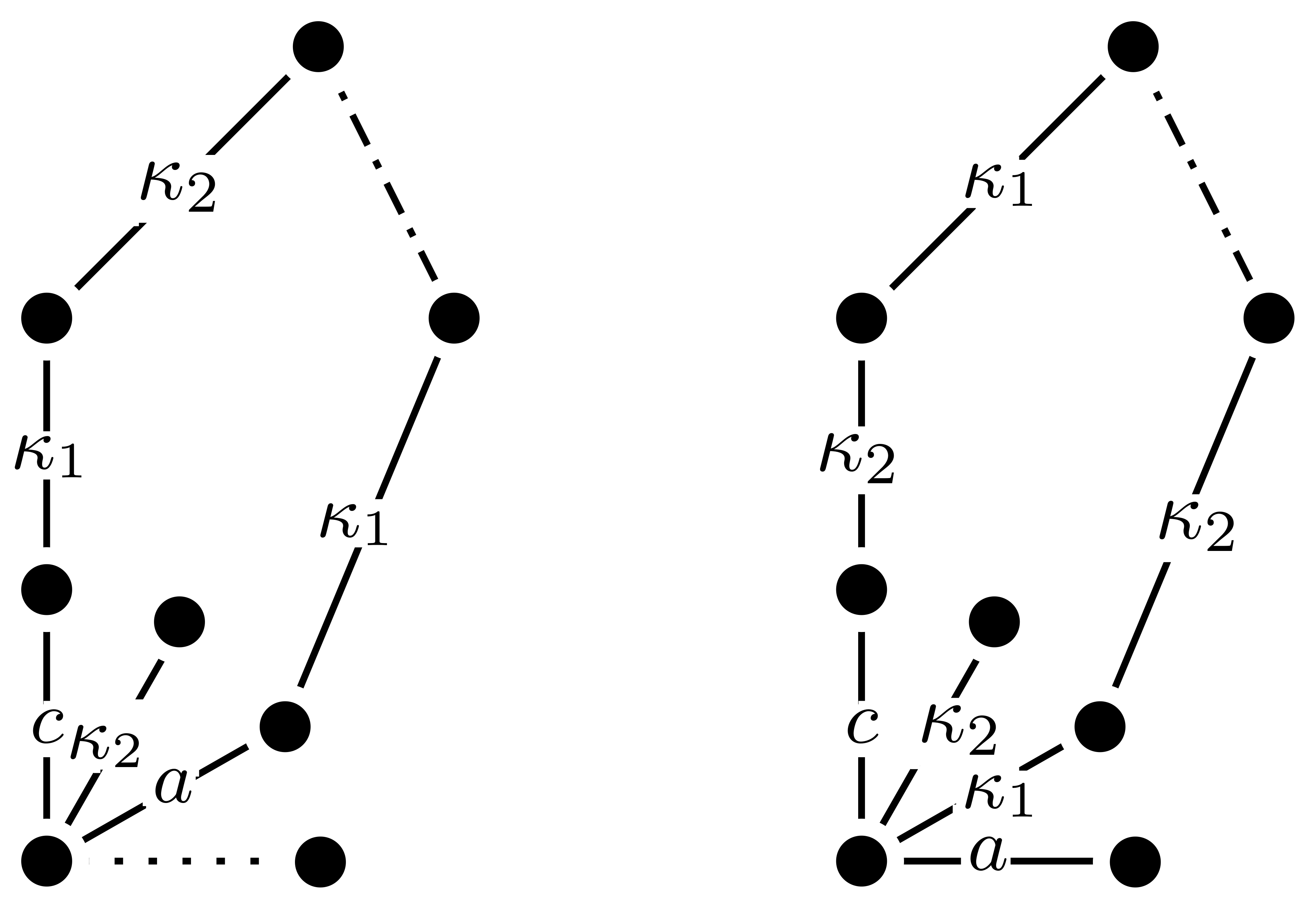} }}%
    \caption{An illustration of shifting the chain and extending the colouring.}
    \label{fig:i}%
\end{figure} 

If $P$ instead starts at $u$, we consider the chain 
$$C = (uw_1, uw_2, \dots, uw_{i+1}) + (uw_{i+1}, p_{1}p_{2}, p_{2}p_{3}, \dots, p_{t-1}w_k)$$ 
This chain is again shiftable, and shifting it again leaves $\kappa_2$ as an available colour at both $p_{t-1}$ and $w_k$, and so we reach the same conclusion as before. See Figure~\ref{fig:ip1} for an illustration. 
\begin{figure}%
    \centering
    \subfloat{{\includegraphics[width=6cm]{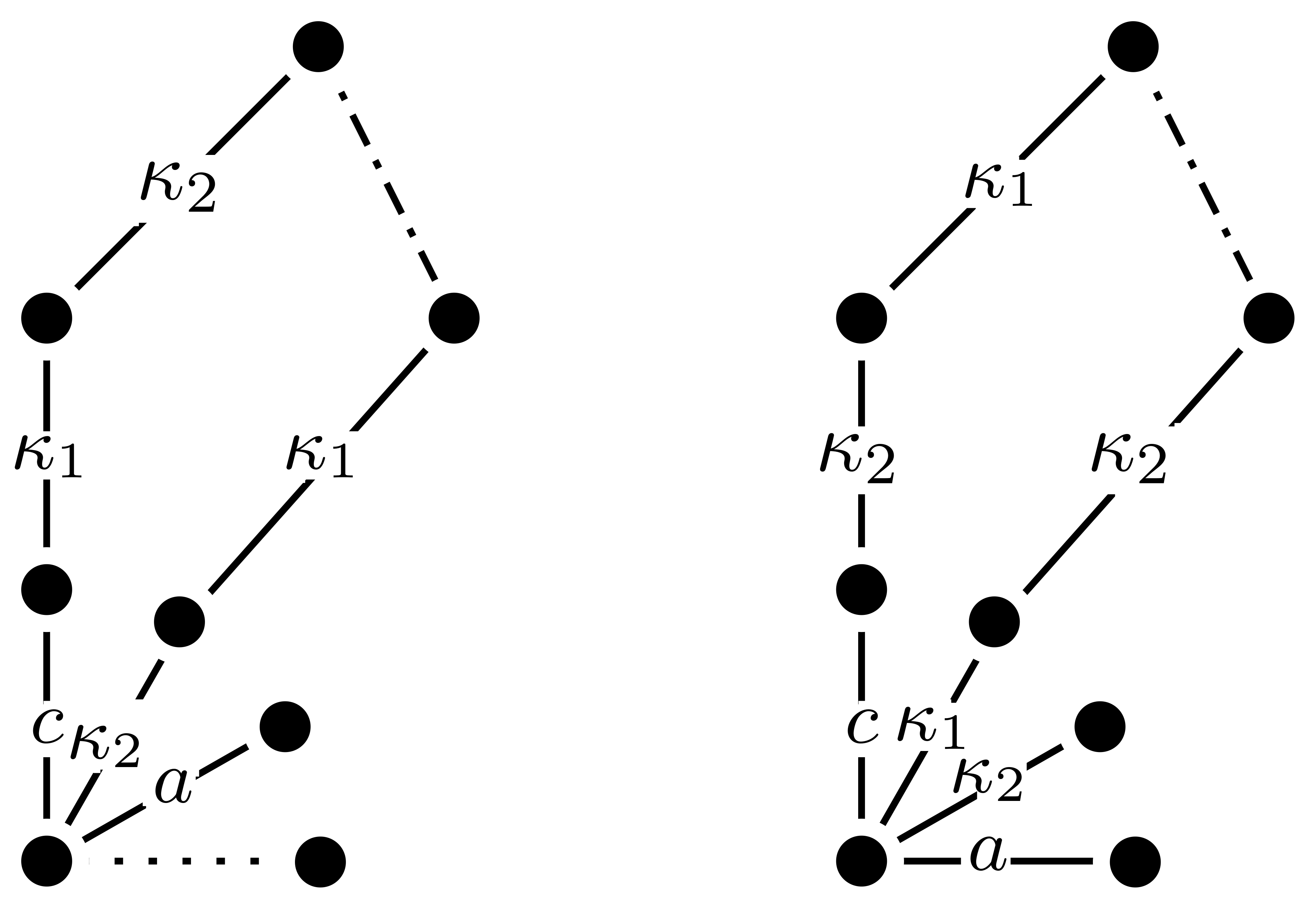} }}%
    \caption{An illustration of shifting the chain and extending the colouring.}
    \label{fig:ip1}%
\end{figure} 

If $P$ instead begins at any other vertex, we consider the chain 
$$C = (uw_1, uw_2, \dots, uw_k) + (uw_k, w_kp_{t-1}, p_{t-1}p_{t-2}, \dots, p_{2}p_{1})$$
It is again shiftable, and shifting it leaves either $\kappa_1$ or $\kappa_2$ as an available colour at both $p_{2}$ and $p_{1}$, and so we can extend the colouring to also colour this edge. 
\end{proof}

In some sense, Vizing proved a slightly stronger statement than the one that is usually referred to as Vizing's Theorem. He showed that given any proper partial edge-colouring with $\Delta + 1$ colours and any uncoloured edge $e$, one can always find an augmenting subgraph of size $O(\Delta + n)$ such that exchanging colours within this subgraph allows one to also properly colour $e$. 
The proof is constructive and it implies Vizing's Theorem as it gives rise to a polynomial-time algorithm for computing a $(\Delta + 1)$-edge-colouring. 
Note, however, that the subgraph is rather large; in order to colour a single extra edge, we might have to recolour $\Omega(n)$ edges. 
The subgraph constructed is what we refer to as a Vizing chain:
\begin{definition} \label{def:truncVC}
Given a proper partial colouring $c$ of $G$, a \emph{Vizing chain} of an uncoloured edge $e=uv$ (or a vertex $u$) is a $c$-shiftable chain $F + P$ consistsing of a fan chain $F = (uv, \dots, uw_{k})$ centered at $u$ together with a $(\kappa_1, \kappa_2)$-bichromatic path chain $P = w_{k}x_1x_2 \cdots x_p$ -- possibly of length 0 -- such that $\kappa_1$ is available at $u$ and $\kappa_2$ is available at $w_k$.
If $p\geq 1$ we say that $F+P$ \emph{ends} at the edge $x_{p-1}x_{p}$ and at the vertex $x_{p-1}$. 
Finally, we let $c' = \operatorname{Shift}(c, F+P)$ be the \emph{post-shift} colouring. 
\end{definition}

An important tool for proving the existence of smaller augmenting subgraphs is to glue together what we may call \emph{truncated Vizing chains} to form a \emph{multi-step Vizing chain}, a term coined by Bernshteyn~\cite{BERNSHTEYN}, but also considered in different shapes by other papers~\cite{duan,grebik2020measurable}. These papers all consider different ways of concatenating truncated Vizing chains in order to prove properties about edge-colourings. 
We consider a general definition of multi-step Vizing chains that encompasses all of these constructions as special types of Vizing chains. Later we will specialise to more restricted classes of multi-step Vizing chain.
In particular, we note that in Section~\ref{sec:app1}, we allow the Vizing chains to overlap in edges. Multi-step Vizing chains arise by combining Vizing chains:
\begin{definition}
Given a proper partial colouring $c$ of $G$, an $i$-step Vizing chain is a $c$-shiftable chain of the form $F_1+P_1 + F_2 + P_2 + \dots F_i + P_i$ where $F_j+P_j$ is a Vizing chain for all $j$.   
\end{definition}
In the case, where we do not specify $i$, we simply refer to such chains as a \emph{multi-step Vizing chain}. 
We stress again that we impose no restriction on whether the different chain components that make up the multi-step Vizing chain are allowed to overlap in edges, and we will consider chains that could potentially overlap in a lot of edges. 
A (multi-step) Vizing chain is \emph{augmenting} if shifting the chain leaves the final edge of the chain with a colour that is available at both endpoints of the edge. If the chain is not augmenting, it is \emph{truncated}. 

We let the \emph{length} of an $i$-step Vizing chain be the sum of the lengths of the paths $P_j$ that make up the multi-step Vizing chain, and the \emph{size} of an $i$ step Vizing chain be the total number of edges (with multiplicities) in the Vizing chain. Finally, we let $i$ be the \emph{chain length} of the multi-step Vizing chain.

\section{Types of multi-step Vizing chains} \label{sec:VC}
The goal of these multi-step Vizing chain constructions is to glue together mulitple truncated Vizing chains -- all satisfying some properties -- to end up with an augmenting Vizing chain that inherits some (possibly relaxed version) of these properties. 
Examples of properties could be that the chains should be short or that the pre- and post-shift colourings should adhere to certain restrictions on allowed colours for each edge. 

It is straight-forward to check that the construction used to prove Vizing's Theorem shows the existence of multi-step Vizing chains, but it is not really clear that they exist in any useful forms. Hence, we will briefly introduce some restricted classes of Vizing chains. 
Proving that some of these classes of Vizing chains contain augmenting multi-step Vizing chains will then imply our theorems.

\paragraph{Non-overlapping Vizing chains} This type of Vizing chain has been considered in different forms by~\cite{BERNSHTEYN,duan,grebik2020measurable}. We will construct these chains in a similarly to Bernshteyn~\cite{BERNSHTEYN}, who generalised the construction of Greb\'\i k and Pikhurko~\cite{grebik2020measurable}, but our analysis approach is completely different. First we define a \emph{non-overlapping} Vizing chain.
\begin{definition} \label{def:nonO}
An $i$-step Vizing chain is \emph{non-overlapping} if every pair of Vizing chains $F_j+P_j$ on edge $e_j$ and $F_k+P_k$ on edge $e_k$ share an edge exactly when $j = k-1$ and the shared edge is $e_k$.
Furthermore, for all $k \leq i-1$ no edge in the chain $F_k+P_k = (e_1, \dots, e_s)$ may be repeated i.e.\ $e_{j} = e_{\ell}$ iff $j = \ell$. 
\end{definition} 
It is straightforward to check that the construction used in the proof of Vizing's Theorem yields a $1$-step Vizing chain that is non-overlapping. However, it is not immediately clear how to extend it to a multi-step Vizing chain. To do so, we will apply the following lemma, due to Greb\'\i k and Pikhurko~\cite{grebik2020measurable}. 
\begin{lemma}[\cite{BERNSHTEYN,grebik2020measurable}] \label{lma:fan2}
Let $c$ be a proper partial edge-colouring and let $e = uv$ be an uncoloured edge. Let $\kappa_{1} \in [\Delta+1]$ be an available colour at $u$ and let $\kappa_{2} \in [\Delta+1]$ be an available colour at $v$. Then there exists a fan $uw_{1}, \dots, uw_{k}$ with $w_1 = v$ such that either 
\begin{enumerate}
    \item $\kappa_2 \notin A(u)$ and the fan is maximal subject to the constraint that the colours $\kappa_1$ or $\kappa_2$ are considered unavailable to every vertex.
    \item $w_{k} \neq v$, the representative available colour at $w_{k}$ is $\kappa_2$ and no edge $uw_{i}$ is coloured $\kappa_2$.
    \item an available colour at $uw_{k}$ is also available at $u$. 
\end{enumerate}
\end{lemma}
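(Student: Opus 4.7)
The plan is to construct the fan $u,w_1,\ldots,w_k$ greedily, starting from $w_1 = v$, while treating $\kappa_1$ and $\kappa_2$ as ``forbidden'' representative available colours throughout. At the current vertex $w_i$, first test whether $A(w_i)\cap A(u)\neq\emptyset$; if so, halt and output case~(3) with $k=i$. Otherwise, attempt to extend by picking a colour $\kappa\in A(w_i)\setminus\paren{\{\kappa_1,\kappa_2\}\cup\{c(uw_j):1\le j\le i\}}$. If such a $\kappa$ exists, set $w_{i+1}$ to be the other endpoint of the unique $\kappa$-coloured edge at $u$ (well-defined since $\kappa\notin A(u)$, and fresh since $\kappa\neq c(uw_j)$ for $j\le i$), record $\kappa$ as the representative colour at $w_i$, and iterate. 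Termination is immediate: each iteration introduces a new vertex, so the loop performs at most $n-1$ steps.

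When no extension colour is available at $w_i$ the loop exits. If $\kappa_2\in A(w_i)$ and $i\ge 2$, designate $\kappa_2$ as the representative colour at $w_k := w_i$ and output case~(2); otherwise output case~(1). For case~(2) all three conditions follow directly from the construction: $w_k\neq v$ because $i\ge 2$; the representative colour at $w_k$ is $\kappa_2$ by definition; and no edge $uw_j$ is coloured $\kappa_2$, since $c(uw_1)=\neg$, while for $j\ge 2$ the colour $c(uw_j)$ equals the representative colour chosen at $w_{j-1}$, which the extension rule explicitly forbade from being $\kappa_2$.

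For case~(1) the only non-trivial obligation is $\kappa_2\notin A(u)$: if instead $\kappa_2\in A(u)$, then at the first iteration $\kappa_2\in A(u)\cap A(w_1)$ would already trigger case~(3), contradicting the claim that the procedure reached case~(1). Maximality of the fan subject to the constraint is precisely the loop's exit condition: every colour in $A(w_k)\setminus\{\kappa_1,\kappa_2\}$ coincides with some $c(uw_j)$, and $A(w_k)\cap A(u)=\emptyset$ rules out any constrained ``closing'' colour at $w_k$. The main subtlety I anticipate is the boundary $k=1$: failure to extend at $w_1=v$ forces $A(v)\subseteq\{\kappa_1,\kappa_2\}$, and $\kappa_1\in A(v)$ can be ruled out (else $\kappa_1\in A(u)\cap A(v)$ would have triggered case~(3) immediately), so $A(v)\subseteq\{\kappa_2\}$; since $i=1<2$, the dichotomy above consistently sends this configuration to case~(1), with $\kappa_2\notin A(u)$ established as required. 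The remaining work amounts to routine bookkeeping.
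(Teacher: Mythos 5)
Your construction is correct and essentially the same as the paper's: both build the fan greedily while forbidding $\kappa_1,\kappa_2$ as representative colours, route any shared available colour with $u$ to case~(3), send the exit with $\kappa_2\in A(w_k)$ to case~(2), and note that otherwise the surviving colours in $A(w_k)$ must coincide with earlier fan colours, yielding the constrained-maximal fan of case~(1). The one point where you diverge is the $k=1$ boundary: the paper disposes of it by observing that since $uv$ is uncoloured, $v$ is incident to at most $\Delta-1$ coloured edges and hence $|A(v)|\geq 2$, so the configuration $A(v)\subseteq\{\kappa_2\}$ simply cannot arise — this is the right resolution, since routing it to case~(1) as you do is not actually justified (with $A(w_1)\setminus\{\kappa_1,\kappa_2\}=\emptyset$ there is no admissible representative colour at $w_1$, so ``maximal subject to the constraint'' would not hold in the sense needed).
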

\begin{proof}
We construct a maximal fan almost exactly as in the proof of Theorem~\ref{thm:VT}. Again we begin by setting $v = w_1$. If $\kappa_2 \in A(u)$, we are in case $3)$ and we can stop the construction.
Otherwise, having picked $w_j$, we show how to pick $w_{j+1}$ or conclude that we may stop the construction. First, we check if $A(w_{j})$ contains a colour -- that is \emph{not} $\kappa_1$ or $\kappa_2$ -- present at either $uw_{\ell}$ for some $\ell<j$ or at $u$. If this is the case, we may choose this colour as our representative available colour, and subject to the chosen representative colours our fan is now maximal and satisfies $1)$.
If no such colour exists, it is either because $\kappa_1 \in A(w_{j})$ or because $A(w_{j}) = \{ \kappa_2 \}$. In the former case, we may pick $\kappa_1$ as our representative available colour, and stop the construction since we are in case $3)$. In the latter case, we also stop the construction and note since $v = w_1$ is incident to at most $\Delta-1$ coloured edges, this cannot happen at $v$. Hence, we are in case $2)$.
\end{proof}

Now we can outline a generic way to construct multi-step Vizing chains. We will later show that this type of construction can be used to construct augmenting non-overlapping Vizing chains. 
The other constructions we consider will also be similar to (parts of) this construction.

\paragraph{Construction} Given an uncoloured edge $u_1v_1$, we construct a $1$-step Vizing chain similar to the proof-sketch of Theorem~\ref{thm:VT}. 
That is: we construct a maximal fan chain $M = (u_1w_{1,1}, \dots, u_1w_{1,k})$ such that $v_1 = w_{1,1}$, the representative available colour of $w_{1,j}$ is $c(u_{1}w_{1,j+1}) \in A(w_j)$ for all $j < k$, and that either $1)$ the representative available colour, $\kappa_2$, of $w_{1,k}$ is in $A(u_1)$ i.e.\ $\kappa_2 \in A(u_1)$ or $2)$ it is equal to the representative available colour $c(u_1w_{1,i+1})$ at the vertex $w_{1,i}$. 
Then we pick an arbitrary representative available colour, $\kappa_1$, for $u_1$ in $A(u_1)$, and consider the maximal $(\kappa_1,\kappa_2)$-bichromatic path $BP$ ending at $w_k$. 
In case $1)$, we let $F_1 = M$ and $P = u_1w_{1,k}$. It is straightforward to check that this yields an augmenting Vizing chain. 
In case $2)$, we have two options. We can either choose to truncate $BP$ and get a truncated Vizing chain, or we can choose to use the entire bichromatic path to get an augmenting Vizing chain. 
If we truncate $BP$, we set $P_1 = (u_1w_{1,k}, BP|t)$ for some $t < \operatorname{length}(BP)$. In this case, we set $F_1 = M$. Again it is straightforward to check that $F_1 + P_1$ is a Vizing chain. 

Otherwise, if we do not truncate $BP$, we have to choose $F_1$ accordingly: if $BP$ ends at a vertex that is not $w_{1,i}$ nor $u_1$, then we can set $F_1 = M$ and $P_1 = (u_1w_{1,k}, BP)$. 
Note that this will always be the case in Applications~\ref{sec:app2, sec:app3, sec:dynCol}, since here we do not allow $BP$ to overlap with the fan. 
For completeness, we describe how to construct the chains in the other cases anyways. 
Otherwise, if it ends at $u_1$, we set $F_1 = M|(i+1)$ and $P_1 = (u_1w_{1,i+1},R(BP))$, where $R(BP)$ is the reversal of $BP$ i.e.\ the edges considered in the opposite order. 
Finally, if it ends at $w_{1,i}$, we set $F_1 = M|i$ and $P_1 = (u_1w_{1,i},R(BP))$.

We can use Lemma~\ref{lma:fan2} to extend a (multi-step) Vizing chain: given an $i$-step Vizing chain ending at an edge $e = u_iv_i$ and a point $u_i$, we can extend this Vizing chain to an $(i+1)$-step Vizing chain by applying the lemma above to $e$ such that we either avoid or reuse the colours of the bichromatic path of the last Vizing chain used to get there. 
Namely, in case $1)$ we pick an available colour at $u$ that is not $\kappa_1$ or $\kappa_2$ to use in the construction of the bichromatic path. Such a colour has to exist since $u$ is incident to at most $\Delta-1$ coloured edges, and one of these edges has to be coloured $\kappa_2$.
In case $2)$, we pick $\kappa_1$ as the available colour at $u$ so that we get a $(\kappa_1, \kappa_2)$-bichromatic path. In case $3)$, we can construct an augmenting Vizing chain using a bichromatic path of length 0, and so we do this.

In this way, we can extend our $i$-step Vizing chain to an $(i+1)$-step Vizing chain such that the fan is (almost always) certain to never overlap the Vizing chain used to reach $u$. We will refer to the extension described above as \emph{extending the multi-step Vizing chain via Lemma~\ref{lma:fan2}}.
For a specific multi-step Vizing chain, we cannot guarantee that we have many choices of $u$ such that the $(i+1)^{\textrm{th}}$ bichromatic path does not overlap with the multi-step Vizing chain used to go there. However, when we consider many choices of multi-step Vizing chains at the same time, we can provide such a guarantee.

In Section~\ref{sec:app2}, we will show that for any uncoloured edge $e$ there exists an augmenting, non-overlapping multi-step Vizing chain of size at most $O(\poly(\Delta)\log n)$. As will be evident from the proof, if the chain is long we will have many choices for how to pick the augmenting Vizing chain. 
This allows us to consider multiple edges at once and choose augmenting Vizing chains that do not overlap, so that we may colour multiple edges in parallel. To formalise this choice, we introduce the notion of \emph{family-avoiding} Vizing chains.

\paragraph{Family-avoiding Vizing chains} The idea is that given some forbidden points $\mathbb{F}_i$, we would like to avoid extending our Vizing chain through a point in $\mathbb{F}_i$ when constructing the $(i+1)^{\textrm{th}}$ Vizing chain. In fact we will be a bit more strict. We will avoid extending our $(i+1)^{\textrm{th}}$ Vizing chain through any of the sets $\mathbb{F}_1, \dots, \mathbb{F}_{i}$. If this holds for all $i$, we say that the Vizing chain \emph{avoids} the family $\mathcal{F} = \{\mathbb{F}_i\}_{i = 1}^{\infty}$. Note that a Vizing chain may pass through points in $\mathbb{F}_j$ for all $j$, but it is only allowed to be truncated at points not in $\mathcal{F} = \{\mathbb{F}_i\}_{i = 1}^{k-1}$ at the $k^{\textrm{th}}$ step. More formally, we define a family-avoiding Vizing chain as follows:
\begin{definition} \label{def:famA}
Let $\mathcal{F}$ be a family of subsets of $V(G)$ i.e.\ $\mathcal{F} = \{\mathbb{F}_i\}_{i = 1}^{\infty}$ for $\mathbb{F}_i \subset V(G)$. A multi-step Vizing chain on $e$ is \emph{$\mathcal{F}$-avoiding} if the $i^{\textrm{th}}$ Vizing chain is not created on a vertex in $\bigcup \limits_{j = 1}^{i-1} \mathbb{F}_{j}$. 
\end{definition}
In Section~\ref{sec:app3}, we use family-avoiding Vizing chains to construct many vertex disjoint augmenting Vizing chains, so that e.g.\ a distributed algorithm can colour many edges in parallel. 

\paragraph{Local Vizing chains} For this class of Vizing chains, we want to put some restrictions on the post-shift colourings. For instance, we consider the following restricted class of Vizing chains:
\begin{definition} \label{def:striclyLocal}
We say a (multi-step) Vizing chain is \emph{strictly local} if the post-shift colouring is \emph{strictly local} meaning that it satisfies that every coloured edge $e = uv$ has a colour from $L(e) = \{1, 2, \dots \max\{d(u),d(v)\}+1\}$.
\end{definition}
Clearly, if we can show that for any edge $e$ left uncoloured by a strictly local colouring, there has to exist an augmenting strictly local multi-step Vizing chain on $e$ of finite size and step-length, then Theorem~\ref{thm:LVT} follows. 
Indeed, we may begin from an empty colouring and then colour each uncoloured edge using the existence of such a multi-step Vizing chain. 
We will show how to construct these chains in the next section.

In Section~\ref{sec:dynCol}, we will relax this notion of locality to construct chains which are $\varepsilon$-local. Before defining such chains, we define the set of \emph{$\varepsilon$-available} colours for a vertex $v$ as 
\[
A_{\varepsilon}(v) = A(v) \cap [(1+\varepsilon) d(v)]
\]
Now we can define $\varepsilon$-local Vizing chains.
\begin{definition} \label{def:epsLocal}
We say a (multi-step) Vizing chain is \emph{$\varepsilon$-local} if the following condition holds for all $\kappa$ in both the pre-shift colouring $c$ and the post-shift colouring $c'$ after shifting the entire (multi-step) Vizing chain:
\[
|\{e:\hat{c}(e) = \kappa \}| \leq |\{v:  \kappa \in [(1+\varepsilon)d(v)], \kappa \notin A_{\varepsilon}(v)\}|
\]
where $\hat{c} \in \{c, c'\}$.
\end{definition}
In Appendix~\ref{app:B}, we explain how we will construct such Vizing chains. 
In Section~\ref{sec:dynCol}, we show that the constructed chains indeed are $\varepsilon$-local and use them to construct a fully-dynamic explicit edge-colouring algorithm with worst-case guarantees. 

\paragraph{Known properties of Vizing chains}
Here, we translate known results on Vizing chains and multi-step Vizing chains to the terminology introduced above. 
Duan, He and Zhang~\cite{duan} showed that if one has $(1+\varepsilon)\Delta$ colours, then there exist augmenting non-overlapping multi-step Vizing chains of size $O(\poly(\Delta) \log^2 n)$ provided that $\Delta = \Omega(\varepsilon^{-2}\log^2 n)$.
In the setting where one is only allowed $\Delta + 1$ colours, the arguments of Greb\'\i k \& Pikhurko in~\cite{grebik2020measurable} essentially show the existence of augmenting non-overlapping 2-step Vizing chains of size $O(\poly(\Delta) \sqrt{n})$.
In~\cite{BERNSHTEYN} Bernshteyn generalised this approach to construct even shorter augmenting non-overlapping multi-step Vizing chains of size $O(\poly(\Delta) \log^2 n)$.
Finally, Chang, He, Li, Pettie and Uitto~\cite{pettie} showed a lower bound on the size of any augmenting subgraph of $\Omega(\Delta \log \frac{n}{\Delta})$. In particular, this lower bound must carry over to the size of any augmenting multi-step Vizing chain.

\section{Application 1: Local version of Vizing's theorem} \label{sec:app1}

We say an edge $e = uv$ is coloured \emph{strictly local} if $c(e) \in \{1, 2, 3, \dots, 1+\max\{d(u),d(v) \} \}$. A strictly local partial colouring is then a partial edge-colouring, where all coloured edges are coloured strictly local.
The Local Vizing Conjecture says that there exists a proper colouring of any graph where all edges are coloured strictly locally. 
We say a fan is \emph{strictly local} if the representative available colour at each point $w_i$ of the fan is in $\{1, 2, 3, \dots, 1+d(w_i)\}$.
Observe that if the original colouring is strictly local, then so is the colouring obtained from shifting a strictly local fan.
Strictly local fans gives us a way of constructing truncated Vizing chains that are strictly local, assuming that the pre-colouring is strictly local: fix a center $u$ and an uncoloured edge $uv$. The Vizing chain constructions from Lemma~\ref{lma:fan2} and Theorem~\ref{thm:VT} can be made strictly local by choosing the available colour at a vertex $y$ in $\{1, 2, 3, \dots, 1+d(y)\}$ and truncating $P$ at the first problematic edge. We will need the following generic construction of strictly local Vizing chains:

\paragraph{Construction:} Let $c$ be a strictly local partial colouring of a graph $G$, and let $e = uv \in E(G)$ be any edge left uncoloured by $c$. 
Then we consider the following strictly local Vizing chain on $e$:
we construct a maximal fan $uw_1, \dots, uw_k$ as follows. Initially, we set $w_1 = v$. Now having picked $w_j$, we pick $w_{j+1}$ or conclude our construction with $k = j$ as follows: Note first that for all vertices $y$ we have that $A(y) \cap [d(y)+1]$ is non-empty, since at most $d(y)$ colours can be excluded from $A(y)$. We now have three cases: \\\\
\textbf{Case 1:} if there exists a colour $\kappa \in A(w_j) \cap [d(w_j)+1]$ that is also available at $u$, we pick $\kappa$ as $w_j$'s representative strictly locally available colour and conclude the construction with $k = j$. \\\\
\textbf{Case 2:} if there exists a colour $\kappa \in A(w_j) \cap [d(w_j)+1]$ and an index $\ell<j$ such that $c(uw_{\ell}) = \kappa$, we pick $\kappa$ as $w_j$'s representative available colour and conclude the construction with $k = j$. \\\\
\textbf{Case 3:} Otherwise, we pick an arbitrary colour $\kappa \in A(w_j) \cap [d(w_j)+1]$ as $w_j$'s representative available colour. Note that $u$ has to be incident to an edge $ux$ coloured $\kappa$ and that $x \notin \{w_{\ell}\}_{\ell=1}^{j}$, since then we would be in case $1)$ or case $2)$. Therefore, we can pick $w_{j+1} = x$. \\\\
The above construction gives us a fan $F$ on the edges $uw_1, \dots, uw_{k}$, such that $w_k$'s representative available colour is either available at $u$ or is present at $u$ at the edge $uw_i$ for some $i<k$. Note that $\ell$-shifting $F$ leaves a colouring that is also strictly local, since $uw_i$ receives a colour in $[d(w_i)+1]$ by construction. In particular, if we stop in case $1)$, we have identified a strictly local Vizing chain, and we conclude the construction.

If we stop in case $2)$, we pick an arbitrary representative available colour $\kappa_1 \in A(u) \cap [d(u)+1]$ for $u$ and set $\kappa_2$ equal to $w_k$'s representative available colour. Then we consider the $(\kappa_1,\kappa_2)$-bichromatic path $P' = p_1, p_2, \dots, p_{t} = w_k$ ending at $w_k$. We choose $P$ to be the union of a fan edge and a subpath of $P'$ as follows: beginning from $w_k$, we traverse $P'$. We choose $P$ by truncating $P'$ at the first edge $xy$, we meet along $P'$ such that shifting the colours of $P'$ would cause the colour of $xy$ to violate strict locality of the post-shift colouring. 
This concludes the construction in case $2)$. 

Finally, if no such edge exist, we can construct an augmenting strictly local Vizing chain in a similar fashion to what we did in the proof of Theorem~\ref{thm:VT}. Currently, we have a maximal strictly local fan consisting of vertices $u,w_{1}, \dots, w_{k}$ and the representative free colour at $w_k$ also the representative free colour at $w_i$ for some $i < k$. 

Now let $\kappa_1$ be a free colour at $u$ and $\kappa_2$ the representative free colour at $w_k$ and $w_i$. Note that by construction all of $u,w_k,w_i$ will be endpoints of maximal $(\kappa_1,\kappa_2)$-bichromatic paths. 
Again we consider the $(\kappa_1,\kappa_2)$-bichromatic path $P' = p_1p_2 \dots p_t = w_k$ ending at $w_k$. If it starts at $w_i$, we consider the chain 
$$C = (uw_1, uw_2, \dots, uw_i) + (uw_i, p_1p_2, p_2p_3, \dots, p_{t-1}w_k)$$
This chain is shiftable, and shifting it leaves $\kappa_2$ as an available colour at both $p_{t-1}$ and $w_k$, and so we can extend the colouring by giving this edge the colour $\kappa_2$.
We note the following two things: 1) if the original colouring is strictly local, then so is the one obtained by shifting the above chain, and 2) by assumption colouring the last edge like this, does not violate strict locality, since otherwise we would have truncated at this edge.

If $P$ instead starts at $u$, we instead consider the chain 
$$C = (uw_1, uw_2, \dots, uw_{i+1}) + (uw_{i+1}, p_1p_2, p_2p_3, \dots, p_{t-1}w_k)$$ 
This chain is again shiftable, and shifting it again leaves $\kappa_2$ as an available colour at both $p_{t-1}$ and $w_k$, and so we reach the same conclusion as before.
Again we know that shifting the second part of $C$ does not inviolate the strict locality of the colouring. 

If $P$ instead begins at any other vertex, we consider the chain 
$$C = (uw_1, uw_2, \dots, uw_k) + (uw_k, w_kp_{t-1}, p_{t-1}p_{t-2}, \dots, p_{2}p_{1})$$
It is again shiftable, and shifting it leaves either $\kappa_1$ or $\kappa_2$ as an available colour at both $p_{2}$ and $p_{1}$, and so we can extend the colouring to also colour this edge without violating strict locality.

Note that if we end up with a truncated Vizing chain, we may recursively perform the exact same construction at the truncated edges to construct a multi-step Vizing chain on $e$. In particular, observe that these Vizing chains might overlap. In this section, we will only consider strictly local (multi-step) Vizing chains constructed as explained above. 

Consider a strictly local Vizing chain constructed as explained above, and let $e'$ be the the last edge of $P$ i.e.\ the truncated edge. A simple but key observation is that $e'$ has to go between two vertices of (relative) low degree.
\begin{observation} \label{obs:SL}
Let $c$ be a strictly local (partial) colouring of a graph $G$, and let $P$ be any $(\kappa_1, \kappa_2)$-bichromatic path. If the shift of $P$ makes $e = xy$ violate the strict locality, then $\max\{d(x), d(y) \} +1 < \max \{\kappa_1, \kappa_2\}$.
\end{observation}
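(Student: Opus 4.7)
The plan is to argue directly from the definitions by distinguishing the pre-shift and post-shift colours on the edge $e = xy$. First I would note that since the shift of $P$ changes the colour of $e$, the edge $e$ must actually lie on $P$, and therefore $c(e) \in \{\kappa_1, \kappa_2\}$ before the shift, while $\operatorname{Shift}(c,P)(e)$ is the other element of $\{\kappa_1, \kappa_2\}$. Write $\alpha = c(e)$ and $\beta = \operatorname{Shift}(c,P)(e)$, so $\{\alpha,\beta\} = \{\kappa_1,\kappa_2\}$.

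Then I would use strict locality of $c$: since $e$ is a coloured edge in the strictly local pre-shift colouring, we have $\alpha \leq \max\{d(x), d(y)\} + 1$. On the other hand, the hypothesis says that after the shift $e$ violates strict locality, which means exactly $\beta > \max\{d(x), d(y)\} + 1$. Combining these two bounds,
\[
\max\{\kappa_1, \kappa_2\} \;=\; \max\{\alpha, \beta\} \;=\; \beta \;>\; \max\{d(x),d(y)\} + 1,
\]
which is the claimed inequality. There is no real obstacle here: the observation is essentially a restatement of what it means for a single edge to acquire a colour that is too large under a two-colour swap, and the only thing to verify is that $e$ is indeed on $P$ (so its colour actually changes) and that the pre-shift colour was itself constrained by strict locality.
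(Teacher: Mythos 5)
Your proof is correct and is essentially the same argument as the paper's, which simply states the contrapositive: the post-shift colour of $e$ lies in $\{\kappa_1,\kappa_2\}$, so if it exceeds $\max\{d(x),d(y)\}+1$ then so does $\max\{\kappa_1,\kappa_2\}$. Your extra step invoking strict locality of the pre-shift colouring to identify $\beta$ as the larger of the two colours is harmless but not needed, since $\beta \leq \max\{\kappa_1,\kappa_2\}$ already gives the conclusion.
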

\begin{proof}
If the stated inequality was not satisfied, then $e$ would not violate strict locality of the post-shift colouring.
\end{proof}
The main lemma of this section will show that if we construct a multi-step Vizing chain by repeatedly applying the construction from above on the truncated edges, then this construction cannot continue indefinitely before it ends up in an augmenting Vizing chain. 

To prove this lemma, we consider the following potential function defined for any graph $G$ and any partial colouring $c: G \mapsto [\Delta + 1] \cup \{\neg\}$ of $G$:
\begin{align*}
\Phi(G,c) &= \sum \limits_{\kappa = 1}^{\Delta + 1} |\{v: \kappa \in A(v) \cap [1+d(v)] \}| \\
&\leq n (\Delta+1)
\end{align*}

We will show that each time one shifts one of the strictly local truncated Vizing chains constructed above, the potential of the post-shift colouring will be at least one smaller than that of the pre-shift colouring.
Since for any choice of $G,c$ we have $\Phi(G,c) \geq 0$, it follows that there has to exist an augmenting and strictly local multi-step Vizing chain of length at most $O((\Delta+n) \cdot{} n (\Delta+1))$. 

We first show that shifting a strictly local fan never increases the potential:
\begin{proposition} \label{prop:fanShift}
Let $c_1,c_2$ be two (partial) colourings of the graph $G$ such that $c_2$ is obtained from $c_1$ by $i$-shifting a strictly local fan. Then $\Phi(G,c_2) \leq \Phi(G,c_1)$.
\end{proposition}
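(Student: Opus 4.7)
My plan is to track the change in $\Phi$ vertex by vertex, using the fact that an $i$-shift of a fan only changes colours on edges incident to $u, w_1, \dots, w_{i+1}$. Let $\alpha_j := c_1(uw_j)$, so $\alpha_1 = \neg$ and, by the strict locality of $F$, $\alpha_{j+1} \in A(w_j) \cap [1+d(w_j)]$ for every $1 \leq j < k$. In the post-shift colouring $c_2$, the edge $uw_j$ has colour $\alpha_{j+1}$ for $j \leq i$, $uw_{i+1}$ is uncoloured, and $uw_j$ is unchanged for $j > i+1$.

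The first key observation is that the contribution of $u$ to $\Phi$ does not change: the multiset of colours used by the fan edges at $u$ pre-shift is $\{\alpha_2,\ldots,\alpha_k\}$ and post-shift it is $\{\alpha_2,\ldots,\alpha_{i+1}\} \cup \{\alpha_{i+2},\ldots,\alpha_k\}$, i.e.\ the same set, so $A(u)$ is unchanged. The other vertices with altered incident colours are exactly $w_1,\ldots,w_{i+1}$, and I would analyze each one. At $w_1$: the edge $uw_1$ gains colour $\alpha_2$, which lies in $[1+d(w_1)]$ by strict locality, so the contribution of $w_1$ decreases by exactly one. At $w_j$ for $2 \leq j \leq i$: $\alpha_{j+1}$ leaves $A(w_j)$ (and $\alpha_{j+1}\in[1+d(w_j)]$) while $\alpha_j$ enters $A(w_j)$ (which may or may not lie in $[1+d(w_j)]$), so the net change is at most $-1 + 1 = 0$. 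At $w_{i+1}$: $\alpha_{i+1}$ enters $A(w_{i+1})$, so the contribution increases by at most one (exactly one if $\alpha_{i+1}\in[1+d(w_{i+1})]$, else zero). All other vertices contribute zero.

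Summing the contributions yields a total change of at most $0 + (-1) + 0 + 1 = 0$, which gives $\Phi(G,c_2) \leq \Phi(G,c_1)$ as claimed. The only step that requires care is verifying that $u$'s contribution does not change (which amounts to the elementary fact that the shift merely cyclically permutes the colours $\alpha_2,\ldots,\alpha_{i+1}$ along $uw_1,\ldots,uw_i,uw_{i+1}$ and introduces the uncoloured slot at $uw_{i+1}$), and in making sure that strict locality of $F$ is invoked at the right places to guarantee that $\alpha_{j+1}$ always counts toward the decrease at $w_j$. There is no real obstacle beyond bookkeeping; the next proposition, handling the bichromatic path shift, will be where a strict decrease is extracted.
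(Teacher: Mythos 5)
Your proposal is correct and follows essentially the same argument as the paper: the centre $u$ keeps the same multiset of incident colours, each interior fan vertex loses a locally-counted available colour (by strict locality of the fan) while gaining at most one, and the final vertex gains at most one, so the net change in $\Phi$ is at most zero. The only cosmetic difference is an off-by-one in indexing (you use the chain convention where the $i$-shift uncolours $uw_{i+1}$, while the paper's fan convention uncolours $uw_i$), which does not affect the argument.
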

\begin{proof}
Let $uw_1, \dots, uw_i$ be the first $i$ edges of the shifted strictly local fan with center $u$.
Note that after shifting $uw_1, \dots, uw_i$, the contribution of $u$ to $\Phi$ will not change, as the same colours are incident to $u$. The new available colours at $w_2, \dots, w_i$ contribute at most $i-1$ extra to the potential after the shift, but -- by construction -- the now no-longer available colours at $w_1, \dots, w_{i-1}$ drop the potential by exactly $i-1$ after the shift.
\end{proof}
We then show that if $F+P$ is a strictly local 1-step Vizing chain constructed as above, then the following restrictions on $P$ holds:
\begin{proposition} \label{prop:pathEP}
Let $F+P$ be a strictly local truncated Vizing chain constructed as above. Suppose $F$ contains the edges $uw_0, \dots, uw_k$ and that the colour chosen as available at $w_k$ was also chosen as available for $w_i$. 
Then the last edge of $P$ cannot have $u, w_k$ or $w_{i}$ as an endpoint. 
\end{proposition}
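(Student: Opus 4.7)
The plan is to invoke Observation~\ref{obs:SL} and then do a short case analysis on the candidate endpoints $u$, $w_k$, and $w_i$. Writing $xy$ for the last edge of $P$, Observation~\ref{obs:SL} tells us that the colour $c'(xy)$ that $xy$ would receive under the full swap of $\kappa_1$ and $\kappa_2$ on $P'$ must satisfy $c'(xy) > \max\{d(x),d(y)\}+1$, and in particular $c'(xy) > d(v)+1$ for every endpoint $v$ of $xy$.

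The key pattern I will exploit is that at each of the three distinguished vertices, one of the bichromatic colours is both available there and at most that vertex's degree plus one. Specifically, $\kappa_1 \in A(u)\cap[d(u)+1]$ by the construction's choice of $\kappa_1$; $\kappa_2 \in A(w_k)\cap[d(w_k)+1]$ because $\kappa_2$ is $w_k$'s representative available colour and every representative available colour at $w_j$ is drawn from $A(w_j)\cap[d(w_j)+1]$; and similarly $\kappa_2 \in A(w_i)\cap[d(w_i)+1]$ by the hypothesis that $\kappa_2$ was also chosen as representative available colour at $w_i$.

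Now for each candidate endpoint $v \in \{u, w_k, w_i\}$ and its matching ``safe'' colour $\kappa_v \in \{\kappa_1,\kappa_2\}$ (so $\kappa_u=\kappa_1$ and $\kappa_{w_k}=\kappa_{w_i}=\kappa_2$), the argument runs: since $c(xy) \in \{\kappa_1,\kappa_2\}$ but $\kappa_v \in A(v)$, the edge $xy$ cannot currently be coloured $\kappa_v$, so $c(xy)$ is the other colour in $\{\kappa_1,\kappa_2\}$ and therefore $c'(xy) = \kappa_v \leq d(v)+1$. This contradicts $c'(xy) > d(v)+1$, so $v$ cannot be an endpoint of $xy$. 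Applying this to $v=u$, $v=w_k$, and $v=w_i$ in turn finishes the proof.

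I do not anticipate any substantive obstacle here: the proposition is essentially a packaging of Observation~\ref{obs:SL} together with the three ``available-and-small'' inclusions enforced by the strictly local fan construction. The only point requiring care is pairing each candidate endpoint with the correct colour from $\{\kappa_1,\kappa_2\}$, which is handled by noting that $\kappa_1$ is available at $u$ while $\kappa_2$ is available at both $w_k$ (by construction) and $w_i$ (by hypothesis).
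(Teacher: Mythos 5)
Your proof is correct and follows the same route as the paper's: for each of $u$, $w_k$, $w_i$ one of $\kappa_1,\kappa_2$ is available there and lies in $[d(\cdot)+1]$ by the strictly local construction, so the truncation edge would be recoloured with that safe colour and could not violate strict locality, contradicting the truncation rule. The only difference is cosmetic — you package the three cases uniformly via the ``safe colour'' pattern, whereas the paper argues each case separately (identifying the edge as $uw_{i+1}$ in the $u$ case, etc.).
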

\begin{proof}
Suppose that $P$ is $(\kappa_1, \kappa_2)$-bichromatic and that $\kappa_1$ was chosen as the representative available colour at $u$ and $\kappa_2$ as the representative available colour at $w_k$. 
If the last vertex of $P$ was $u$, then the last edge of $P$ would have to be $uw_{i+1}$ and it would have to be coloured $\kappa_2$. Since $\kappa_1$ is available at $u$ by construction, we reach a contradiction with Observation~\ref{obs:SL}.

The last vertex of $P$ can never be $w_k$, as in this case $P$ would be augmenting and $\kappa_1 = \kappa_2$. If $w_k$ is the second last vertex, we also reach a contradiction, since the the last edge of $P$ must have been coloured $\kappa_1$ initially, and hence it may be coloured $\kappa_2$ afterwards, which is available at $w_k$ by assumption.

Finally suppose that the last edge of $P$ has $w_i$ as an endpoint. Then since $\kappa_2$ was chosen as an available colour at $w_i$, it follows that the path has to end at $w_i$ and, furthermore, that it can never be problematic that the last edge changes its colour to $\kappa_2$, so this is also a contradiction. 
\end{proof}
We are now ready to prove the main lemma of this section, which says that the potential has to drop, when shifting this particular kind of strictly local truncated Vizing chain:
\begin{lemma} \label{lma:mainLVT}
Let $F+P$ be a strictly local truncated Vizing chain constructed as before. Then shifting $F+P$ to go from the pre-colouring $c_1$ to the post-colouring $c_2$ drops the potential by at least 1 i.e.\ $\Phi(G,c_1) \geq \Phi(G,c_2) + 1$
\end{lemma}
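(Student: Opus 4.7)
The plan is to analyse the change in $\Phi$ across the two parts of the shift separately. By Proposition~\ref{prop:fanShift} the fan shift does not increase $\Phi$, so it suffices to show that the subsequent bichromatic-path shift strictly decreases $\Phi$ by at least $1$.

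Write the bichromatic part of $P$ as $(uw_k, f_1, \dots, f_t)$ with $f_i = y_{i-1}y_i$ and $y_0 = w_k$, and let $\alpha$ denote the pre-shift colour of the truncated edge $f_t$ and $\beta$ the other colour in $\{\kappa_1, \kappa_2\}$. First I would observe that the bichromatic shift preserves the multiset of colours incident to every internal vertex $y_i$ with $0 < i < t-1$ (the two adjacent path-edges simply exchange colours), so only $A(u), A(w_k), A(y_{t-1}), A(y_t)$ can change. Proposition~\ref{prop:pathEP} guarantees that $y_{t-1}$ and $y_t$ are not among $\{u, w_k, w_i\}$, so these four affected vertices are pairwise distinct and their contributions to $\Phi$ can be summed independently.

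Next I would tabulate the four changes. At $u$, colouring $uw_k$ with $\kappa_1$ removes $\kappa_1$ from $A(u)$; since $\kappa_1$ was chosen as the representative available colour at $u$, we have $\kappa_1 \leq 1+d(u)$, so $\Phi$ loses $1$ here. At $w_k$, flipping $f_1$ from $\kappa_1$ to $\kappa_2$ removes $\kappa_2$ from $A(w_k)$, and $\kappa_2 \leq 1+d(w_k)$ for the same reason, so $\Phi$ loses another $1$. At $y_t$, uncolouring $f_t$ adds at most the single colour $\alpha$ to $A(y_t)$, contributing at most $+1$ to $\Phi$. The crucial case is $y_{t-1}$: flipping $f_{t-1}$ from $\beta$ to $\alpha$ adds $\beta$ to $A(y_{t-1})$, and invoking Observation~\ref{obs:SL} together with the fact that $f_t$ was selected as the \emph{first} edge of $P'$ whose post-shift colour would violate strict locality yields $\beta > 1 + \max\{d(y_{t-1}), d(y_t)\} \geq 1 + d(y_{t-1})$; hence $\beta \notin [1 + d(y_{t-1})]$ and this gain is not counted by $\Phi$.

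Summing, the bichromatic shift changes $\Phi$ by at most $(+1) - 2 = -1$, and combined with the non-increase from the fan shift this gives $\Phi(G,c_2) \leq \Phi(G,c_1) - 1$ as required. I expect the last bookkeeping step to be the main obstacle: it is precisely the truncation rule --- stopping at the first locality-violating edge --- together with Observation~\ref{obs:SL} that forces the otherwise-dangerous newly available colour $\beta$ at $y_{t-1}$ to be too large to count, so this is where the specific construction from earlier in the section is essential.
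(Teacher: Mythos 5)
Your proof is correct and follows essentially the same route as the paper's: reduce to the path shift via Proposition~\ref{prop:fanShift}, isolate the four affected vertices using Proposition~\ref{prop:pathEP}, charge $-1$ each to $u$ and $w_k$, at most $+1$ to the far endpoint of the truncated edge, and use the truncation rule together with Observation~\ref{obs:SL} to show the near endpoint's newly available colour is too large to count. Your tabulation at $y_{t-1}$ and $y_t$ is in fact slightly more explicit than the paper's wording, but the argument is the same.
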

\begin{proof}
Suppose that $P$ is $(\kappa_1,\kappa_2)$-bichromatic, and suppose that $\kappa_1$ was chosen as available at $u$. Observe first that the contribution of $u$ and $w_k$ will drop by $2$ after the shift of $F+P$. 
Indeed, after the shift $u$ will now be incident to an edge coloured $\kappa_1 \in [d(u)+1]$, and $w_k$ will now be incident to an edge coloured $\kappa_2 \in [d(w_k)+1]$. 
By Proposition~\ref{prop:fanShift} and Proposition~\ref{prop:pathEP} neither $u$ nor $w_k$ lose any colours incident to them, and hence their contribution to the potential has to decrease by at least $2$. 

Observe secondly that if $xy$ is the last edge of $P$, then $c(xy) = \min \{\kappa_1, \kappa_2 \}$ and so by Observation~\ref{obs:SL} the contribution of $x$ is the same before and after the shift, and the contribution of $y$ drops by at most $1$. 

Finally, note that any interior vertex of $P$ sees the same colours before and after the shift, and so their contribution to the potential is unchanged. Hence, summing up the changes in contribution yields:
\[
\Phi(G,c_2) \leq \Phi(G,c_1) - 2 + 1 \leq \Phi(G,c_1) - 1
\]
\end{proof}
Now we may prove the local version of Vizing's Theorem:
\begin{proof}[Proof of Theorem~\ref{thm:LVT}]
Suppose we are given a strictly local partial colouring of $G$ with a minimum number of uncoloured edges. If no edge is uncoloured, we are done, so we may assume this is not the case and pick an arbitrary uncoloured edge $e = uv$. 
Next, we build a strictly local multi-step Vizing chain on $e$ as described earlier: construct a strictly local maximal fan on $e$ with edges $uw_1, \dots, uw_k$. If it is augmenting, we are done, so assume it is not. 
Then we extend by a subpath of a bichromatic path alternating in suitably chosen colours as explained earlier. Finally, we truncate the Vizing chain if we meet an edge that would invalidate strict locality after the shift. 
If this happens, we repeat the same process to extend to a multi-step Vizing chain of one further step-length. 
The process is repeated on the post-shift colouring of the multi-step Vizing chain that we wish to extend which is not necessarily identical to the original colouring.
By Lemma~\ref{lma:mainLVT} the potential of the post-shift colouring drops by at least one everytime we perform this construction, and so we create an augmenting strictly local Vizing chain after at most $O(n\Delta)$ steps. 
By augmenting this Vizing chain, we construct a larger strictly local (partial) colouring -- a contradiction.
\end{proof}
Note that this approach is constructive and also gives an algorithm for computing a local Vizing colouring. 
Observe that increasing the size of the colouring never increases the potential, and so we have to create and shift $O(m+n\Delta)$ Vizing chains. 
Each such chain can be shifted in $O(\Delta + n)$ time, and so we arrive at Corollary~\ref{cor:algoLVT}, which we repeat below for the readers' convenience:
\begin{corollary}[Corollary~\ref{cor:algoLVT}]
There is an an algorithm that runs in $O(n^2 \Delta)$-time for computing an edge-colouring such that every edge receives a colour from the list $L(uv) = [1+\max \{d(u),d(v)\}]$. 
\end{corollary}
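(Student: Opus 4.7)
The plan is to extract an explicit algorithm from the constructive proof of Theorem~\ref{thm:LVT} and bound its running time using the potential $\Phi$ itself as a progress measure. The algorithm starts from the empty strictly local partial colouring and maintains, for every vertex $v$, its current degree $d(v)$ together with a length-$(\Delta+1)$ bit vector indicating which colours are incident at $v$. While there is an uncoloured edge $e=uv$, we build a strictly local multi-step Vizing chain on $e$ exactly as in Section~\ref{sec:app1}: grow a maximal strictly local fan at $u$, attach a suitably chosen bichromatic path, truncate at the first edge whose post-shift colour would violate strict locality, shift, and iterate on the new uncoloured edge until the chain becomes augmenting, at which point we colour the freed edge.

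First, I would count chain operations globally. The initial potential satisfies $\Phi \leq \sum_v (d(v)+1) = 2m+n = O(n\Delta)$ and $\Phi \geq 0$ throughout. By Lemma~\ref{lma:mainLVT}, every shift of a truncated strictly local Vizing chain drops $\Phi$ by at least $1$. A small extension of Propositions~\ref{prop:fanShift} and~\ref{prop:pathEP} (the fan part is already covered, and for the path part the two endpoints of the augmenting edge each lose one admissible available colour while no interior vertex changes its contribution) shows that shifting an augmenting strictly local Vizing chain and then colouring the freed edge never increases $\Phi$. Since exactly $m$ augmenting shifts occur, we obtain $O(m + n\Delta) = O(n\Delta)$ total chain operations across the whole execution.

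Next, for the cost of a single operation, the fan has at most $\Delta$ edges and each extension step (querying $A(w_j)\cap[d(w_j)+1]$ and checking whether a candidate colour is already present at $u$) takes $O(\Delta)$ time using the bit vectors. The bichromatic path has length at most $n$, and both the trace and the shift are $O(1)$ per edge with constant-time bit-vector updates at the endpoints. Thus each chain costs $O(n+\Delta)$ after a careful implementation that reuses partial scans of $A(u)$ across fan steps, and multiplying by the $O(n\Delta)$ bound above yields the claimed $O(n^2 \Delta)$ running time.

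The main obstacle I anticipate is the potential-conservation claim for the \emph{augmenting} step, since Lemma~\ref{lma:mainLVT} was stated only for truncated chains. The key observation, analogous to Proposition~\ref{prop:pathEP}, is that at the moment of augmentation one has $\kappa_1 \in A(u)\cap[d(u)+1]$ and $\kappa_2 \in A(w_k)\cap[d(w_k)+1]$, so both endpoints strictly lose one admissible colour when the freed edge is coloured, while the fan part contributes no net increase by Proposition~\ref{prop:fanShift}; together these bookkeeping facts suffice to conclude that $\Phi$ does not grow across augmenting steps and close the argument.
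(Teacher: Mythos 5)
Your proposal is correct and follows essentially the same route as the paper: the paper likewise observes that the construction behind Theorem~\ref{thm:LVT} is algorithmic, that augmenting steps never increase $\Phi$ while truncated shifts decrease it (giving $O(m+n\Delta)$ chain operations), and that each chain is built and shifted in $O(\Delta+n)$ time, yielding $O(n^2\Delta)$ overall. Your additional bookkeeping for the augmenting step is a welcome elaboration of the paper's one-line claim that ``increasing the size of the colouring never increases the potential,'' though note that in some augmenting cases the far endpoint of the bichromatic path may gain an admissible colour, which is compensated by the guaranteed losses at $u$ and $w_k$.
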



\section{Application 2: Small augmenting subgraphs} \label{sec:app2}
The basic observation underlying our argument is the following: the average length bichromatic paths are short. In fact, it is sufficient to consider an even more basic fact: a graph with maximum degree $\Delta$ and $n$ vrtices can contain at most $n\Delta/2$ edges. In particular this means that we can employ a density based argument: if we want to pack a lot of long bichromatic paths into our graph, then the graph has to -- simply put -- be big. To that end we try to construct a short Vizing chain similarly to Bernshteyn~\cite{BERNSHTEYN}. But where Bernshteyn uses a probabilistic analysis to show that with non-zero probability we find a short augmenting Vizing chain, we will instead apply a density based argument, and show that if we do not succeed in finding a short Vizing chain of step-length $i$, then if we truncate and grow a new set of bichromatic paths, we will reach an exponentially increasing amount of new vertices. 
This allows us to get rid of the $\log n$ term in the length of each of the bichromatic paths, and hence we will get Vizing chains of size $O(\poly(\Delta)\log{n})$.

In this section, we will prove Theorem~\ref{thm:LOCALVT}, which we restate below for convenience.
\begin{theorem}[Theorem~\ref{thm:LOCALVT}]
Given a proper partial edge-colouring of a graph as well as an uncoloured edge $e$, there exists an augmenting multi-step Vizing chain of $e$ of size at most $O(\Delta^7 \log n)$.
\end{theorem}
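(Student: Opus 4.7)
The plan is to iterate the extension in Lemma~\ref{lma:fan2}, truncating each step's bichromatic path to have length at most $l := C\Delta^7$ for a sufficiently large constant $C$, and then to argue via a density/counting argument that an augmenting chain must be produced within $\log n$ steps. I would assume for contradiction that no augmenting non-overlapping multi-step Vizing chain on $e$ of size $O(\Delta^7 \log n)$ exists; in particular, this truncated construction never yields an augmenting chain at any step $i \leq \log n$. Following the high-level overview, recursively define $R_i$ to be the set of vertices reachable as the terminal vertex of some non-overlapping $i$-step chain built by this process (with all step-lengths $\leq l$), and let $N_i \subseteq R_i$ be the subset of those vertices at which the chain can be \emph{safely} extended, in the sense that some application of Lemma~\ref{lma:fan2} yields a Vizing chain whose path does not overlap earlier components of the chain.

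The heart of the proof is a pair of growth inequalities. The first is a density inequality $|R_i| \geq \Omega(l/\Delta^3) \cdot |N_{i-1}|$. For each $v \in N_{i-1}$, fix one non-overlapping $(i-1)$-step chain terminating at $v$ and extend through $v$ by a fan and a length-$l$ bichromatic path. Let $H$ be the subgraph of $G$ formed by the union of these bichromatic paths, so $V(H) \subseteq R_i \cup N_{i-1}$ and the length-$l$ contribution from each $v \in N_{i-1}$ gives a total edge-count (with multiplicity) of $l \cdot |N_{i-1}|$. Each coloured edge of $G$ belongs to at most $O(\Delta)$ distinct bichromatic paths (one per choice of the second colour), and each bichromatic path can occur in at most $O(\Delta)$ of our Vizing chains (one per neighbour of each endpoint as fan-center), so the multiplicity of any edge in $H$ is $O(\Delta^2)$. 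Hence $l \cdot |N_{i-1}| \leq O(\Delta^2) \cdot |E(H)|$; combining with $|E(H)| \leq (\Delta/2)|V(H)|$ yields $|R_i| \geq |V(H)| - |N_{i-1}| \geq \Omega(l/\Delta^3) |N_{i-1}|$.

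The second is a safety inequality $|N_i| \geq |R_i|/2$. Call a vertex in $R_i$ \emph{unfit} if no valid extension through it avoids overlapping an earlier portion of the chain. Such an overlap is witnessed either by the new fan meeting a previously-used vertex, or by the new bichromatic path meeting one. Using that each vertex lies on $O(\Delta^2)$ bichromatic paths and hence on $O(\Delta^3)$ Vizing chains, and that any such overlap must be caused by a point in the one-hop neighbourhood of the previously built chain (covered by $\bigcup_{k=1}^{i-1} R_k \cup N_{i-1}$), a union bound gives the number of unfit vertices is at most $O(\Delta^4)\bigl(\sum_{k=1}^{i-1} |R_k| + |N_{i-1}|\bigr)$. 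Inductively assuming the two inequalities so far, $\sum_{k=1}^{i-1}|R_k| = O(|R_{i-1}|)$ by geometric growth, and by picking $C$ large enough the unfit count is bounded by $|R_i|/2$.

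Combining the two inequalities yields $|R_i| \geq 2|R_{i-1}|$ whenever no augmenting chain has been found, so $|R_i| \geq 2^i$, contradicting $|R_i| \leq n$ once $i > \log n$. Consequently the construction must produce an augmenting non-overlapping $i$-step Vizing chain for some $i \leq \log n$, of total size $O(l \log n) = O(\Delta^7 \log n)$. The main obstacle will be the bookkeeping in the density inequality --- carefully bounding the edge-multiplicity in $H$ when many $(i-1)$-step chains share structure --- and the safety inequality, where one must enumerate precisely the ways in which an extension can create an overlap and charge each to a vertex of the existing chain via the $O(\Delta^3)$-chains-per-vertex bound.
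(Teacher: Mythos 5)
Your proposal is correct and follows essentially the same route as the paper: the same sets $R_i$ and $N_i$, the same density argument on the subgraph $H$ spanned by the truncated bichromatic paths (with the $O(\Delta^2)$ edge-multiplicity bound and the $\Delta/2$ density bound), the same count of unfit vertices via the $O(\Delta^3)$-chains-per-point bound, and the same exponential-growth conclusion forcing termination within $\log n$ steps with $l = \Theta(\Delta^7)$. The bookkeeping you flag as the remaining obstacle is exactly what the paper's Lemmas on packing and reaching (and the explicit constants in its main inductive lemma) supply.
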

The idea is to first construct an augmenting Vizing chain of $e$ -- we know one exists by Vizing's Theorem. If it is too long, we will only consider the first $\ell$ vertices along the bichromatic path of the Vizing chain. 
These vertices form the set $R_{1}$, i.e.\ it is the set of vertices reachable by a $1$-step Vizing chain from $u$. 
To each vertex of $R_1$, we may associate a unique edge among the first $\ell$ edges of the bichromatic path, such that every vertex is incident to the associated edge (simply associate the $i^{\textrm{th}}$ edge to the $i^{\textrm{th}}$ vertex). 
Now we will deem some of these edges \emph{unfit} to be the basis for a new Vizing chain. In particular, for each associated edge consider the act of shifting the Vizing chain ending at that edge, and then applying Lemma~\ref{lma:fan2} and the following discussion to construct a Vizing chain extension on that edge. 
Since we later will count how many such Vizing chains an edge can be a part of, we only wish to shift the colour of an edge once. 
Thus we will say a point or its associated edge is unfit for shifting if the resulting Vizing chain contains edges that are also included in the multi-step Vizing chain used to reach the edge (in fact we will be slightly stricter). By removing unfit edges -- and by extension their associated vertices -- from contention, we end up with a new set of vertices $N_{1}$. Now we are safe to begin new Vizing chains from these vertices, and so we will do exactly that. The new reachable vertices then form $R_2$, and we will repeat this construction until we find a short, augmenting and non-overlapping Vizing chain. 
The key property is that the size of $N_{i}$ will grow exponentially -- even if we only set $\ell = O(\poly(\Delta))$. The reason for this is that on average when we construct a new Vizing chain, most of the cut-off points will not be unfit.

The proof will proceed by induction, where we will show that given a properly constructed set $N_{i-1}$, we must have that $N_{i}$ is at least a constant factor larger. First we will give some formal definitions. 
\begin{definition} \label{def:Ri}
Given a proper partial colouring and an uncoloured edge $e$, we let $v \in R_i(e,\ell)$ denote the set of vertices such that there exists a non-overlapping $i$-step Vizing chain from $e$ ending at $v$ such that every Vizing chain in the $i$-step Vizing chain has length at most $\ell$.
\end{definition}
\begin{definition} \label{def:Ni}
Given a proper partial colouring and an uncoloured edge $e$, we let $v \in N_i(e,\ell) \subset R_i(e,\ell)$ if there exists a non-overlapping $i$-step chain from $e$ ending at $v$, and, furthermore, the multi-step Vizing chain may be extended to a non-overlapping $(i+1)$-Vizing chain by concatenating a Vizing chain of length at most $\ell$ at $v$.
\end{definition}
Often we will suppress the arguments $e$ and $\ell$ if their values are clear from the context. It is also important to note that even though $v \in N_{i}$ may be reached via multiple $i$-step Vizing chains or may be extended by different $i$-step Vizing chains, we will always construct only one specific extension for $v$. Before we state the main lemma of this section, we present an observation and some lemmas, which we will use to control our density-based argument.

\subsection{Controlling non-overlapping Vizing chains}
We will use the following simple upper bound on the density of a graph with maximum degree $\Delta$:
\begin{observation} \label{obs:density}
Let $G$ be a graph with maximum degree $\Delta$, and let $H$ be any subgraph of $G$. Then $\frac{|E(H)|}{|V(H)|} \leq \frac{\Delta}{2}$
\end{observation}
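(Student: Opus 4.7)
The plan is a one-line application of the handshake lemma. Since $H$ is a subgraph of $G$, every vertex $v \in V(H)$ satisfies $\deg_H(v) \leq \deg_G(v) \leq \Delta$. Summing this inequality over $V(H)$ and applying the handshake lemma yields
\[
2|E(H)| \;=\; \sum_{v \in V(H)} \deg_H(v) \;\leq\; \Delta \cdot |V(H)|,
\]
and dividing both sides by $2|V(H)|$ (assuming $V(H) \neq \emptyset$, otherwise the statement is vacuous or interpreted as $0/0$) gives the desired bound $|E(H)|/|V(H)| \leq \Delta/2$.

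There is essentially no obstacle here: the inequality $\deg_H(v) \leq \deg_G(v)$ holds simply because $E(H) \subseteq E(G)$, and the degree bound in $G$ is given. The only minor thing to mention is that isolated vertices in $V(H)$ (if any) can only strengthen the denominator, so they do not cause trouble. I would present the proof in two sentences without further ceremony, since the observation is only needed later as a density tool for bounding $|R_i|$ in terms of $|E(H)|$ in the main argument of the section.
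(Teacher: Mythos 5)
Your proof is correct and is essentially identical to the paper's: both apply the handshake lemma to $H$ and bound each $\deg_H(v)$ by $\Delta$. The extra remarks about isolated vertices and the empty case are fine but unnecessary.
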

\begin{proof}
Observe that $2|E(H)| = \sum \limits_{v \in H} d(v) \leq \Delta |V(H)|$ by the handshaking lemma. 
\end{proof}
In order to control the expansion of $R_i$, we will use the following lemma which shows that one cannot pack many Vizing chains into a subgraph containing few edges:
\begin{lemma} \label{lma:packingLemma}
Let $S$ be a set of non-overlapping Vizing chains in a graph $G$ of maximum degree $\Delta$ such that each Vizing chain is constructed on a different vertex. Then any coloured edge in $G$ belongs to at most $4\Delta^2$ of the Vizing chains.
\end{lemma}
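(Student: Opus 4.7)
The plan is to fix an arbitrary coloured edge $e=xy$ of colour $\kappa_1 = c(e)$ and upper bound the number of Vizing chains $F+P \in S$ that contain $e$ by splitting on whether $e$ lies in the fan $F$ or in the bichromatic path chain $P$. Throughout the argument I will use the hypothesis ``each chain of $S$ is constructed on a different vertex'' to mean that distinct chains of $S$ have distinct fan centres, so that each vertex $u \in V(G)$ is the centre of at most one chain of $S$.

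For the fan case, recall that in a fan $F = (uw_1,uw_2,\dots,uw_k)$ only the initial edge $uw_1$ is uncoloured. Hence if $e \in F$ then $e = uw_i$ for some $i \ge 2$, which forces the centre $u$ to be an endpoint of $e$, i.e.\ $u \in \{x,y\}$. By the distinct-centres assumption, at most two chains of $S$ can have $e$ in their fan.

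For the path case, suppose $e$ lies on the bichromatic path chain $P$ of some $F+P \in S$, say $P$ uses colour classes $\kappa_1$ and $\kappa_2$. The second colour $\kappa_2$ ranges over $[\Delta+1]\setminus\{\kappa_1\}$, giving at most $\Delta$ choices. Once $\kappa_1,\kappa_2$ are fixed, $e$ lies in a unique component of $G[\kappa_1,\kappa_2]$, which is a path with exactly two endpoints. Since $P$ is a shiftable bichromatic path chain rooted at the vertex $w_k$ (the final vertex of $F$) and $\kappa_2 \in A(w_k)$, the vertex $w_k$ must be an endpoint of that component, contributing a factor of $2$. Finally, the centre $u$ is adjacent to $w_k$ through the edge $uw_k$ that glues the fan to the path, giving at most $\Delta$ choices for $u$ among the neighbours of $w_k$, and then at most one chain per centre. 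Multiplying, at most $\Delta\cdot 2\cdot \Delta = 2\Delta^2$ chains of $S$ can contain $e$ on their bichromatic path.

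Summing the two cases gives at most $2 + 2\Delta^2 \le 4\Delta^2$ chains of $S$ containing $e$, as required. The non-overlapping assumption on individual chains is not actually needed here; the whole force of the argument comes from the injectivity of chain-to-centre. The only place where I might have to be careful is confirming that ``$w_k$ is an endpoint of the $(\kappa_1,\kappa_2)$-component through $e$'' — but this is immediate from $\kappa_2 \in A(w_k)$ together with shiftability of $P$, so I don't expect a real obstacle.
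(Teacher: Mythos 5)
Your proposal is correct and follows essentially the same argument as the paper: at most $2$ chains can contain $e$ in their fan (the centre must be an endpoint of $e$, and centres are distinct), and at most $\Delta \cdot 2 \cdot \Delta = 2\Delta^2$ chains can contain $e$ in their bichromatic path (choice of second colour, choice of endpoint of the resulting component, choice of neighbouring centre). The paper likewise does not use the non-overlapping hypothesis here, only the one-chain-per-vertex assumption.
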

\begin{proof}
Let $e$ be an arbitrary coloured edge. Note first that $e$ belongs to the fan of at most $2$ of the Vizing chains in $S$, since we have at most one Vizing chain from $S$ on each point, and $e$ can only belong to the fan of a Vizing chain constructed on one of its endpoints. 

Note secondly that $e$ is part of at most $\Delta$ bichromatic paths -- one for each choice of second colour. Each bichromatic path is associated to at most $2\Delta$ Vizing chains, since the bichromatic path can only be used as a path component of a Vizing chain constructed at a neighbour of one of the endpoints of the bichromatic path.

Summing the above shows that each edge is part of at most $4\Delta^2$ Vizing chains from $S$. 
\end{proof}
Later we will also need to control the number of Vizing chain that intersects a vertex in a subgraph $H$. 
In order to not have to worry about exactly which fans -- and hence Vizing chains -- that are constructible, we will upper bound the number of points that have a neighbour that is the endpoint of a maximal bichromatic path that reaches $H$ instead. 
Regardless of which fans are actually realizable, any Vizing chain beginning at a vertex $v \in G-N^{1}(H)$ cannot reach $H$ unless $v$ has such a neighbour. 
Hence this upper bound also holds for the number of vertices in $G-N^{1}(H)$ that have a Vizing chain reaching $H$.
An argument with a similar flavour to the one above shows that:
\begin{lemma} \label{lma:reachingLemma}
Let $H \subset G$ be a subgraph of $G$ and let $c,c'$ be two (partial) $(\Delta+1) $edge-colouring of $G$ differing only on $H$. 
Then, under both $c$ and $c'$, the same at most $2(\Delta+1)^3 \cdot{} |H|$ points in $G-N^{1}(H)$ can reach $H$ via a maximal bichromatic path located at a neighbour $y$.
Here $N^{1}(H)$ denotes the one-hop neighbourhood of $H$ i.e.\ the vertices in $G$ that has distance at most one from $H$ in $G$.
\end{lemma}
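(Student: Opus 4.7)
The plan is to adapt the packing argument of Lemma~\ref{lma:packingLemma} from single coloured edges to the whole subgraph $H$. The key structural observation is that if $v \in G - N^{1}(H)$ then every neighbour of $v$ lies outside $V(H)$, so the fan $F$ of any Vizing chain $F+P$ centered at $v$ is vertex-disjoint from $V(H)$; consequently the bichromatic path chain $P$ must be the part that reaches $H$, and in particular $P$ must contain a coloured edge from the set $E^{*}(H)$ of edges of $G$ having at least one endpoint in $V(H)$.

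First I would bound $|E^{*}(H)| \le \Delta \cdot |H|$ by the handshake bound. Each coloured edge $e \in E^{*}(H)$ of colour $\kappa$ lies in exactly $\Delta$ bichromatic paths, one for each choice of a second colour $\kappa' \neq \kappa$, because in $G[\kappa, \kappa']$ every vertex has degree at most $2$. Each such bichromatic path is a path component with exactly two endpoints, and in a Vizing chain the root $w_k$ of the bichromatic path chain must be adjacent to the center $v$; thus each such path supplies at most $2\Delta$ candidate centers. Multiplying these factors gives at most $\Delta |H| \cdot \Delta \cdot 2\Delta \le 2(\Delta+1)^{3} |H|$ triples of the form (coloured edge, bichromatic path, center). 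Since distinct centers give rise to disjoint sets of triples, this upper-bounds the number of centers reaching $H$ under any fixed colouring.

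Second, to show that the centers reaching $H$ form the \emph{same} set under $c$ and $c'$, I would exploit the fact that edges of $H$ have both endpoints in $V(H)$, so every edge with an endpoint outside $V(H)$ lies outside $E(H)$ and therefore has identical colour under $c$ and $c'$. Given a Vizing chain at $v \in G - N^{1}(H)$ reaching $H$ under $c$, I would truncate its bichromatic path chain at the first vertex of $V(H)$ it visits; every edge of the resulting chain then has at least one endpoint outside $V(H)$, hence lies outside $E(H)$ and has the same colour under both colourings. The truncated chain is therefore a valid Vizing chain at $v$ reaching $V(H)$ under $c'$ as well, and the symmetric argument gives the reverse inclusion.

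The main obstacle is justifying that the prefix obtained by this truncation is really a Vizing chain in the formal sense of Section~\ref{sec:prelim}: one has to check that the fan at $v$ is unchanged under $c'$ (its edges lie outside $E(H)$), and that the shortened bichromatic path chain still has the required alternating $(\kappa_{1},\kappa_{2})$ colouring and shiftability, precisely because every edge of the prefix is outside $E(H)$. Once this is in place, the two steps above combine directly to yield both the cardinality bound and the equality of the reachable sets under $c$ and $c'$.
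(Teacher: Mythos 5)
Your proposal is correct and follows essentially the same route as the paper: the same observation that chains from centers outside $N^{1}(H)$ look identical under $c$ and $c'$ until they first touch $V(H)$, followed by the same ``bichromatic paths through $H$ times $2\Delta$ candidate centers per path'' counting. The only cosmetic difference is that you count via the at most $\Delta|H|$ edges incident to $V(H)$ (giving $2\Delta^{3}|H|$), while the paper counts via the vertices of $H$ directly, charging at most $(\Delta+1)^{2}$ paths per vertex (giving $2\Delta(\Delta+1)^{2}|H|$); both land within the stated bound.
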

\begin{proof}
We first note that since $c$ and $c'$ only differ on $H$, any maximal bichromatic path located at a neighbour $y$ reaching $H$ looks the same in both $c$ and $c'$ until it reaches the first vertex in $H$, and so the vertices in $G - N^{1}(H)$ reaching $H$ via a maximal bichromatic path located at a neighbour $y$ under $c$ and $c'$ are identical.

Hence we only need to bound the number of vertices in $G-N^{1}(H)$ that reach $H$ via a maximal bichromatic path located at a neighbour $y$ under $c$. 

For a vertex $v$ in $H$, we note that it belongs to at most $(\Delta+1)^2$ bichromatic paths in $G-H$ that intersect $H$ for the first time in $v$. One for each pair of colours. 
Each bichromatic path can begins at a neighbour of at most $2\Delta$ vertices -- namely the neighbours of the endpoints of the bichromatic paths. 
Hence, in total we have at most $2(\Delta + 1)^3$ such vertices for each vertex in $H$, and so the lemma follows.
\end{proof}

In the next section, we will consider many different $(i+1)$-multi-step Vizing chains at once, and we would like to use Lemma~\ref{lma:reachingLemma} to argue that not too many of these multi-step Vizing chains can be overlapping at the same time.
In order to do so, we follow a two step approach. First we note that when we extend an $i$-step Vizing chain via Lemma~\ref{lma:fan2} very few extension points will extend via a Vizing chain that intersects the $i^{\textrm{th}}$ Vizing chain in a vertex in the fan or in an edge of the bichromatic path. In particular, if the $(i+1)^{\textrm{th}}$ Vizing chain intersects the $i^{\textrm{th}}$ Vizing chain, then it intersects $P_i$ in a vertex and not an edge. This is due to the way the colours of $P_{i+1}$ is chosen in Lemma~\ref{lma:fan2}.

Hence, we can limit ourselves to consider extensions through points where the $(i+1)^{\textrm{th}}$ Vizing chain does not overlap with the $i^{\textrm{th}}$ Vizing chain in any edges (but possibly in some vertices).
Hence, the challenge is to argue that if we consider many such extension points, not too many of them can intersect the first $(i-1)$ steps of the chains used to reach them. 
In order to argue this, we would like to be able to apply Lemma~\ref{lma:reachingLemma} with $H$ equal to the union of the first $(i-1)$ steps of all of the multi-step Vizing chains, we consider. 
The idea is that any extension that intersects $H$, has to be constructed on a point that has a neighbour $y$ that can reach $H$ via a maximal bichromatic path under both the $i$-shifted colouring $c'$ used to reach the extension point and the original colouring $c$.
Note here that $c'$ will differ for different extension points. 
This would then allow us to only reason about the original colouring $c$ as we do in the proof of Lemma~\ref{lma:reachingLemma}. 
However, the colourings we are considering are actually the ones obtained by shifting the first $i$ steps of the multi-step Vizing chain, and so each of these colourings might differ from $c$ not only on $H$, but also along the edges in the $i^{\textrm{th}}$ step. 
Since we ensure that we only consider extensions that are edge-disjoint from the $i^{\textrm{th}}$ step, any Vizing chain extension that intersects $H$ i.e.\ the first $(i-1)$ steps of the chain, has to have a neighbour $y$ that is an endpoint of a maximal bichromatic path that reaches $H$ in the original colouring $c$. 
To see this observe that by Lemma~\ref{lma:reachingLemma} shifting the first $i-1$ steps of the chains does not create new extension points with such a neighbour $y$ and finally that shifting the $i^{\textrm{th}}$ chain might create new choices of such neighbours $y$, but that these new choices of $y$ must reach $H$ via maximal bichromatic paths that overlap with the edges of the $i^{\textrm{th}}$ Vizing chain. 
Indeed, otherwise these bichromatic paths also exist in $c$. 
Hence, we can apply the bound from Lemma~\ref{lma:reachingLemma} to count the number of Vizing chain extensions that might intersect the first $(i-1)$ steps. 
From now on, we will be slightly imprecise and do this by saying that we apply Lemma~\ref{lma:reachingLemma}, even though we technically only apply it implicitly via the above discussion. 

Finally, we note that we do consider Vizing chains that go through vertices in $N^{1}(H)-H$, but that we never extend through points in $N^{1}(H)$. This is done to avoid the following technicality: it is unclear whether Vizing chains build on a vertex in $N^{1}(H)-H$ goes through $H$ or not. By not extending via these vertices, we avoid having to deal with this.

\subsection{Existence of a small augmenting subgraph}
We are now ready to show the following main lemma, which we will use to prove Theorem~\ref{thm:LOCALVT} by induction.
\begin{lemma} \label{lma:mainApp2}
Suppose we are given a proper partial colouring and an uncoloured edge $e$ in a graph of maximum degree $\Delta$, and let $\ell \geq 80(\Delta+1)^7$.
Suppose furthermore that $\bigcup \limits_{j}^{i} N_j(e,\ell)$ contains no vertices contained in an augmenting $i$-step Vizing chain with step length $\ell$, and that for all $1 \leq j \leq i-1$ the following holds
\begin{enumerate}
    \item $|N_{j}(e,\ell)| \geq \frac{\ell}{8\Delta^3}\sum\limits_{k=1}^{j-1} |N_k(e,\ell)|$
    \item $|N_j(e,\ell)| \geq |R_j(e,\ell)|/2$.
\end{enumerate}
Then condition $1)$ and $2)$ also hold for $j = i$.
\end{lemma}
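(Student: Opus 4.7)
The plan is to establish conditions 1) and 2) for $j=i$ separately, relying on a density-based counting argument for 1) and an overlap-counting argument for 2). The guiding intuition is that extending a single $(i-1)$-step Vizing chain through every point of $N_{i-1}$ produces a subgraph $H$ whose bichromatic path edges must be spread out (by the packing Lemma~\ref{lma:packingLemma}) and whose density is capped (by Observation~\ref{obs:density}), so $H$ must have many vertices, most of which fall into $R_i$. The second condition then follows because the only points of $R_i$ that cannot be extended to a non-overlapping $(i+1)$-step chain are those whose extensions are forced to hit the already-used portion of the construction, and Lemma~\ref{lma:reachingLemma} bounds how many such points there can be.

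For condition 1), I would fix, for each $v \in N_{i-1}$, one specific $i$-step extension (which exists by definition of $N_{i-1}$) and let $H$ be the subgraph induced by the edges of the $i$'th bichromatic paths of these chains. Each point of $N_{i-1}$ contributes $l$ edges counted with multiplicity, and by Lemma~\ref{lma:packingLemma} (applied to this non-overlapping family) every edge is counted at most $4\Delta^2$ times, giving $|E(H)| \ge l|N_{i-1}|/(4\Delta^2)$. Observation~\ref{obs:density} then yields $|V(H)| \ge l|N_{i-1}|/(2\Delta^3)$. Since the vertices in these bichromatic paths lie in $R_i \cup N_{i-1}$, and using the hypothesis that $\bigcup_{j\le i} N_j$ contains no augmenting vertex (so all these chains are truncated), I get
\[
|R_i| \;\ge\; \frac{l}{2\Delta^3}|N_{i-1}| - |N_{i-1}| \;\ge\; \frac{l}{4\Delta^3}|N_{i-1}|,
\]
where the last step uses $l \ge 80(\Delta+1)^7 \gg \Delta^3$.

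For condition 2), I need to count the ``unfit'' points in $R_i$: those $v \in R_i$ for which \emph{every} Lemma~\ref{lma:fan2}-extension at $v$ produces a chain that overlaps the non-overlapping $i$-step chain used to reach $v$. Such an overlap forces the $(i+1)$'th fan or path either to meet its own prior fan at a neighbour of $v$ or to reach a vertex used earlier in the multi-step chain. Using Lemma~\ref{lma:reachingLemma} (and the observation that any fixed point lies in at most $O(\Delta^3)$ Vizing chains), the number of such unfit points is bounded by $C\Delta^4 \bigl(\sum_{k=1}^{i-1}|R_k| + |N_{i-1}|\bigr)$ for an absolute constant $C$. Applying condition 2) at every $j<i$ gives $\sum_{k<i}|R_k| \le 2\sum_{k<i}|N_k|$, and condition 1) together with $l/(8\Delta^3) \gg 1$ gives geometric growth of the $N_k$'s, so $\sum_{k<i}|N_k| \le 2|N_{i-1}|$. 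Hence the number of unfit points is $O(\Delta^4 |N_{i-1}|)$, which, using the lower bound on $|R_i|$ and the choice $l \ge 80(\Delta+1)^7$, is at most $|R_i|/2$. This gives $|N_i| \ge |R_i|/2$.

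Finally I would verify that $|N_i| \ge \frac{l}{8\Delta^3}\sum_{k=1}^{i-1}|N_k|$ by combining $|N_i| \ge |R_i|/2 \ge \frac{l}{8\Delta^3}|N_{i-1}|$ with the geometric bound $\sum_{k=1}^{i-1}|N_k| \le (1+8\Delta^3/l)|N_{i-1}| \le (1+1/10)|N_{i-1}|$, losing at most a small constant factor that is easily absorbed by the slack in $l$. I expect the main obstacle to be the unfit-point count: one must set up the overlap taxonomy carefully (distinguishing overlaps in the new fan, overlaps of the new bichromatic path with previous bichromatic paths, and overlaps with previous fans) and invoke Lemma~\ref{lma:reachingLemma} to each in a way that cleanly yields a $O(\Delta^4)$ factor rather than a larger power of $\Delta$, since this is what dictates the $\Delta^7$ in $l$ and ultimately in the final bound of Theorem~\ref{thm:LOCALVT}.
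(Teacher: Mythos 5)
Your proposal follows the paper's proof essentially verbatim: condition 1) is the paper's Lemma~\ref{lma:boundR} (the subgraph $H$ spanned by the $i$'th-step bichromatic paths, combined with Lemma~\ref{lma:packingLemma} and Observation~\ref{obs:density}), and condition 2) is the paper's Lemma~\ref{lma:boundN} (counting unfit points via Lemma~\ref{lma:reachingLemma} together with the geometric decay $\sum_{k<i}|N_k|\le\frac{11}{10}|N_{i-1}|$). The only nit is in the final constant: to land exactly on $|N_i|\ge\frac{l}{8\Delta^3}\sum_{k<i}|N_k|$ you should apply $\sum_{k<i}|N_k|\le\frac{11}{10}|N_{i-1}|$ to the sharper estimate $|R_i|\ge\paren{\frac{l}{2\Delta^3}-1}|N_{i-1}|$ \emph{before} rounding down to $\frac{l}{4\Delta^3}|N_{i-1}|$ and halving, since with your ordering you lose a factor $\frac{10}{11}$ that cannot literally be ``absorbed into $l$'' ($l$ being fixed in the statement of condition 1).
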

We first show that if the conditions of Lemma~\ref{lma:mainApp2} are satisfied then $|R_i(e,\ell)|$ is big.
\begin{lemma} \label{lma:boundR}
Suppose the conditions of Lemma~\ref{lma:mainApp2} hold. Then 
\[
|R_i| \geq \paren{\frac{\ell}{2\Delta^3}-1}|N_{i-1}| \geq \frac{1}{4} \paren{\frac{\ell}{\Delta^3}} \sum\limits_{k=1}^{i-1} |N_k|
\]
\end{lemma}
\begin{proof}
By assumption, we may extend exactly one $(i-1)$-step Vizing chain through every vertex in $N_{i-1}$ to get non-overlapping $i$-step Vizing chains, where the length of the $i^{\textrm{th}}$ Vizing chain is at least $\ell$.
Consider the subgraph $H$ spanned by the $\ell$ first edges of each bichromatic path in the $i^{\textrm{th}}$ Vizing chains concatenated at the vertices in $N_{i-1}$. By definition of $R_i$, we have
\[
|V(H)| \leq  |N_{i-1}| +  |R_{i}|
\]
Furthermore, since we only construct one Vizing chain from every point in $N_{i-1}$, an edge in $H$ belongs to at most $4\Delta^2$ different Vizing chains by Lemma~\ref{lma:packingLemma}. 
Since we have assumed that each Vizing chain has length at least $\ell$, we know that each bichromatic path contributes $\ell$ edges to $H$, and therefore we find:
\[
4\Delta^2 |E(H)| \geq \ell \cdot{} |N_{i-1}|
\]
Hence it follows by Observation~\ref{obs:density} that
\[
\frac{\ell|N_{i-1}|}{4\Delta^2(|N_{i-1}| +  |R_{i}|)} \leq \frac{|E(H)|}{ |V(H)|} \leq \frac{\Delta}{2}
\]
By rearranging, we conclude that
\[
|R_i|  \geq \paren{\frac{\ell}{2\Delta^3}-1}|N_{i-1}|
\]
By assumption, we have $|N_{i-1}| \geq \frac{\ell}{8\Delta^3}\sum \limits_{k = 1}^{i-2} |N_{k}|$, so since $\ell \geq 80\Delta^3$ is large enough we find
\[
|N_{i-1}| \geq 10 \sum \limits_{k=1}^{i-2} |N_k|
\]
so 
\[
\sum \limits_{k=1}^{i-1} |N_k| \leq \frac{|N_{i-1}|}{10}+|N_{i-1}|
\]
and it follows that
\[
\frac{10}{11} \sum \limits_{k=1}^{i-1} |N_k| \leq |N_{i-1}|
\]
Hence we find
\[
|R_i| \geq \paren{\frac{\ell}{2\Delta^3}-1}|N_{i-1}| \geq \frac{10}{11} \paren{\frac{\ell}{2\Delta^3}-1}\sum\limits_{k=1}^{i-1} |N_k| \geq \frac{1}{4} \paren{\frac{\ell}{\Delta^3}} \sum\limits_{k=1}^{i-1} |N_k|
\]
\end{proof}
Next we conclude that this implies that $|N_{i}|$ also is big. To do so, we show that only a constant fraction of the vertices in $|R_i(e,\ell)|$ are extended by overlapping Vizing chains.
\begin{lemma} \label{lma:boundN}
Let the assumptions be as in Lemma~\ref{lma:mainApp2}. Then at most
\[
3(\Delta+1)^4 \sum \limits_{k = 0}^{i-2} |R_k|+(3(\Delta+1)^4+2\Delta)|N_{i-1}|
\]
points in $R_{i}$ has no non-overlapping Vizing chain extension.
\end{lemma}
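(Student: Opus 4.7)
The plan is to enumerate the ways a vertex $v \in R_i$ can fail to extend to a non-overlapping $(i+1)$-step Vizing chain, and then bound each type using Lemma~\ref{lma:reachingLemma} together with the structural properties of multi-step chains. For each $v \in R_i$ we have fixed a specific $i$-step chain $\mathcal{C}_v = F_1 + P_1 + \cdots + F_i + P_i$ reaching $v$, and extend it via Lemma~\ref{lma:fan2} to obtain a new Vizing chain $F_{i+1}^v + P_{i+1}^v$. The vertex $v$ is bad precisely when this new chain shares at least one edge with some earlier step $F_j + P_j$ ($j \leq i$) beyond the allowed junction edge.

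First I would carve out the forbidden vertex set $V^*$ containing every vertex incident to an edge of some $\mathcal{C}_v$. Tracing through the construction, $P_k$ contributes its first $l$ vertices to $R_k$ for each $k \leq i-1$, while $F_k$ is centered at a vertex in $R_{k-1}$ (or the initial fan center, which I lump into $R_0$) with leaves in the one-hop neighborhood. For the last step, $F_i$ is centered in $N_{i-1}$ and $P_i$'s first $l$ vertices lie in $R_i$ itself, which need not be treated as forbidden since we are already extending from $R_i$. Hence $V^* \subseteq \bigcup_{k=0}^{i-2} R_k \cup N_{i-1}$, once we absorb fan leaves into their centers at a cost of an extra factor $\Delta+1$ in the reaching bound below.

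Next I would apply Lemma~\ref{lma:reachingLemma} vertex-by-vertex. For each $w \in V^*$, at most $2(\Delta+1)^3$ points outside $N^1(\{w\})$ can reach $w$ via a Vizing chain, so the event that $P_{i+1}^v$ reaches some $w \in V^*$ contributes at most $2(\Delta+1)^3 |V^*|$ bad $v$. An additional factor $\Delta+1$ is needed to account for the alternative case where an edge of $F_{i+1}^v$ (rather than $P_{i+1}^v$) lies in a previous chain, which amounts to expanding the forbidden set by one hop. Combined, these give the coefficient $3(\Delta+1)^4$ on $\sum_{k=0}^{i-2}|R_k| + |N_{i-1}|$. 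The residual $2\Delta|N_{i-1}|$ term handles the special case where $F_{i+1}^v$ directly overlaps $F_i$: each of the $|N_{i-1}|$ step-$i$ fan centers has at most $\Delta$ leaves and can force at most $2\Delta$ bad choices of $v$, a contribution which lies inside the $N^1$ exclusion of Lemma~\ref{lma:reachingLemma} and therefore must be counted separately.

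The main obstacle I anticipate is the careful bookkeeping: verifying that the decomposition of $V^*$ cleanly assigns each vertex to the correct $R_k$ or $N_{i-1}$ set (so that the sum truncates at $i-2$ rather than $i-1$), ensuring that each type of overlap is counted exactly once across the reaching-lemma bound and the direct fan-overlap bound, and checking that the one-hop expansion is justified in the edge cases where some $R_k$ is empty or the initial chain center must be treated by hand. Once the accounting is set up so that every bad $v$ is charged to either a vertex in $\bigcup_{k=0}^{i-2} R_k \cup N_{i-1}$ via reachability or to a step-$i$ fan leaf, the stated bound follows by summation.
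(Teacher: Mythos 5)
Your overall strategy --- carve out a forbidden vertex set, bound via Lemma~\ref{lma:reachingLemma} how many extensions can reach it, and add a one-hop/fan correction factor of $\Delta+1$ --- is the same as the paper's, and your accounting for the $\sum_{k\le i-2}|R_k|$ part comes out to the right coefficient. The genuine gap is in how you handle overlaps with the $i$-th chain $F_i+P_i$ itself. You exclude $R_i$ (which is exactly the vertex set of the paths $P_i$) from $V^*$ on the grounds that ``we are already extending from $R_i$''. Being allowed to \emph{start} the extension at a vertex of $P_i$ is not the same as being allowed to have the new chain \emph{overlap} $P_i$: by Definition~\ref{def:nonO} the $(i+1)$-th chain may share only the junction edge with the $i$-th chain, so an extension whose path $P_{i+1}^v$ runs back along $P_i$ is overlapping and must be counted. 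Your proposal never bounds this case. Your residual $2\Delta|N_{i-1}|$ term is spent instead on ``$F_{i+1}^v$ directly overlaps $F_i$'', but that event is already subsumed in the $3(\Delta+1)^4|N_{i-1}|$ term (a vertex whose new fan touches the fan at a vertex of $N_{i-1}$ lies in, or reaches through its fan, the one-hop neighbourhood of $N_{i-1}$), so it cannot also pay for path--path overlap.

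The missing ingredient is the colour discipline of Lemma~\ref{lma:fan2}: the extension is built so that $P_{i+1}^v$ is bichromatic either in two colours both different from those of $P_i$ --- in which case it cannot share an edge with $P_i$ --- or in exactly the same pair $(\kappa_1,\kappa_2)$ --- in which case $P_{i+1}^v$ and $P_i$ both live in components of $G[\kappa_1,\kappa_2]$ and can only overlap by $P_{i+1}^v$ being the very same bichromatic path. The latter forces the new fan to end at a neighbour of one of the two endpoints of that path, which is what yields the $2\Delta|N_{i-1}|$ bound (at most $2\Delta$ such vertices per path, one path per vertex of $N_{i-1}$). Without this argument the path--path overlaps are a priori bounded by nothing better than $l\cdot|N_{i-1}|$, which would destroy the lemma. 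To repair the proposal, keep your reaching-lemma computation for $\bigcup_{k\le i-2}R_k$ and for the one-hop neighbourhood of $N_{i-1}$, and replace your fan-leaf accounting of the last term by the same-bichromatic-path argument above.
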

\begin{proof}
Observe first that if we can guarantee that the Vizing chain beginning at a point in $R_{i}$ will not intersect a vertex in the one-hop neighbourhood of $\bigcup \limits_{k = 0}^{i-1} R_k$, then surely it is non-overlapping. We will show something a little weaker: namely that we can pick the Vizing chains so that most will be vertex disjoint from the one-hop neighbourhood of $\bigcup \limits_{k = 0}^{i-2} R_k$, and furthermore they will not overlap the last Vizing chain, used to reach them, in any edges. 
Finally, we may simply remove the 1-hop neighbourhood of vertices in $N_{i-1}$, since we do not wish to truncate our Vizing chains at these points, as we cannot guarantee that the fans constructed at these points are non-overlapping. 
First of all, since we always use Lemma~\ref{lma:fan2} to extend a multi-step Vizing chain, we can count the number of vertices that will extend their Vizing chain so that it overlaps the last Vizing chain used to reach it. 
This can only happen in two ways: either the Vizing chain goes through the fan situated at a vertex in $N_{i-1}$ or it attempts to pick exactly the same augmenting path used by the last Vizing chain to reach it. There are at most $3(\Delta+1)^4 |N_{i-1}|$ chains of the first kind, and at most $2\Delta|N_{i-1}|$ of the second kind. 
Indeed, the one-hop neighbourhood of $N_{i-1}$ has size at most $(\Delta+1)|N_{i-1}|$, and hence by a similar argument to above at most $(\Delta+1)^2|N_{i-1}|$ vertices belong to the one-hop neighbourhood of $N_{i-1}$ or can reach it through fans. 
By Lemma~\ref{lma:reachingLemma} at most $2(\Delta+1)^4|N_{i-1}|$ points can reach it through the bichromatic part of their Vizing chain. Summing these two contributions counts the first kind. As for the second part, we note that at most $2\Delta|N_{i-1}|$ vertices in $G$ neighbour the endpoints of the bichromatic paths used to extend the Vizing chains at $N_{i-1}$.

Note secondly that at most 
\[
2(\Delta+1)^3 (\Delta+1) \sum \limits_{k = 0}^{i-2} |R_k| + (\Delta+1)^2 \sum \limits_{k = 0}^{i-2} |R_k| \leq 3(\Delta+1)^4\sum \limits_{k = 0}^{i-2} |R_k|
\]
Vizing chain extensions of $R_{i}$ can be overlapping through the one-hop neighbourhood of $\bigcup \limits_{k = 0}^{i-2} R_k$. 
Indeed, as discussed earlier the number of such points that can reach the one-hop neighbourhood of $\bigcup \limits_{k = 0}^{i-2} R_k$ via the bichromatic path component of a Vizing chain is upper bounded by the number of points in $R_i$ that neighbours an endpoint of a maximal bichromatic path that intersects the one-hop neighbourhood of $\bigcup \limits_{k = 0}^{i-2} R_k$ under the initial colouring $c$. 
Since the one-hop neighbourhood has size at most $(\Delta+1) \sum \limits_{k = 0}^{i-2} |R_k|$, at most $(\Delta+1)^2 \sum \limits_{k = 0}^{i-2} |R_k|$ vertices either belong to the one-hop neighbourhood or can reach the one-hop neighbourhood via fans, and at most $2(\Delta+1)^3 (\Delta+1) \sum \limits_{k = 0}^{i-2} |R_k|$ vertices can reach the one-hop neighbourhood via the bichromatic path components of the Vizing chains by Lemma~\ref{lma:reachingLemma}. 
Summing these contributions yields the above. 
Note that since we remove all points that can reach $\bigcup \limits_{k = 0}^{i-2} R_k$ through fans, we are actually removing the two-hop neighbourhood of $\bigcup \limits_{k = 0}^{i-2} R_k$.

Hence, in total at most 
\[
3(\Delta+1)^4 \sum \limits_{k = 0}^{i-2} |R_k|+(3(\Delta+1)^4+2\Delta)|N_{i-1}|
\]
vertices of $R_{i}$ does not belong to $N_{i}$.
\end{proof}
Now we may deduce Lemma~\ref{lma:mainApp2}.
\begin{proof}
By Lemma~\ref{lma:boundN}, we have that
\[
|N_{i}| \geq |R_{i}|-\paren{3(\Delta+1)^4 \sum \limits_{k = 0}^{i-2} |R_k|+(3(\Delta+1)^4+2\Delta)|N_{i-1}|}
\]
We will show that $|N_{i}| \geq |R_{i}|/2$, and then the lemma will follow by Lemma~\ref{lma:boundR}. Observe that by assumptions $1)$ and $2)$ we have
\begin{align*}
    3(\Delta+1)^4 \sum \limits_{k = 0}^{i-2} |R_k|+(3(\Delta+1)^4+2\Delta)|N_{i-1}|
    &\leq 6(\Delta+1)^4 \sum \limits_{k = 0}^{i-2} |N_k|+5(\Delta+1)^4|N_{i-1}| \\
    &\leq \frac{48(\Delta+1)^7}{\ell}|N_{i-1}| +5(\Delta+1)^4|N_{i-1}| \\
    &\leq \paren{\frac{48(\Delta+1)^7}{\ell}+5(\Delta+1)^4}|N_{i-1}|
\end{align*}
Since $\ell \geq 80(\Delta+1)^7$, we find that
\[
\paren{\frac{48(\Delta+1)^7}{\ell}+5(\Delta+1)^4}|N_{i-1}| \leq 6(\Delta+1)^4|N_{i-1}|
\]
and by Lemma~\ref{lma:boundR} that 
\begin{align*}
|R_{i}| &\geq (\frac{\ell}{2\Delta^3}-1)|N_{i-1}|  \\ 
&\geq (40(\Delta+1)^{4}-1)|N_{i-1}| \\
&\geq 20(\Delta+1)^{4}|N_{i-1}| \\
& \geq 2\cdot{} \paren{3(\Delta+1)^4 \sum \limits_{k = 0}^{i-2} |R_k|+(3(\Delta+1)^4+2\Delta)|N_{i-1}|}
\end{align*}
and so we conclude by Lemma~\ref{lma:boundN} that 
\[
|N_{i}| \geq |R_{i}|/2
\]
and by Lemma~\ref{lma:boundR} that 
\[
|N_{i}| \geq \frac{\ell}{8\Delta^3}\sum\limits_{k=1}^{i-1} |N_k|
\]
which was wanted we to show.
\end{proof}
We may now deduce Theorem~\ref{thm:LOCALVT}:
\begin{proof}[Proof of Theorem~\ref{thm:LOCALVT}]
We first construct an augmenting Vizing chain on $e$ as in the proof of Vizing's Theorem. If it has length $ \leq \ell $, we are done, so we may assume this is not the case. 
Now we may choose to truncate the Vizing chain at any of the $\ell$ first vertices, and extend it via Lemma~\ref{lma:fan2} and the discussion following the lemma. At most $2\Delta$ of these chains will overlap the the edges of the path of the first Vizing chain -- namely if they use the exact same bichromatic path. Furthermore, since at most $(\Delta+1)^2$ augmenting paths can go through a vertex, at most $2(\Delta+1)^4$ of the Vizing chains will go through the fan situated at $e$, and so we find that $|R_{1}(e,\ell)| = \ell$ and that $|N_{1}(e,\ell)| \geq \ell-2\Delta-2(\Delta+1)^4 \geq \ell/2$. 
Hence, we may apply Lemma~\ref{lma:mainApp2} until we find a short augmenting and non-overlapping Vizing chain. If this happens at step $i$, then by construction the Vizing chain has size at most $O(\Delta^7 \cdot{} i)$. 
\begin{claim}
For all $j$ satisfying the conditions in Lemma~\ref{lma:mainApp2}, we have $|N_{j}| \geq \paren{\frac{\ell}{8\Delta^3}}^{j}$.
\end{claim}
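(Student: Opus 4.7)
The plan is to prove the claim by a straightforward induction on $j$, using condition $1)$ of Lemma~\ref{lma:mainApp2} as the only ingredient for the inductive step. Let me write $\alpha = \frac{l}{8\Delta^3}$ for brevity, so the claim becomes $|N_j| \geq \alpha^j$.

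For the base case $j = 1$, I would appeal to the calculation already performed just before the claim is stated in the proof of Theorem~\ref{thm:LOCALVT}: there we observed $|N_1(e,l)| \geq l/2 - 2\Delta - 2(\Delta+1)^4 \geq l/2$. Since $l \geq 80(\Delta+1)^7$ in particular forces $\Delta \geq 1$, we have $l/2 \geq l/(8\Delta^3) = \alpha$, which establishes the base case.

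For the inductive step, suppose $|N_k| \geq \alpha^k$ for all $1 \leq k < j$. By condition $1)$ of Lemma~\ref{lma:mainApp2},
\[
|N_j| \;\geq\; \frac{l}{8\Delta^3} \sum_{k=1}^{j-1} |N_k| \;\geq\; \alpha \cdot |N_{j-1}| \;\geq\; \alpha \cdot \alpha^{j-1} \;=\; \alpha^j,
\]
which closes the induction. (One could of course obtain a stronger bound by keeping the full sum rather than dropping all but the last term, but the stated geometric bound is all that is needed.)

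There is essentially no obstacle here: the work of proving the exponential growth was already done inside Lemma~\ref{lma:mainApp2} itself, and the claim just repackages condition $1)$ as a closed-form lower bound suitable for concluding that $j \leq O(\log_{\alpha} n) = O(\log n)$ in the outer proof of Theorem~\ref{thm:LOCALVT}. The only subtlety worth checking is that the base case is compatible with the normalisation $\alpha^1 = l/(8\Delta^3)$, which, as noted above, is immediate from the bound $l \geq 80(\Delta+1)^7$ enforced throughout this section.
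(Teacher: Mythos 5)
Your proof is correct and follows essentially the same route as the paper: induction on $j$, with the base case supplied by the computation of $|N_1|\geq l/2$ immediately preceding the claim, and the inductive step obtained by applying condition $1)$ of Lemma~\ref{lma:mainApp2} and dropping all but the $k=j-1$ term of the sum. (The only quibble is a transcription slip in your restatement of the base-case bound, which should read $|N_1|\geq l-2\Delta-2(\Delta+1)^4\geq l/2$; your check that $l/2\geq l/(8\Delta^3)$ is the right normalisation step and is also implicit in the paper.)
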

\begin{proof}
The proof is by induction on $j$. We have just handled the base case $j = 1$ above. The induction step now follows from 
condition 1) of Lemma~\ref{lma:mainApp2}, since it allows us to conclude that $|N_{j}| \geq \paren{\frac{\ell}{8\Delta^3}}|N_{j-1}|$ .
\end{proof}
and so we conclude that $i \leq \log_{\frac{\ell}{8\Delta^3}} n$ and the theorem follows. Indeed, the construction has to terminate with a desired Vizing chain after at most $\log_{\frac{\ell}{8\Delta^3}} n$ steps, or we reach a contradiction by above.
\end{proof}

\section{Application 3: Distributed $\Delta +1$ edge-colouring} \label{sec:app3}

We say a family $\mathcal{F} = \{\mathbb{F}_{i}\}_{i = 1}^{\infty}$ is \emph{$k$-bounded}, if for all $i$ we have $|\mathbb{F}_{i}| \leq k^{i}$.  For a $k$-bounded family with $k \geq 2$, it holds that $|\bigcup \limits_{j = 1}^{i} \mathbb{F}_{j} | \leq 2k^{i}$. Indeed, in the worst case the sets in $\mathcal{F}$ are pairwise disjoint and hence 
\[
|\bigcup \limits_{j = 1}^{i} |\mathbb{F}_{j}| | \leq k^{i} \cdot \sum \limits_{j = 0}^{\infty} k^{-j} = \frac{k}{k-1} k^{i} \leq 2k^{i}
\]
We will show that we can construct small augmenting non-overlapping Vizing chains that furthermore also avoid any $4(\Delta+1)^4$-bounded family $\mathcal{F}$. The idea is simply to remove the option to extend the Vizing chains through points forbidden by $\mathcal{F}$. The number of new starting points for Vizing chains will still grow so fast that the number of points to avoid becomes negligible. Hence, we now fix any $4(\Delta+1)^4$-bounded family $\mathcal{F}$ -- even one chosen adversarial -- for the remainder of this section. Then, equivalently to Definitions~\ref{def:Ni} and~\ref{def:Ri}, we have the following definitions: 
\begin{definition} \label{lma:RiF}
Given a proper partial colouring and an uncoloured edge $e$, we let $v \in R_i(e,\ell,\mathcal{F})$ denote the set of vertices such that there exists an $\mathcal{F}$-avoiding and non-overlapping $i$-step Vizing chain from $e$ ending at $v$ such that every Vizing chain in the $i$-step Vizing chain has length at most $\ell$.
\end{definition}
\begin{definition} \label{def:NiF}
Given a proper partial colouring and an uncoloured edge $e$, we let $v \in N_i(e,\ell,\mathcal{F}) \subset R_i(e,\ell,\mathcal{F})$ if there exists an $\mathcal{F}$-avoiding and non-overlapping $i$-step chain from $e$ ending at $v$, and, furthermore, the multi-step Vizing chain may be extended to a non-overlapping $(i+1)$-Vizing chain by concatenating a Vizing chain of length at most $\ell$ at $v$.
\end{definition}
Again if $e,\ell, \mathcal{F}$ are clear from the context, we will often times suppress these arguments. We will prove the following lemma, similar to Lemma~\ref{lma:mainApp2}. 
\begin{lemma} \label{lma:mainApp3}
Suppose we are given a proper partial colouring and an uncoloured edge $e$ in a graph of maximum degree $\Delta$, and let $\ell \geq 100(\Delta+1)^7$.
Suppose that for all $1 \leq j \leq i-1$ it holds that no vertex in $N_j(e,\ell,\mathcal{F})$ can be extended to an augmenting, $\mathcal{F}$-avoiding and non-overlapping $(j+1)$-step Vizing chain via a Vizing chain of length at most $\ell$. Suppose furthermore that the conditions below hold:
\begin{enumerate}
      \item $|N_{j}(e,\ell,\mathcal{F})| \geq \frac{\ell}{8\Delta^3}\sum\limits_{k=1}^{j-1} |N_k(e,\ell,\mathcal{F})|$
      \item $|N_j(e,\ell,\mathcal{F})| \geq |R_j(e,\ell,\mathcal{F})|/2$.
\end{enumerate}
Then condition $1)$ and $2)$ also hold for $j = i$. 
\end{lemma}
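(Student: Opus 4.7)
The plan is to mirror the proof of Lemma~\ref{lma:mainApp2}, treating the family $\mathcal{F}$ as an additional source of ``forbidden'' vertices that get excluded from $N_i$, and then verifying that the bound $l\geq 100(\Delta+1)^7$ is enough to absorb this extra loss. Throughout, I will use that, by hypothesis and a simple induction on $j$ (identical to the claim at the end of the proof of Theorem~\ref{thm:LOCALVT}), the inequality $|N_j(e,l,\mathcal{F})|\geq (l/(8\Delta^3))^j$ holds for all $j\leq i-1$; in particular $|N_{i-1}|$ dominates $\sum_{k<i-1}|N_k|$ by a large factor.

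First, I would establish an analogue of Lemma~\ref{lma:boundR}, namely $|R_i(e,l,\mathcal{F})|\geq (l/(2\Delta^3)-1)|N_{i-1}(e,l,\mathcal{F})|$. The argument is verbatim the one used in Lemma~\ref{lma:boundR}: pick one $\mathcal{F}$-avoiding non-overlapping $(i-1)$-step Vizing chain through each vertex $v\in N_{i-1}$, extend it via Lemma~\ref{lma:fan2} by a bichromatic path of length at least $l$, and look at the graph $H$ spanned by the first $l$ edges of each such path. Lemma~\ref{lma:packingLemma} gives $4\Delta^2|E(H)|\geq l\cdot|N_{i-1}|$, Observation~\ref{obs:density} gives $|E(H)|/|V(H)|\leq \Delta/2$, and $V(H)\subseteq N_{i-1}\cup R_i$ by definition; rearranging yields the claim, which in turn (exactly as in Lemma~\ref{lma:boundR}) gives $|R_i|\geq \tfrac14(l/\Delta^3)\sum_{k=1}^{i-1}|N_k|$.

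Next, I would prove an analogue of Lemma~\ref{lma:boundN}. A vertex $v\in R_i$ fails to lie in $N_i$ for one of two reasons: either any extension built on $v$ overlaps an earlier part of the multi-step chain, or $v\in\bigcup_{j=1}^{i}\mathbb{F}_j$ (so the extension cannot be $\mathcal{F}$-avoiding). The first source is bounded exactly as in Lemma~\ref{lma:boundN}, giving at most
\[
3(\Delta+1)^4\sum_{k=0}^{i-2}|R_k|+(3(\Delta+1)^4+2\Delta)|N_{i-1}|
\]
vertices. For the second source, since $\mathcal{F}$ is $4(\Delta+1)^4$-bounded we have $|\bigcup_{j=1}^{i}\mathbb{F}_j|\leq 2(4(\Delta+1)^4)^{i}$, and using $|N_{i-1}|\geq (l/(8\Delta^3))^{i-1}$ together with the choice $l\geq 100(\Delta+1)^7$ (which makes $l/(8\Delta^3)$ comfortably larger than $8(\Delta+1)^4$), this quantity is at most $(\Delta+1)^4|N_{i-1}|$, say. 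Summing the two sources gives a bound of order $O((\Delta+1)^4)|N_{i-1}|$ on $|R_i\setminus N_i|$.

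Finally, I would plug everything into the chain of inequalities used at the end of Lemma~\ref{lma:mainApp2}. Using condition 2) for $j\leq i-1$ to replace $R_k$ by $2N_k$, condition 1) to control $\sum_{k\leq i-2}|N_k|\leq (8\Delta^3/l)|N_{i-1}|$, and the new bound $l\geq 100(\Delta+1)^7$ (slightly stronger than the $80(\Delta+1)^7$ used in Lemma~\ref{lma:mainApp2} precisely to pay for the extra $\mathcal{F}$-term), the exclusions are seen to be at most $|R_i|/2$, yielding $|N_i|\geq |R_i|/2$, which is condition 2) for $j=i$. Condition 1) for $j=i$ then follows immediately from the $|R_i|$ bound above. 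The main obstacle in writing this out carefully is just choosing the constants so that $l\geq 100(\Delta+1)^7$ actually suffices simultaneously to (i) reproduce the Lemma~\ref{lma:mainApp2} counting and (ii) dominate the cumulative $\mathcal{F}$-mass $2(4(\Delta+1)^4)^{i}$ by $|N_{i-1}|\geq(l/(8\Delta^3))^{i-1}$; this is the reason the hypothesis is strengthened from $80(\Delta+1)^7$ to $100(\Delta+1)^7$.
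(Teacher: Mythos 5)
Your proposal is correct and follows essentially the same route as the paper: reprove the $|R_i|$ lower bound verbatim from Lemma~\ref{lma:boundRF}, add the cumulative $\mathcal{F}$-mass $2(4(\Delta+1)^4)^i$ to the exclusion count of Lemma~\ref{lma:boundN}, dominate it by a constant times $(\Delta+1)^4|N_{i-1}|$ via the inductive bound $|N_{i-1}|\geq (l/(8\Delta^3))^{i-1}$, and absorb everything with $l\geq 100(\Delta+1)^7$. (The paper's version of your ``$(\Delta+1)^4|N_{i-1}|$, say'' is $4(\Delta+1)^4|N_{i-1}|$, but the final accounting is robust to this constant.)
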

Completely synchronous to before, we first show that if the conditions of Lemma~\ref{lma:mainApp3} are satisfied then $R_i(e,\ell,\mathcal{F})$ is big. Since we already removed points forbidden by $\mathcal{F}$ from $N_i$ above, the proof of the following Lemma is (almost) verbatim that of Lemma~\ref{lma:boundR}. We will repeat it below only for completeness:
\begin{lemma} \label{lma:boundRF}
Suppose the conditions of Lemma~\ref{lma:mainApp3} hold. Then 
\[
|R_i| \geq \paren{\frac{\ell}{2\Delta^3}-1}|N_{i-1}| \geq \frac{1}{4} \paren{\frac{\ell}{\Delta^3}} \sum\limits_{k=1}^{i-1} |N_k|
\]
\end{lemma}
\begin{proof}
By assumption, we may extend exactly one $(i-1)$-step Vizing chain through every vertex in $N_{i-1}$ to get non-overlapping $\mathcal{F}$-avoiding $i$-step Vizing chains, where the length of the $i$th Vizing chain is at least $\ell$.
Consider the subgraph $H$ spanned by the $\ell$ first edges of each bichromatic path in the $i$th Vizing chains concatenated at the vertices in $N_{i-1}$. By definition of $R_i$, we have
\[
|V(H)| \leq  |N_{i-1}| +  |R_{i}|
\]
Furthermore, since we only construct one Vizing chain from every point in $N_{i-1}$, an edge in $H$ belongs to at most $4\Delta^2$ different Vizing chains by Lemma~\ref{lma:packingLemma}. 
Since we have assumed that each Vizing chain has length at least $\ell$, each Vizing chain will contribute exactly $\ell$ edges to $H$, and so  we find:
\[
4\Delta^2 |E(H)| \geq \ell \cdot{} |N_{i-1}|
\]
Hence it follows by Observation~\ref{obs:density} that
\[
\frac{\ell|N_{i-1}|}{4\Delta^2(|N_{i-1}| +  |R_{i}|)} \leq \frac{|E(H)|}{ |V(H)|} \leq \frac{\Delta}{2}
\]
By rearranging, we conclude that
\[
|R_i|  \geq \paren{\frac{\ell}{2\Delta^3}-1}|N_{i-1}|
\]
By assumption, we have $|N_{i-1}| \geq \frac{\ell}{8\Delta^3}\sum \limits_{k = 1}^{i-2} |N_{k}|$, so since $\ell \geq 100\Delta^3$ is large enough we find
\[
|N_{i-1}| \geq 10 \sum \limits_{k=1}^{i-2} |N_k|
\]
so 
\[
\sum \limits_{k=1}^{i-1} |N_k| \leq \frac{|N_{i-1}|}{10}+|N_{i-1}|
\]
and it follows that
\[
\frac{10}{11} \sum \limits_{k=1}^{i-1} |N_k| \leq |N_{i-1}|
\]
Hence we find
\[
|R_i| \geq \paren{\frac{\ell}{2\Delta^3}-1}|N_{i-1}| \geq \frac{10}{11} \paren{\frac{\ell}{2\Delta^3}-1}\sum\limits_{k=1}^{i-1} |N_k| \geq \frac{1}{4} \paren{\frac{\ell}{\Delta^3}} \sum\limits_{k=1}^{i-1} |N_k|
\]
\end{proof}
Next we conclude that this implies that $N_{i}$ also is big. To do so, we show that only a constant fraction of the vertices in $R_i(e,\ell)$ are forbidden by $\mathcal{F}$ or are extended by overlapping Vizing chains.
\begin{lemma} \label{lma:boundNF}
Let the assumptions be as in Lemma~\ref{lma:mainApp3}. Then at most
\[
3(\Delta+1)^4 \sum \limits_{k = 0}^{i-2} |R_k|+(3(\Delta+1)^4+2\Delta+4(\Delta+1)^{4})|N_{i-1}|
\]
points in $R_{i}$ have no non-overlapping $\mathcal{F}$-avoiding Vizing chain extension.
\end{lemma}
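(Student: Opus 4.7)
The plan is to split the points of $R_i$ that admit no valid extension into two kinds: those that fail for overlap reasons, and those that fail because they lie in one of the forbidden sets of $\mathcal{F}$. The overlap contribution is handled exactly as in Lemma~\ref{lma:boundN}, since whether an extension overlaps an earlier chain is a property of the colouring and the chain geometry only, and is unaffected by $\mathcal{F}$. Thus the proof of Lemma~\ref{lma:boundN} transfers verbatim to give the three summands
\[
3(\Delta+1)^4\sum_{k=0}^{i-2}|R_k|\;+\;3(\Delta+1)^4|N_{i-1}|\;+\;2\Delta|N_{i-1}|,
\]
arising respectively from overlaps through the two-hop neighbourhood of earlier levels, overlaps through a fan at a vertex of $N_{i-1}$, and reusing the bichromatic path used to reach the vertex.

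For the new $4(\Delta+1)^4|N_{i-1}|$ term, the key observation is that an $\mathcal{F}$-avoiding extension to the $(i+1)$-st step must start at a vertex outside $\bigcup_{j=1}^{i}\mathbb{F}_j$. Using the geometric series bound stated at the start of the section and the $4(\Delta+1)^4$-boundedness of $\mathcal{F}$, we get
\[
\Bigl|\bigcup_{j=1}^{i}\mathbb{F}_j\Bigr|\;\leq\;2\bigl(4(\Delta+1)^4\bigr)^{i},
\]
and at most this many points of $R_i$ are ruled out for family reasons. To match the bound in the statement, I would then rewrite this estimate in terms of $|N_{i-1}|$: iterating condition 1) of Lemma~\ref{lma:mainApp3} (exactly as in the claim embedded in the proof of Theorem~\ref{thm:LOCALVT}) yields $|N_{i-1}|\geq(l/(8\Delta^3))^{i-1}$, and the hypothesis $l\geq 100(\Delta+1)^7$ gives $l/(8\Delta^3)\geq 12(\Delta+1)^4$, so $|N_{i-1}|\geq (12(\Delta+1)^4)^{i-1}\geq 2(4(\Delta+1)^4)^{i-1}$ for $i\geq 2$. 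Rearranging, $2(4(\Delta+1)^4)^i\leq 4(\Delta+1)^4|N_{i-1}|$, which is precisely the new contribution.

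Summing the overlap contributions with this family contribution gives the bound stated in Lemma~\ref{lma:boundNF}. I do not expect a genuine obstacle: the proof is Lemma~\ref{lma:boundN} plus a single application of $k$-boundedness of $\mathcal{F}$, with the only mild subtlety being to confirm that the inductive lower bound on $|N_{i-1}|$ is available under the hypotheses of Lemma~\ref{lma:mainApp3}. The case $i=1$ is trivial since $\bigcup_{j=1}^{0}\mathbb{F}_j=\emptyset$ and the bound reduces to that of Lemma~\ref{lma:boundN}.
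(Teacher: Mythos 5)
Your proposal is correct and follows essentially the same route as the paper: the overlap terms are taken verbatim from Lemma~\ref{lma:boundN}, the family contribution is bounded by $2(4(\Delta+1)^4)^{i}$ via the geometric-series estimate, and this is converted into $4(\Delta+1)^4|N_{i-1}|$ using the inductive lower bound $|N_{i-1}|\geq 2(4(\Delta+1)^4)^{i-1}$ obtained from condition 1) --- exactly the content of the paper's Claim~\ref{clm:boundNiF}. The only nitpick is your closing aside: for $i=1$ the relevant forbidden set is $\bigcup_{j=1}^{1}\mathbb{F}_j=\mathbb{F}_1$, which is not empty (the paper subtracts $|\mathbb{F}_1|$ in its base case), but this does not affect the main argument, which is applied for $i\geq 2$.
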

\begin{proof}
Observe first that if we can guarantee that the Vizing chain beginning at a point in $R_{i}$ will not intersect a vertex in the one-hop neighbourhood of $\bigcup \limits_{k = 0}^{i-1} R_k$, then surely it is non-overlapping. We will show something a little weaker: namely that we can pick the Vizing chains so that most will be vertex disjoint from the one-hop neighbourhood of $\bigcup \limits_{k = 0}^{i-2} R_k$, and furthermore they will not overlap the last Vizing chain used to reach them. 
We may simply remove the 1-hop neighbourhood of vertices in $N_{i-1}$, since we do not wish to truncate our Vizing chains at these points, as we cannot guarantee that the fans constructed at these points are non-overlapping. 

First of all, since we always use Lemma~\ref{lma:fan2} to extend a multi-step Vizing chain, we can count the number of vertices that will extend their Vizing chain so that it overlaps the last Vizing chain used to reach it. 
This can only happen in two ways: either the Vizing chain goes through the fan situated at a vertex in $N_{i-1}$ or it attempts to pick exactly the same augmenting path used by the last Vizing chain to reach it. There are at most $3(\Delta+1)^4 |N_{i-1}|$ chains of the first kind, and at most $2\Delta|N_{i-1}|$ of the second kind. 
Indeed, the one-hop neighbourhood of $N_{i-1}$ has size at most $(\Delta+1)|N_{i-1}|$, and hence by a similar argument to above at most $(\Delta+1)^2|N_{i-1}|$ vertices belong to the one-hop neighbourhood of $N_{i-1}$ or can reach it through fans. 
By Lemma~\ref{lma:reachingLemma} at most $2(\Delta+1)^4|N_{i-1}|$ points can reach it through the bichromatic part of their Vizing chain. Summing these two contributions counts the first kind. As for the second part, we note that at most $2\Delta|N_{i-1}|$ vertices in $G$ neighbour the endpoints of the bichromatic paths used to extend the Vizing chains at $N_{i-1}$.

Note secondly that at most 
\[
2(\Delta+1)^3 (\Delta+1) \sum \limits_{k = 0}^{i-2} |R_k| + (\Delta+1)^2 \sum \limits_{k = 0}^{i-2} |R_k| \leq 3(\Delta+1)^4\sum \limits_{k = 0}^{i-2} |R_k|
\]
Vizing chain extensions of $R_{i}$ can be overlapping through the one-hop neighbourhood of $\bigcup \limits_{k = 0}^{i-2} R_k$.
Indeed, as discussed earlier the number of such points that can reach the one-hop neighbourhood of $\bigcup \limits_{k = 0}^{i-2} R_k$ via the bichromatic path component of a Vizing chain is upper bounded by the number of points in $R_i$ that neighbours an endpoint of a maximal bichromatic path that intersects the one-hop neighbourhood of $\bigcup \limits_{k = 0}^{i-2} R_k$ under the initial colouring $c$. 
Since the one-hop neighbourhood has size at most $(\Delta+1) \sum \limits_{k = 0}^{i-2} |R_k|$, at most $(\Delta+1)^2 \sum \limits_{k = 0}^{i-2} |R_k|$ vertices either belong to the one-hop neighbourhood or can reach the one-hop neighbourhood via fans, and at most $2(\Delta+1)^3 (\Delta+1) \sum \limits_{k = 0}^{i-2} |R_k|$ vertices can reach the one-hop neighbourhood via the bichromatic path components of the Vizing chains by Lemma~\ref{lma:reachingLemma}. 
Summing these contributions yields the above. 
Note that since we remove all points that can reach $\bigcup \limits_{k = 0}^{i-2} R_k$ through fans, we are actually removing the two-hop neighbourhood of $\bigcup \limits_{k = 0}^{i-2} R_k$.

Hence, in total at most 
\[
3(\Delta+1)^4 \sum \limits_{k = 0}^{i-2} |R_k|+(3(\Delta+1)^4+2\Delta)|N_{i-1}|
\]
vertices of $R_{i}$ does not belong to $N_{i}$.

Finally, we have to remove $\bigcup \limits_{j = 1}^{i} \mathbb{F}_{j}$ from $N_{i}$. Earlier, we observed that $|\bigcup \limits_{j = 1}^{i} \mathbb{F}_{j}| \leq 2(4(\Delta+1)^{4})^{i}$.
Hence, in total at most 
\[
3(\Delta+1)^4 \sum \limits_{k = 0}^{i-2} |R_k|+(3(\Delta+1)^4+2\Delta)|N_{i-1}|+2(4(\Delta+1)^{4})^{i}
\]
vertices of $R_{i}$ does not belong to $N_{i}$. The lemma will follow from the following claim:
\begin{claim} \label{clm:boundNiF}
For all $j$ satisfying the conditions in Lemma~\ref{lma:mainApp3}, we have $|N_{j}| \geq \paren{\frac{\ell}{8\Delta^3}}^{j} \geq 2(4(\Delta+1)^{4})^{j}$.
\end{claim}
\begin{proof}
The proof is by induction on $j$. For $j = 1$, we have $|R_{1}| = \ell$. At most $2\Delta$ of the chains obtained by extending the Vizing chain through a vertex in $R_1$ will overlap the edges in the path of the first Vizing chain -- namely if they choose the exactly same bichromatic path. Furthermore, at most $3(\Delta+1)^4$ of the chains will go through the fan situated at $e$ by Lemma~\ref{lma:reachingLemma}, and so we find that $|N_{1}| \geq \ell-2\Delta-3(\Delta+1)^4 - |\mathbb{F}_{1}| \geq \ell/2 \geq \frac{\ell}{8\Delta^3} \geq 2(4(\Delta+1)^4)$. The induction step now follows by condition 1) in Lemma~\ref{lma:mainApp3}, and noting that $\frac{\ell}{8\Delta^3} \geq 4(\Delta+1)^{4}$
\end{proof}
Now $4(\Delta+1)^{4}|N_{i-1}| \geq 2(4(\Delta+1)^{4})^{i}$ and the lemma follows.
\end{proof}
Now we may deduce Lemma~\ref{lma:mainApp3}.
\begin{proof}[Proof of Lemma\ref{lma:mainApp3}]
By Lemma~\ref{lma:boundN}, we have that
\[
|N_{i}| \geq |R_{i}|-\paren{3(\Delta+1)^4 \sum \limits_{k = 0}^{i-2} |R_k|+(3(\Delta+1)^4+2\Delta+4\Delta^3)|N_{i-1}|}
\]
We will show that $|N_{i}| \geq |R_{i}|/2$, and then the lemma will follow by Lemma~\ref{lma:boundRF}. Observe that by assumptions $1)$ and $2)$ we have
\begin{align*}
    3(\Delta+1)^4 \sum \limits_{k = 0}^{i-2} |R_k|+(3(\Delta+1)^4+2\Delta+4(\Delta+1)^4)|N_{i-1}|
    &\leq 6(\Delta+1)^4 \sum \limits_{k = 0}^{i-2} |N_k|+9(\Delta+1)^4|N_{i-1}| \\
    &\leq \frac{48(\Delta+1)^7}{\ell}|N_{i-1}| +9(\Delta+1)^4|N_{i-1}| \\
    &\leq \paren{\frac{48(\Delta+1)^7}{\ell}+9(\Delta+1)^4}|N_{i-1}|
\end{align*}
Since $\ell \geq 100(\Delta+1)^7$, we find that
\[
\paren{\frac{48(\Delta+1)^7}{\ell}+9(\Delta+1)^4}|N_{i-1}| \leq 10(\Delta+1)^4|N_{i-1}|
\]
and by Lemma~\ref{lma:boundRF} that 
\begin{align*}
|R_{i}| &\geq (\frac{\ell}{2\Delta^3}-1)|N_{i-1}|  \\ 
&\geq (50(\Delta+1)^{4}-1)|N_{i-1}| \\
&\geq 25(\Delta+1)^{4}|N_{i-1}| \\
& \geq 2\cdot{} \paren{3(\Delta+1)^4 \sum \limits_{k = 0}^{i-2} |R_k|+(3(\Delta+1)^4+2\Delta+4\Delta^3)|N_{i-1}|}
\end{align*}
and so we conclude by Lemma~\ref{lma:boundNF} that 
\[
|N_{i}| \geq |R_{i}|/2
\]
and by Lemma~\ref{lma:boundRF} that 
\[
|N_{i}| \geq \frac{\ell}{8\Delta^3}\sum\limits_{k=1}^{i-1} |N_k(e,\ell)|
\]
which was what we wanted to show.
\end{proof}
Note that by Claim~\ref{clm:boundNiF}, we will find an augmenting, $\mathcal{F}$-avoiding and non-overlapping Vizing chain after at most $\log_{\frac{\ell}{8\Delta^3}} n$ steps. We may now use this to prove Theorem~\ref{thm:algoLOCALVT}. 
The idea is that augmenting an edge only invalidates a limited amount of points for other vertices to extend their Vizing chains through. In particular, this implies that only few edges will have to avoid families that are not $4(\Delta+1)^4$-bounded, and so we can extend the set of edges which may be augmented in parallel. 
\begin{theorem} \label{thm:paraAug}
For any partial colouring $c$ of a graph $G$, let $U$ be the set of edges left uncoloured by $c$. Then there exists a set $W \subset U$ of size $\Omega\paren{\frac{|U|}{ \poly(\Delta) \log n}}$ of edges for which there exist vertex disjoint augmenting Vizing chains, each of size at most $O(\Delta^7 \log n)$.
\end{theorem}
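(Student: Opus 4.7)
The plan is to build $W$ greedily using the family-avoiding framework of Lemma~\ref{lma:mainApp3}, with the forbidden family for each candidate edge designed precisely to force vertex disjointness from the chains already selected. Start with an empty collection $\mathcal{C}$ of chains. At each iteration, let $S \subseteq V(G)$ denote the set of vertices already touched by chains in $\mathcal{C}$. For every uncoloured edge $e$ not already incident to $S$, define $\mathcal{F}(e) = \{\mathbb{F}_j(e)\}_{j \geq 1}$ by placing $v \in \mathbb{F}_j(e)$ precisely when some non-overlapping $j$-step Vizing chain from $e$ ending at $v$ admits an extension (via Lemma~\ref{lma:fan2}) whose $(j+1)$-st Vizing chain meets $S$. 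Then any $\mathcal{F}(e)$-avoiding non-overlapping multi-step Vizing chain starting at $e$ remains disjoint from $S$. We select an edge $e$ for which $\mathcal{F}(e)$ is $4(\Delta+1)^4$-bounded, invoke Lemma~\ref{lma:mainApp3} with $l \geq 100(\Delta+1)^7$ to obtain an augmenting, $\mathcal{F}(e)$-avoiding non-overlapping multi-step Vizing chain of size $O(\Delta^7 \log n)$, add it to $\mathcal{C}$, and update $S$.

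The core counting step bounds $\sum_{e \in U} |\mathbb{F}_j(e)|$ in terms of $|S| \leq |\mathcal{C}| \cdot O(\Delta^7 \log n)$. Using Lemma~\ref{lma:packingLemma} and Lemma~\ref{lma:reachingLemma}, a given vertex $w$ is reached by at most $O(\Delta^3)$ distinct 1-step Vizing chains; iterating this backwards $j$ times shows at most $O(\Delta^3)^{j+1}$ non-overlapping $(j+1)$-step Vizing chains end at $w$. Summing over $w \in S$ and paying an additional factor of $O(\Delta)$ for the choice of starting edge $e$ incident to the initial fan yields
\[
\sum_{e \in U} |\mathbb{F}_j(e)| \leq O(\Delta^4)^{j+1} \cdot |S|.
\]
By Markov, the number of edges with $|\mathbb{F}_j(e)| > (4(\Delta+1)^4)^j$ is at most $O(\Delta^4) \cdot |S| \cdot \rho^j$ for some $\rho < 1$, provided the hidden constant in the $O(\Delta^4)$ bound on chain multiplicity is strictly smaller than the base $4$. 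Summing this geometric series in $j$, and adding the $O(\Delta^3)\cdot |S|$ edges whose very first fan or bichromatic path already meets $S$, at most $O(\poly(\Delta)) \cdot |S|$ uncoloured edges are unavailable at the current iteration.

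The procedure thus halts only when $|U| \leq O(\poly(\Delta)) \cdot |S| \leq |\mathcal{C}| \cdot O(\poly(\Delta) \log n)$, which rearranges to $|\mathcal{C}| \geq |U|/O(\poly(\Delta) \log n)$; taking $W$ to be the starting edges of the chains in $\mathcal{C}$ proves the theorem. The main obstacle is the constant chasing in the counting step: one must propagate the constant hidden in the $O(\Delta^4)$ multiplicity bound through Lemma~\ref{lma:packingLemma} and Lemma~\ref{lma:reachingLemma} and verify it is strictly less than $4$, so that the geometric series in the Markov bound converges uniformly in $j$ up to $\log_{l/(8\Delta^3)} n$. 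A secondary subtlety is ensuring that $\mathbb{F}_j(e)$ covers both bichromatic-path-reachable and fan-reachable routes into $S$ at step $j+1$, mirroring the $(\Delta+1)^2$ through-fan terms in Lemma~\ref{lma:boundNF}, so that $\mathcal{F}(e)$-avoidance is genuinely enough to guarantee vertex disjointness from $\mathcal{C}$.
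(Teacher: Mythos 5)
Your proposal is correct and follows essentially the same route as the paper: a greedy construction where the already-selected chains define, for each remaining uncoloured edge, a family to avoid (placing the last safe extension point into $\mathbb{F}_{j}(e)$ whenever a $(j+1)$-step extension would meet the selected chains), followed by a counting/Markov argument showing that since at most $(2(\Delta+1)^4)^{j}$ chains per step can reach the selected vertices while $(4(\Delta+1)^4)^{j}$ forbidden points are tolerable, the unavoidable families form a geometrically decaying and hence $O(\poly(\Delta))\cdot|S|$-bounded set. The constant you flag as the "main obstacle" is exactly the paper's $2(\Delta+1)^4$ versus $4(\Delta+1)^4$, giving ratio $1/2$, so the series converges as you anticipated.
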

\begin{proof}
As noted above, the idea is that whenever we have constructed $k$ disjoint Vizing chains, we can use these Vizing chains to define families to avoid for each remaining uncoloured edges. 
Since an uncoloured edge can avoid families that are $4 (\Delta+1)^3$-bounded, but each vertex in a Vizing chain only invalidates $2(\Delta+1)^3$ points, even if the graph is laid out in the worst possible way, every vertex in a chosen Vizing chain invalidates at most $O(\Delta^5)$ other edges from joining $W$. 

Let us be more specific. Suppose we have a set $W \subset U$ of size $k$ such that there exist pairwise disjoint augmenting Vizing chains of size at most $(\Delta+1) \cdot{} \ell \cdot{} \log n$ for each edge in $W$. Fix such a set of $k$ pairwise disjoint Vizing chains and let $G' \subset G$ be the subgraph containing exactly these Vizing chains.

We show that if $k < \frac{|U|}{12poly(\Delta) \log n}$, then we may extend $W$ to contain $k + 1$ edges, while satisfying the same conditions. 

Before we construct these families, we first consider any edge $e \in U - W$. 
If any edge in $G'$ has an endpoint in the 1-hop neighbourhood of $e$, then we will completely disregard $e$. Otherwise, we consider the points that are 1) not in the $1$-hop neighbourhood in $G$ of any of the augmenting Vizing chains in $G'$ and 2) that can reach a vertex in $N^{1}(G')$ via a Vizing chain. 
By Lemma~\ref{lma:reachingLemma} there are at most 
$$2(\Delta+1)^3\cdot{} (\Delta+1)|V(G')|+(\Delta+1)^2|V(G')| \leq (3(\Delta+1)^4)^2|V(G')| $$
such points. Here, the first contribution comes from the vertices that reach $N^{1}(G')$ via the path chains, and the second contribution from the vertices that reach $N^{1}(G')$ via the fan chains.  
A similar argument may be applied to the points reaching these point via $1$-step Vizing chains, which shows that at most $(3(\Delta+1)^4)^2|V(G')|$ points can reach the one-hop neighbourhood of $G'$ via non-overlapping $2$-step Vizing chains. 
Following this logic, a straightforward inductive argument shows that at most $(3(\Delta+1)^{4})^{i}|V(G')|$ points can reach them via non-overlapping $i$-step Vizing chains. 
For any $i$, we let $P_{i-1}$ be the multi-set of points which can reach the $1$-hop neighbourhood through non-overlapping $i$-step Vizing chains. We add a point once for each $i$-step Vizing chain, beginnnig at the point, that reaches $G'$ through edges in $G-G'$. Then by above \[
|P_{i-1}| \leq  (3(\Delta+1)^{4})^{i} \cdot{} |V(G')| \leq 3(\Delta+1)^{4}(3(\Delta+1)^{4})^{i-1} \cdot{} k \cdot{} s
\]
where $s = (\Delta+1) \cdot{} \ell \cdot{} \log n$.

We construct a family $\mathcal{F}(e)$ for $e$ to avoid as follows: if a non-overlapping $j$-step Vizing chain constructed on $e$ reaches a vertex $w$ in the 1-hop neighbourhood of $G'$ in $G$, then add the last point we extended through before reaching $w$ to $\mathbb{F}_{j-1}(e)$. Note that in particular this means that any $\mathcal{F}(e)$-avoiding non-overlapping multi-step Vizing chain constructed on $e$ has to be vertex disjoint from the one-hop neighbourhood of $G'$, since we never extend through a point, which can reach said neighbourhood through a $1$-step Vizing chain.
We say that the family for $e$ is \emph{unavoidable} if it is not $4(\Delta+1)^4$-bounded. Note that one might actually be able to avoid it, but for the sake of this analysis we will simply assume that this is not the case.

A vertex $w$ has to be represented more than $(4(\Delta+1)^4)^{i}$ times in $P_i$ for some $i$ for the family of that vertex to be unavoidable.
Indeed, for any edge $e$ incident to $w$, we have $|\mathbb{F}_{i}(e)| \leq$ the number of times $w$ is present in $P_i$. 
This is because we add $w$ to $P_i$ once for each $(i+1)$-step Vizing chain beginning at $w$ that reaches $G'$, and we add at most one vertex to $|\mathbb{F}_{i}(e)|$ for each such Vizing chain.
Hence, the maximum number of \emph{unavoidable} families that we construct is upper bounded by:
\begin{align*}
    (\Delta+1)k\cdot{} s + \sum \limits_{i = 1}^{\log_{\frac{\ell}{4\Delta^3}} n} \frac{|P_{i}|}{(4(\Delta+1)^{4})^{i}} & \leq (\Delta+1)k\cdot{} s + 3(\Delta+1)^{4}\sum \limits_{i = 1}^{\infty} \frac{k (3(\Delta+1)^{4})^{i} \cdot{} s}{(4(\Delta+1)^{4})^{i}} \\
    & \leq 3(\Delta+1)^{4}\cdot{}k\cdot{} s\sum \limits_{i = 0}^{\infty} \paren{\frac{3}{4}}^{i} \\
    & \leq 12(\Delta+1)^{4}\cdot{}k\cdot{} s
\end{align*}
hence if $k < \frac{|U|}{12(\Delta+1)^{4}s}$, then there exists an edge $e \in U$ such that $e$ has an augmenting non-overlapping Vizing chain of size $s$ that is completely disjoint from $G'$, and so we may extend $W$ by adding this edge. Indeed, since $e$ avoids $\mathcal{F}(e)$, none of the Vizing chains considered are allowed to be extended through a point that can reach $G'$ by a $1$-step Vizing chain. This means that any subgraph created in this manner will be vertex disjoint from $G'$. 
\end{proof}
In particular, this theorem now allows us to prove Theorem~\ref{thm:algoLOCALVT}, which we restate here for the readers' convenience. 
\begin{theorem}[Theorem~\ref{thm:algoLOCALVT}]
There is a deterministic LOCAL algorithm that computes a $(\Delta+1)$-edge-colouring in $\tilde{O}(\poly(\Delta)\log^6 n)$ rounds.
\end{theorem}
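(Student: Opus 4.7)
The plan is to turn Theorem~\ref{thm:paraAug} into a deterministic LOCAL algorithm by iterating \emph{augmentation phases}. In each phase I identify a set $W \subset U$ of uncoloured edges supporting pairwise vertex-disjoint augmenting non-overlapping Vizing chains of size $s = O(\Delta^7 \log n)$, with $|W| \geq |U|/O(\poly(\Delta) \log n)$, and augment all of them simultaneously. Because the chains are vertex-disjoint, shifting them commutes and the resulting partial colouring is still proper; hence each phase multiplies the number of uncoloured edges by a factor of $1 - 1/O(\poly(\Delta) \log n)$, so $O(\poly(\Delta) \log^2 n)$ phases suffice to colour every edge of $G$. Theorem~\ref{thm:paraAug} can be re-invoked after each phase because it holds for \emph{any} partial $(\Delta+1)$-edge colouring, not only the initial one.

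To execute a single phase in LOCAL, every vertex first collects the topology and current colouring of its $s$-hop neighbourhood in $s = O(\poly(\Delta) \log n)$ rounds. Each uncoloured edge then enumerates locally every candidate augmenting non-overlapping Vizing chain of size at most $s$ produced by Theorem~\ref{thm:LOCALVT}. Following the reduction sketched by Bernshteyn~\cite{BERNSHTEYN}, I build an auxiliary hypergraph $\mathcal{H}$ on $V(G)$ whose hyperedges are the vertex sets of these candidates; each hyperedge has rank $O(s)$, every vertex of $G$ lies in at most $\poly(\Delta,\log n)$ hyperedges, and by Theorem~\ref{thm:paraAug} $\mathcal{H}$ contains a matching of size at least $|U|/O(\poly(\Delta) \log n)$. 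Since one round on $\mathcal{H}$ costs $O(s)$ rounds on $G$, invoking a deterministic LOCAL algorithm for (constant-factor approximate) maximum hypergraph matching---e.g.\ Harris~\cite{harris2019distributed} derandomised via the network-decomposition machinery of Rozhon and Ghaffari~\cite{rozhovn2020polylogarithmic}---extracts a matching of size a constant fraction of the optimum in $\tilde{O}(\poly \log n)$ rounds of $\mathcal{H}$, i.e.\ $\tilde{O}(\poly(\Delta) \log^{4} n)$ rounds of $G$. The chosen vertex-disjoint chains are then shifted in parallel in $O(s)$ additional rounds, closing the phase.

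Multiplying the per-phase cost $\tilde{O}(\poly(\Delta) \log^{4} n)$ by the $O(\poly(\Delta) \log^{2} n)$ phases yields the claimed $\tilde{O}(\poly(\Delta) \log^{6} n)$ bound. The main obstacle is selecting and analysing the hypergraph-matching subroutine so that its dependencies on the rank $O(s)$ and on the maximum hyperedge-degree of $\mathcal{H}$ are absorbed entirely into the $\poly(\Delta)$ prefactor without adding to the exponent on $\log n$. A secondary but important subtlety is verifying that an approximate (rather than maximum) matching in $\mathcal{H}$ suffices: this works because each phase only needs to remove a $\Theta(1/\poly(\Delta) \log n)$ fraction of $U$, so any constant-factor loss from the hypergraph matching algorithm is swallowed by the $\poly(\Delta) \log n$ factor in the phase count. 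With these two pieces in place, the algorithm runs in the claimed number of rounds.
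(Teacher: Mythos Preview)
Your overall architecture matches the paper's exactly: iterate phases, in each phase build the hypergraph of small augmenting chains, apply Harris's distributed hypergraph-matching algorithm, and augment the resulting disjoint chains in parallel. However, two of the quantitative claims you make are wrong, and only by luck do your errors cancel to give $\log^6 n$.

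First, the hypergraph degree is \emph{not} $\poly(\Delta,\log n)$. From a single uncoloured edge there are up to $l^{\log n}$ admissible multi-step Vizing chains (one of $l$ truncation points at each of $\log n$ steps), so the degree $d$ of $\mathcal{H}$ is only bounded by $m\cdot l^{\log n}=n^{O(\log\Delta)}$. What rescues the analysis is that Harris's theorem depends on $\log d$, which is $O(\log n\log\Delta)=\tilde{O}(\log n)$; you never use this and instead rely on a false polynomial bound. Second, Harris's algorithm is already deterministic (no Rozhon--Ghaffari derandomisation is needed) and it returns an $O(r)$-approximation, \emph{not} a constant-factor one. Since the rank is $r=O(\poly(\Delta)\log n)$, the matching you actually get has size $\Omega(|U|/(\poly(\Delta)\log^2 n))$, not $\Omega(|U|/(\poly(\Delta)\log n))$; this costs an extra $\log n$ in the phase count.

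With the correct parameters the accounting is: Harris runs in $\tilde{O}(r\log d+\log^2 d)=\tilde{O}(\poly(\Delta)\log^2 n)$ rounds on $\mathcal{H}$, hence $\tilde{O}(\poly(\Delta)\log^3 n)$ rounds on $G$ per phase, and because of the $O(r)$ approximation loss you need $O(\poly(\Delta)\log^3 n)$ phases. The product is $\tilde{O}(\poly(\Delta)\log^6 n)$, matching the paper's $3+3$ split rather than your $4+2$.
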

The approach from now on is completely symmetrical to that of Bernshteyn~\cite{BERNSHTEYN}. We apply the following Theorem due to Harris:
\begin{theorem}[Harris, Theorem 1.1~\cite{harris2019distributed}] \label{thm:harris}
There is a deterministic LOCAL algorithm that outputs an $O(r)$ approximation to a maximum matching on a hypergraph on $n$ vertices with maximum degree $d$ and rank $r$ in 
\[
\tilde{O}(r \log d + \log^2 d + \log^* n)
\]
rounds. Here $\tilde{O}(x)$ hides $\poly(\log x)$ factors.  
\end{theorem}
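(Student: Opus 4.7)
The plan is to reduce $(\Delta+1)$-edge colouring to a sequence of approximate hypergraph matching instances that are solved via Harris's theorem. This is essentially Bernshteyn's reduction~\cite{BERNSHTEYN}; the improvement over his $\log^{11} n$ bound comes entirely from the shorter augmenting subgraphs guaranteed by Theorem~\ref{thm:paraAug}.

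At the start of each outer iteration, let $U$ be the set of currently uncoloured edges and set $s = C\Delta^{7}\log n$ for a sufficiently large constant $C$. In parallel, every uncoloured edge $e$ gathers its $s$-neighbourhood in $O(s)$ LOCAL rounds and enumerates every augmenting non-overlapping Vizing chain rooted at $e$ of size at most $s$. Build a hypergraph $H$ on vertex set $V(G)$ whose hyperedges are the vertex sets of all such chains. Theorem~\ref{thm:paraAug} applied to the current colouring provides a matching in $H$ of size at least $|U|/O(\poly(\Delta)\log n)$, since the witness set $W$ from that theorem contributes exactly that many pairwise vertex-disjoint hyperedges of $H$. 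The hypergraph has rank $r \le s$ and maximum degree $d \le n^{O(1)}$.

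Apply Harris's algorithm to $H$. By Theorem~\ref{thm:harris} it outputs an $O(r)$-approximate matching in $\tilde{O}(r\log d+\log^{2}d+\log^{*}n) = \tilde{O}(\poly(\Delta)\log^{2}n)$ hypergraph rounds. Each such round can be simulated in $G$ by $O(s)$ LOCAL rounds, because the information needed inside a single hyperedge need only travel $s$ hops in $G$. This yields a matching $M$ of size $|M| \ge |U|/O(\poly(\Delta)\log^{2}n)$ consisting of pairwise vertex-disjoint augmenting chains in $\tilde{O}(\poly(\Delta)\log^{3}n)$ LOCAL rounds. Since the chains are vertex-disjoint, their shifts commute and can be applied simultaneously in a further $O(s)$ rounds, reducing $|U|$ by a factor $1 - 1/O(\poly(\Delta)\log^{2}n)$. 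Iterating $O(\poly(\Delta)\log^{3}n)$ times drives $|U|$ to $0$, giving a total of $\tilde{O}(\poly(\Delta)\log^{6}n)$ LOCAL rounds.

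The main obstacle is justifying that Theorem~\ref{thm:paraAug} certifies a large matching in our hypergraph: the theorem produces $W$ and one chain per edge of $W$ via an \emph{existential} family-avoidance argument, whereas our algorithm puts \emph{every} short augmenting chain of every uncoloured edge into $H$. Because the witness chains from Theorem~\ref{thm:paraAug} are themselves hyperedges of $H$, the lower bound on the matching number of $H$ is immediate, and Harris's approximation loses only the $O(r) = O(\poly(\Delta)\log n)$ factor already accounted for above. The remaining bookkeeping combines the $O(\log^{3}n)$ cost per iteration (chain size times Harris's hypergraph complexity) with the $O(\log^{3}n)$ iterations required to drive $|U|$ to zero.
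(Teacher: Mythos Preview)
Your proposal does not address the stated theorem. Theorem~\ref{thm:harris} is Harris's hypergraph matching result, quoted from~\cite{harris2019distributed} and used in this paper purely as a black box; the paper gives no proof of it, and your write-up makes no attempt to prove it either. What you have written is a proof of Theorem~\ref{thm:algoLOCALVT} that \emph{invokes} Theorem~\ref{thm:harris}.

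If the intended target was Theorem~\ref{thm:algoLOCALVT}, then your argument is essentially the paper's own: define the hypergraph of short augmenting chains, use Theorem~\ref{thm:paraAug} to lower-bound its matching number, apply Harris's algorithm, simulate hypergraph rounds in $G$, and iterate until $U=\emptyset$. One inaccuracy: you assert $d\le n^{O(1)}$, but the paper's bound is $d\le m\cdot l^{\log n}$ with $l=\Theta(\Delta^{7})$, hence $d\le n^{O(\log\Delta)}$ rather than $n^{O(1)}$. This does not harm the round count, since only $\log d=O(\log n\,\log\Delta)$ enters Harris's complexity and is absorbed into $\tilde O(\poly(\Delta)\log^{2}n)$, but the stated bound on $d$ is wrong as written.
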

The idea is to define a hypergraph for which a matching corresponds to disjoint augmenting Vizing chains that we may augment in parallel. Theorem~\ref{thm:paraAug} will show that this hypergraph contains a 'large' matching if many edges are left uncoloured.
Define a hypergraph $H$ on the vertex set of $G$ as follows: for each augmenting, non-overlapping Vizing chain of an uncoloured edge, where each step of the Vizing chain has length at most $\ell$ and each Vizing chain has at most $\log n$ steps, we add a hyper-edge containing exactly the vertices of the Vizing chain. 
For every uncoloured edge, we have at most $\ell^{\log n}$ such chains, so the maximum degree $d$ of a vertex in this graph is upper bounded by $ d \leq m\cdot{}\ell^{\log n}$. 

By Theorem~\ref{thm:paraAug}, the size of a maximum matching is at least $\Omega \paren{\frac{|U|}{\poly(\Delta)\log n}}$. Finally, the maximum \emph{rank} of a hyper-edge is $O(\ell \cdot{} \log n)$ by construction, and so Theorem~\ref{thm:harris} finds a matching of size at least $\Omega\paren{\frac{|U|}{\poly(\Delta)\log^2 n}}$.

By applying Theorem~\ref{thm:harris}, we may in $\tilde{O}(\poly(\Delta)\log^2 n)$ rounds on the hyper graph $H$ construct an approximate maximum matching in the hypergraph. Since it takes $O(\poly(\Delta)\log n)$ rounds on $G$ to simulate each round on the hypergraph $H$, we find that we can find such a maximum matching in $\tilde{O}(\poly(\Delta)\log^3 n)$ rounds in $G$. Once the maximum matching is found, we augment the corresponding Vizing chains in parallel in $O(\poly(\Delta)\log n)$ rounds.

Hence, it is enough to give an upper bound on the number of times this procedure needs to be run. Each time it is run, the number of uncoloured edges $|U|$ drops from $|U|$ to $|U|\paren{1-\Omega \paren{\frac{1}{\poly(\Delta)\log^2 n}}}$. So we need to perform at most 
\begin{align*}
    \log_{\paren{\frac{1}{1-\Omega \paren{\frac{1}{\poly(\Delta)\log^2 n}}}}} n &= \frac{\log n}{\log\paren{\frac{1}{1-\Omega \paren{\frac{1}{\poly(\Delta)\log^2 n}}}}} \\
    & \leq \frac{\log n}{\log \paren{1+\Omega \paren{\frac{1}{\poly(\Delta)\log^2 n}}}} \\
    & \leq \frac{\log n}{\Omega \paren{\frac{1}{\poly(\Delta)\log^2 n}}} \\
    & \leq O(\poly(\Delta)\log^3 n)
\end{align*}
where we used that for small enough $x$ we have $\frac{x}{2} \leq \frac{x}{1+x} \leq \log(1+x)$. Thus we need to simulate at most $O(\poly(\Delta)\log^3)$ rounds on $H$, resulting in a total round complexity of $\tilde{O}(\poly(\Delta) \log^6 n)$. Combining this gives Theorem~\ref{thm:algoLOCALVT}.

\section{Application 4: Dynamic$(1+\varepsilon)\Delta$-edge-colouring} \label{sec:dynCol}
The overall structure of our approach is similar to that of Duan, He and Zhang~\cite{duan}. Let us first recap their high-level approach. Their approach has 3 steps. 
First they make an amortised reduction to the easier non-adaptive version of the problem, where we may assume $\Delta$ is fixed. The idea is to use multiple copies of the graph where each copy has a different upper bound on the maximum degree, so that one can apply a non-adaptive algorithm on each copy. 
Finally, they describe how to move points between these copies as their degrees change. The key part that makes the reduction amortised is that one might have to reinsert a point into a copy with a smaller degree-cap, when the point has its degree lowered. 
It is not immediately clear whether such an approach can be de-amortised, since we may not be able to insert edges incident to a high-degree vertex prematurely. 

The second step is to reduce the problem to one where $\Delta = O(\frac{\log n}{\varepsilon})$. This is done by sampling a subset $S \subset [\Delta + 1]$ such that every vertex in $G$ has an available colour in $S$. Such a subset is called a \emph{palette}. Given a palette $S$, one can restrict oneself to $G[S]$ i.e.\ the subgraph spanned by all edges coloured with a colour from $S$, since $S$ was chosen exactly so as to make the standard Vizing chain constructions go through. The third and final step is to show how to insert an uncoloured edge into such a low-degree graph. To do this, Duan, He and Zhang~\cite{duan} show that one can construct non-overlapping multi-step Vizing chains by using disjoint palettes for each step, and that short Vizing chains of this type have a good probability of being augmenting. 
Since they need to sample disjoint palettes for each step, each vertex need to have many free colours, and so one needs a lower bound on the maximum degree of roughly $\Delta = \Omega(\frac{\log^2 n}{\varepsilon^2})$ for this approach to work. Note that deletions are trivial to handle at this point, since $\Delta$ is fixed and we just need to produce a $(\Delta + 1)$-edge-colouring. Note also that Duan, He and Zhang~\cite{duan} in fact constructed non-overlapping Vizing chains.

Our approach will circumvent the first step, but be similar to the last two steps. The starting point is the observation that shifting an augmenting Vizing chain only increases the number of edges coloured with colours present at the bichromatic path of the Vizing chain. These colours are chosen as colours that are available at some of the vertices in the fan. Hence, if these vertices all have low-degree it seems unnecessary to choose available colours which are far larger than the degrees of these vertices and in that way colour a lot of edges using colours $\kappa$ where $\kappa$ is large. Motivated by this observation, we show that one can sample a form for local palettes, where each vertex $v$ only picks available colours in $[(1+\varepsilon)d(v)]$. We then show that if one constructs Vizing chains using such a local palette, then we can maintain the invariant that the number of edges coloured $\kappa$ is upper bounded by the number of vertices $v$ for which $\kappa \in [(1+\varepsilon)d(v)]$ and $v$ is also incident to an edge coloured $\kappa$. Furthermore, we show that we only have to recolour $O(1)$ edges to restore this invariant after a deletion. This allows us to reduce the problem to the case where $\Delta$ is assumed to be small. 
Luckily in this case we can apply a lemma proved by Bernshteyn~\cite{BERNSHTEYN} to -- with good probability -- construct short augmenting multi-step Vizing chains, and use these Vizing chains to colour any uncoloured edges. Since the Vizing chains can be constructed without sampling multiple palettes, this also allows us to remove the condition on $\Delta$.

\subsection{Local Palettes}
As noted earlier a key problem with sampling a palette from $[(1+\varepsilon)\Delta]$ is that we may end up choosing available colours from the upper end of the interval to colour edges incident to vertices of very low degree. 
This is a problem since if $\Delta$ drops too far, we may have to recolour all of these edges. If there are many such edges this is time consuming. 
Instead, we will sample a \emph{local palette} in which each vertex $v$ will pick an available colour from $[(1+\varepsilon)d(v)]$. 
Note that to be completely precise, we would have to choose a colour in $[\lceil (1+\varepsilon)d(v) \rceil]$, but for ease of exposition we will just write $[(1+\varepsilon)d(v)]$. 
At all times it is straightforward to replace $[(1+\varepsilon)d(v)]$ by $[\lceil (1+\varepsilon)d(v) \rceil]$.
Recall that $A_{\varepsilon}(v) = A(v) \cap [(1+\varepsilon)d(v)]$. We have:
\begin{definition} \label{def:localPalette}
A subset of colours $S \subset [(1+\varepsilon)d(v)]$ is said to be an \emph{$\varepsilon$-local palette} if it holds for all $v \in V(G)$ that $|S \cap A_{\varepsilon}(v)| \geq 1$ i.e.\ if every vertex $v$ has an available colour in $S$ that is also in $[(1+\varepsilon)d(v)]$.
\end{definition}
We will show that there exist small local palettes, and that we may sample one efficiently. 
Before showing this, we show that we may chop $[(1+\varepsilon)\Delta]$ into smaller intervals such that we can sample from these intervals to get our local palette.
To this end note that given $0 <\varepsilon \leq 1$ and $\Delta$, we may partition $[(1+\varepsilon)\Delta]$ into intervals $I_{i}$ such that $I_{i} = [2^{i},2^{i+1}]$ for $i \leq \lceil \log{(1+\varepsilon)\Delta} \rceil$. 
We will say a vertex $v$ is \emph{$\varepsilon$-dense} in an interval $I_{i}$ if $|A_{\varepsilon}(v) \cap I_{i}| \geq \varepsilon |I_{i}|/4$. The following lemma says that all vertices are $\varepsilon$-dense in at least one meaningful interval.
\begin{lemma} \label{lma:epsDense}
For any vertex $v$ there exists an $i$ such that $v$ is $\varepsilon$-dense in $I_i$.
\end{lemma}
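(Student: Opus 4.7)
The plan is to prove the lemma by a pigeonhole / averaging argument across the intervals $I_i$, using that $A_\varepsilon(v)$ is reasonably large and that the total ``capacity'' of the intervals intersecting $[(1+\varepsilon)d(v)]$ is only a constant factor bigger than $d(v)$. Intuitively, the $\varepsilon d(v)$ available colours cannot be spread too thinly over intervals whose combined length is only $O(d(v))$, so at least one interval must capture an $\varepsilon/4$-fraction of its own size.

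The first step is a simple counting lower bound: the set $[(1+\varepsilon)d(v)]$ contains at least $(1+\varepsilon)d(v)$ colours, and at most $d(v)$ of them can appear on edges incident to $v$, so
\[
|A_\varepsilon(v)| \;\geq\; (1+\varepsilon)d(v) - d(v) \;=\; \varepsilon d(v).
\]
The second step is to bound the total size of the relevant intervals. The intervals that can possibly intersect $[(1+\varepsilon)d(v)]$ are $I_0, I_1, \dots, I_k$ where $k = \lfloor \log((1+\varepsilon)d(v)) \rfloor$, and since $|I_i| \le 2^i$ (or $2^i+1$ if the intervals are interpreted as closed, with the tiny additive slack being absorbed below), the geometric series gives
\[
\sum_{i=0}^{k} |I_i| \;\le\; 2^{k+1} \;\le\; 2(1+\varepsilon) d(v).
\]

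The third step is the contradiction. Suppose $v$ is not $\varepsilon$-dense in any interval; then $|A_\varepsilon(v) \cap I_i| < \varepsilon |I_i|/4$ for every $i \le k$, and $A_\varepsilon(v) \cap I_i = \emptyset$ for $i > k$. Because every colour of $[(1+\varepsilon)d(v)]$ lies in at least one $I_i$, summing yields
\[
|A_\varepsilon(v)| \;\le\; \sum_{i=0}^{k} |A_\varepsilon(v) \cap I_i| \;<\; \frac{\varepsilon}{4} \sum_{i=0}^{k} |I_i| \;\le\; \frac{\varepsilon}{4} \cdot 2(1+\varepsilon) d(v) \;=\; \frac{\varepsilon(1+\varepsilon)}{2}\, d(v),
\]
which for $\varepsilon \le 1$ is at most $\varepsilon d(v)$, contradicting the first step (with strict inequality when $\varepsilon<1$, and handled by the slack below at $\varepsilon = 1$).

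The main obstacle I anticipate is keeping the constants consistent: since the intervals $I_i = [2^i, 2^{i+1}]$ are written as closed, adjacent intervals share an endpoint, and $|I_i|$ picks up an additive $+1$. Both effects together inflate $\sum |I_i|$ by only an additive $O(\log d(v))$ term that is negligible compared to $2^{k+1}$, and the remaining slack in the inequality $\varepsilon(1+\varepsilon)/2 \le \varepsilon$ easily absorbs it for all but pathologically small $d(v)$; the edge case $d(v) = O(1/\varepsilon)$ can be dispatched by hand, since there $[(1+\varepsilon)d(v)]$ is itself contained in a single $I_i$ and being $\varepsilon$-dense reduces to $|A_\varepsilon(v)| \ge \varepsilon |I_i|/4$, which again follows directly from the bound $|A_\varepsilon(v)| \ge \varepsilon d(v)$.
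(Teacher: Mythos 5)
Your proposal is correct and follows essentially the same route as the paper: both use the lower bound $|A_\varepsilon(v)| \geq \varepsilon d(v)$ and, assuming non-density in every interval, a geometric-sum upper bound of roughly $\tfrac{\varepsilon}{4}\cdot 2(1+\varepsilon)d(v) \leq \varepsilon d(v)$ (valid for $\varepsilon \leq 1$) to reach a contradiction. The only difference is bookkeeping: the paper bounds the topmost interval separately by $\varepsilon d(v)/2$ and the remaining intervals by another $\varepsilon d(v)/2$, whereas you sum all intervals uniformly against $2^{k+1}$ -- an equivalent choice.
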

\begin{proof}
Consider first the interval that $(1+\varepsilon)d(v) \in I_{j}$ belongs to. Now, if $v$ is not $\varepsilon$-dense in $I_j$, it certainly holds that $v$ has less than $d(v)\varepsilon/2$ available colours in $I_{j}$. Indeed, if $2^{j} > (1+\varepsilon/2)d(v)$ this holds trivially, so we may assume that $2^{j} \leq d(v)(1+\varepsilon/2)$. This means that $|I_{j}| \leq (1+\varepsilon/2)d(v)$ and so if $v$ has $d(v)\varepsilon/2$ available colours in $I_j$, $v$ would be \emph{$\varepsilon$-dense} in $I_{j}$. 
Indeed, it holds that
\[
\frac{\varepsilon d(v)}{2} \geq \frac{\varepsilon(1+\varepsilon)d(v)}{4} \geq |I_{j}| \cdot{} \frac{\varepsilon}{4}
\]

Suppose furthermore that no $i$ exists such that $v$ is $\varepsilon$-dense in $I_i$. Then it holds that
\begin{align*}
    |A_{\varepsilon}(v)| &< \varepsilon d(v)/2 + \sum \limits_{k = 1}^{j-1} \frac{\varepsilon}{4} \cdot{} 2^k
                         \leq \varepsilon d(v)/2 + \frac{\varepsilon}{4} \cdot{} 2^{j} \\
                         &\leq \varepsilon d(v)/2 + \frac{\varepsilon}{4} \cdot{} (1+\varepsilon)d(v) 
                         \leq \varepsilon d(v)
\end{align*}
Contradicting the fact that $v$ has at least $\varepsilon d(v)$ free colours. 
\end{proof}
We will use this lemma to show how to sample a local palette for each vertex $v$. 
Our strategy will be to sample a sufficient amount of points inside each interval $I_{i}$. By Lemma~\ref{lma:epsDense}, we know that every vertex is \emph{$\varepsilon$-dense} in some meaningful interval, and so when we sample this interval, we will hit an available colour for said vertex with high probability. We show this formally in the following lemma.
\begin{lemma} \label{lma:localPalSize}
For any $\varepsilon, \Delta$ there exists an $\varepsilon$-local palette of size $O(\frac{\log n \log \Delta}{\varepsilon})$. Furthermore, we can construct such a palette in $O(\frac{\log n \log \Delta}{\varepsilon})$ time with high probability. 
\end{lemma}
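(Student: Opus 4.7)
The plan is to build the palette $S$ as a union of small random samples, one per interval $I_i = [2^i, 2^{i+1}]$ in the partition of $[(1+\varepsilon)\Delta]$. Concretely, for each $i \leq \lceil \log((1+\varepsilon)\Delta) \rceil$, I would draw a multiset $S_i \subset I_i$ by sampling $t = \Theta(\varepsilon^{-1}\log n)$ colours uniformly and independently from $I_i$, and then set $S = \bigcup_i S_i$. This immediately yields $|S| = O(\varepsilon^{-1}\log n \log \Delta)$ and takes $O(\varepsilon^{-1}\log n \log \Delta)$ time to produce.

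The correctness argument is a standard sampling/union-bound computation driven by Lemma~\ref{lma:epsDense}. For a fixed vertex $v$, pick an index $i = i(v)$ for which $v$ is $\varepsilon$-dense in $I_i$, so $|A_\varepsilon(v) \cap I_i| \geq \varepsilon|I_i|/4$. Then the probability that a single sample from $I_i$ misses $A_\varepsilon(v)$ is at most $1 - \varepsilon/4$, and the probability that all $t$ independent samples miss is at most
\[
(1-\varepsilon/4)^{t} \leq \exp(-\varepsilon t/4).
\]
Choosing $t = \lceil (16/\varepsilon)\ln n \rceil$ makes this at most $n^{-4}$. A union bound over the $n$ vertices then shows that with probability at least $1 - n^{-3}$ every vertex $v$ has $S_{i(v)} \cap A_\varepsilon(v) \neq \emptyset$, so that $|S \cap A_\varepsilon(v)| \geq 1$ for every $v$, which is exactly the condition defining an $\varepsilon$-local palette.

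The existential statement follows by derandomisation in the trivial sense: since the random construction succeeds with positive probability, some realisation $S$ satisfies the condition; this gives a palette of the claimed size $O(\varepsilon^{-1}\log n \log \Delta)$. The running time bound is immediate since we draw $O(\varepsilon^{-1}\log n \log \Delta)$ colours in total and each sample takes constant time. The only mildly delicate point is that the index $i(v)$ depends on $v$, so one must be careful to condition on (or quantify over) each vertex's good interval \emph{before} taking the union bound; there is no obstacle here beyond bookkeeping, since Lemma~\ref{lma:epsDense} guarantees such an $i(v)$ exists unconditionally for every $v$.
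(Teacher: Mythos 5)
Your proposal is correct and follows essentially the same route as the paper: partition $[(1+\varepsilon)\Delta]$ into dyadic intervals, invoke Lemma~\ref{lma:epsDense} to fix a dense interval $i(v)$ for each vertex, sample $\Theta(\varepsilon^{-1}\log n)$ colours per interval, and union bound. The only (harmless) difference is that the paper includes an interval wholesale when $|I_j| \leq c\varepsilon^{-1}\log n$, whereas you sample with replacement uniformly in all cases, which gives the same size bound and the same $(1-\varepsilon/4)^t$ failure estimate.
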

\begin{proof}
We will sample $S$ as follows: for each $j \in [\lceil \log (1+\varepsilon)\Delta \rceil]$ if $|I_{j}| \leq \frac{c\log n}{\varepsilon}$, we set $S_j = I_{j}$. Otherwise, we sample $\frac{c\log n}{\varepsilon}$ points uniformly at random from $I_j$, and let the sampled points form $S_j$. Clearly $S$ has the claimed size. We will show that $S = \bigcup \limits_{k} S_{k}$ is a local palette with probability $1-n^{c/4-1}$.

Consider an arbitrary vertex $v$. We will upper bound the probability that $S$ is not an $\varepsilon$-local palette for $v$. By Lemma~\ref{lma:epsDense}, we know that $v$ is $\varepsilon$-dense in some $I_j$. Now if $|I_j| \leq \frac{c\ln n}{\varepsilon}$, we include the entire interval in $S$, and so certainly $S$ is $\varepsilon$-local for $v$. 
Otherwise, $v$ has at least $\varepsilon \cdot{} |I_j| / 4$ free colours. Hence, the probability that none are sampled may be upper bounded by $(1-\varepsilon/4)^{c \frac{\ln n}{\varepsilon}} = (1-\frac{c\ln n}{4} \cdot{} \frac{\varepsilon}{c \ln n})^{c \frac{\ln n}{\varepsilon}} \leq e^{-\frac{c}{4} \ln n} = n^{-c/4}$.
Hence union bounding over all choices of $v$ shows that $S$ is a local palette with probability at least $1-n^{c/4-1}$.
\end{proof}

\subsection{Colouring Protocol} \label{subsec:col}
In this section, we briefly discuss how to replace the construction of non-overlapping augmenting Vizing chains applied by Duan, He and Zhang~\cite{duan} with the construction of Bernshteyn~\cite{BERNSHTEYN} in order to get rid of the assumption that $\Delta  = \Omega(\frac{\log^2 n}{\epsilon^2})$. Then in the next section, we will show that using this type of construction together with a local palette, we can adapt to changing $\Delta$. 

Consider the following multi-step Vizing algorithm due to Bernshteyn:
fix an uncoloured edge $e$. Now construct an augmenting Vizing chain on $e$ using the standard approach from Vizing's Theorem. If the Vizing chain has length $ \ell = \Omega(\Delta^6 \log n)$, pick an edge uniformly at random among the first $\ell$ edges of the bichromatic path and truncate the Vizing chain here. Now recursively construct a multi-step Vizing chain by applying Lemma~\ref{lma:fan2} and the discussion immediately following the lemma. If the concatenated Vizing chain is both short and augmenting, we stop and augment it. Otherwise if it becomes overlapping, we stop. If the Vizing chain is still not short and augmenting after $\Omega(\log n)$ steps, we  also stop. Bernshteyn showed the following lemma:
\begin{lemma}[Bernshteyn, Lemma 6.1 p. 346 in~\cite{BERNSHTEYN}] \label{lma:BER}
For any $T \in \mathbb{N}$ and any $\ell\in \mathbb{N}$ such that $\lambda := \frac{\ell}{(\Delta+1)^3} > 1$, then the multi-step Vizing chain algorithm described above terminates with an augmenting, non-overlapping $i$-step Vizing chain for some $i < T$ with probability at least
\[
1-\frac{n}{\lambda^T}-\frac{3T(\Delta + 1)^3}{\lambda - 1}
\]
\end{lemma}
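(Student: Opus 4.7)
The plan is to split the failure probability into two parts: event $(A)$, that the multi-step chain remains non-overlapping through all $T$ steps but never becomes augmenting, and event $(B)$, that at some step $i < T$ the newly constructed Vizing chain overlaps the union of the previous chains. A union bound then gives $\Pr[\text{failure}] \leq \Pr[A]+\Pr[B]$, and I would prove $\Pr[A] \leq n/\lambda^T$ and $\Pr[B] \leq 3T(\Delta+1)^3/(\lambda-1)$, matching the stated bound.

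To bound $\Pr[A]$, I would use a backward counting argument on the tree of possible executions. Each step branches into at most $l$ choices (the truncation edges on the current bichromatic path), and conditioned on the truncation edge the subsequent Vizing chain is deterministically produced via Lemma~\ref{lma:fan2}. Fix any target vertex $v$. I would bound the number of length-$i$ execution branches whose endpoint is $v$ by $(\Delta+1)^{3i}$: walking backwards from $v$, the $i$-th bichromatic path must terminate at $v$, and there are at most $O((\Delta+1)^2)$ bichromatic paths ending at $v$ (one per colour pair) together with $O(\Delta)$ possible fan-centres adjacent to the far endpoint, giving $O((\Delta+1)^3)$ predecessors per step. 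Since each such branch has probability $l^{-i}$, the probability of ending at $v$ after $i$ steps is at most $\lambda^{-i}$, and a union bound over the $n$ vertices of $G$ yields $\Pr[A] \leq n\lambda^{-T}$.

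To bound $\Pr[B]$, I would analyse the per-step overlap probability. Conditioning on the previous $i-1$ chains being non-overlapping, the $i$-th truncation edge is uniform among $l$ options, and the next Vizing chain is determined by Lemma~\ref{lma:fan2}. The colour choice built into Lemma~\ref{lma:fan2} ensures that the new bichromatic path either reuses both colours of the last one or avoids them entirely, which (together with the non-overlap definition) essentially rules out overlap with the $(i-1)$-th chain; any overlap must therefore come from intersecting one of the earlier $i-2$ chains. For each earlier chain $C_j$, a counting argument in the spirit of Lemma~\ref{lma:reachingLemma} shows that the number of truncation edges whose resulting Vizing chain reaches $C_j$ is at most $O((\Delta+1)^3)$ per vertex of $C_j$; but reaching step $i$ from step $j$ already carries a factor $O(\lambda^{-(i-j)})$ from the $(A)$-style analysis, so the contribution from $C_j$ decays geometrically in the distance $i-j$. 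Summing the resulting geometric series in $\lambda^{-1}$ gives a per-step overlap probability bounded by $O((\Delta+1)^3/(\lambda-1))$, and a union bound over the $T$ steps produces $\Pr[B] \leq 3T(\Delta+1)^3/(\lambda-1)$.

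The main obstacle is the careful bookkeeping for event $(B)$: identifying exactly which overlap configurations survive after Lemma~\ref{lma:fan2}'s colour choice excludes the obvious ones, and making the geometric-decay argument in the step gap $i-j$ rigorous — this is where the $\lambda-1$ in the denominator appears. The bound on $(A)$, by contrast, reduces to a clean backward count once one observes that the $O((\Delta+1)^3)$ predecessor factor per step is essentially the quantity already estimated in Lemma~\ref{lma:packingLemma} and its variants.
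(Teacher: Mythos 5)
This lemma is not proved in the paper at all: it is quoted as a black box from Bernshteyn~\cite{BERNSHTEYN} (Lemma~6.1 there), so there is no in-paper proof to compare against. That said, your decomposition of the failure event into ``survives $T$ steps without augmenting'' and ``becomes overlapping at some step'', the backward count of at most $(\Delta+1)^{3}$ predecessors per step giving the $n\lambda^{-T}$ term, and the geometric series in the step gap giving the $3T(\Delta+1)^3/(\lambda-1)$ term, is exactly the structure of Bernshteyn's original argument, so your approach is the right one. The only caveat is that what you present is a sketch rather than a proof -- as you yourself flag, the event-$(B)$ bookkeeping (which overlap configurations survive the colour choices of Lemma~\ref{lma:fan2}, and how the hitting probability of an earlier chain decays with the gap $i-j$) is where essentially all of the work in the original proof is concentrated.
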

We will first use Lemma~\ref{lma:localPalSize} to sample an $\varepsilon$-local palette $S$. Then we will attempt to colour the uncoloured edge by restricting ourselves to $G[S]$, and then use the algorithm of Bernshteyn to attempt to construct an augmenting, $\varepsilon$-local and non-overlapping Vizing chain. 
To make the Vizing chains $\varepsilon$-local, we choose only $\varepsilon$-available colours as representative available colours, unless Lemma~\ref{lma:fan2} case $2)$ forces us to pick a colour $\kappa_1$ that is not $\varepsilon$-available at the center of the fan. However, as we show in the next section, the constructed Vizing chain is still $\varepsilon$-local.
Apart from this the constructions are similar to what we have seen before, but for completeness we elaborate on them in Appendix~\ref{app:B}. 
It is straightforward to adapt the proofs of Bernshteyn to accommodate these changes, so we will not elaborate further on this.
If the Vizing chain construction fails, we will resample $S$ and try again. See Algorithm~\ref{alg:Algo1} for pseudo-code.
\begin{algorithm} 
    \caption{An algorithm for colouring uncoloured edges.}
    \label{alg:Algo1}
    \begin{algorithmic}
            \Function{colour}{$e = uv$}
                \State Sample a local palette $S$. 
                \State Construct an $\varepsilon$-local Vizing chain $C_1$ on $uv$ in $G[S]$. 
                \State $i = 1$ 
                \While{$length(C_i) > \ell$}
                    \State Pick an edge $e'$ u.a.r. from first $\ell$ edges on $C_i$ 
                    \State $i = i + 1$ 
                    \State Truncate Vizing chain at $e'$ and extend it via the $\varepsilon$-local version of Lemma~\ref{lma:fan2}.
                    \If{Multi-step Vizing chain is overlapping or $i > T$} 
                        \State $\texttt{colour}(uv)$ 
                        \State Stop
                    \EndIf 
                \EndWhile
                \State Augment $F_1 + C_1 + \dots + F_{i} + C_{i}$ 
            \EndFunction
        \end{algorithmic}
\end{algorithm}
Thus we have the following:
\begin{corollary} \label{cor:Bern}
If $T = 2\log n$ and $\ell = 1+18(\Delta+1)^6 \log n$, then Algorithm~\ref{alg:Algo1} is called recursively at most ($c+1) \log n$ times with probability $1-n^{-c}$.
\end{corollary}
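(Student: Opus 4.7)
The plan is to combine Bernshteyn's Lemma~\ref{lma:BER} with a routine independent-trials argument. Each invocation of \texttt{colour}($uv$) draws a fresh $\varepsilon$-local palette (via Lemma~\ref{lma:localPalSize}) and then performs the random truncation procedure governed by Lemma~\ref{lma:BER}. I would first show that a single invocation succeeds with probability bounded below by a fixed constant; the desired bound on the number of recursive restarts then follows by independence.

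To bound the per-call success probability, I would substitute $l = 1 + 18(\Delta+1)^6 \log n$ and $T = 2\log n$ into Lemma~\ref{lma:BER}. We have $\lambda = l/(\Delta+1)^3 \geq 18(\Delta+1)^3\log n \geq 18\log n$, so the first error term satisfies $n/\lambda^T \leq n \cdot (18\log n)^{-2\log n}$, which is $n^{-\omega(1)}$; the second satisfies
\[
\frac{3T(\Delta+1)^3}{\lambda - 1} \;\leq\; \frac{6(\Delta+1)^3\log n}{17(\Delta+1)^3\log n} \;=\; \frac{6}{17}.
\]
Thus each single call succeeds with probability at least $1 - \tfrac{6}{17} - o(1) \geq \tfrac{1}{2}$ for $n$ beyond some absolute constant; the finitely many remaining edge cases can be dispatched directly.

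Finally, since distinct recursive invocations of \texttt{colour}($uv$) use independent randomness (new palette, new uniform truncation edges), their failure events are mutually independent Bernoulli variables of parameter at most $1/2$. Hence the probability that all of the first $(c+1)\log n$ invocations fail is at most $(1/2)^{(c+1)\log n}$; taking the logarithm in base $2$ (or else absorbing a constant factor into $c$) this is at most $n^{-(c+1)} \leq n^{-c}$. A supplementary union bound over the $O(\log n)$ palette samplings, each of which fails with probability at most $n^{-\Omega(1)}$ by Lemma~\ref{lma:localPalSize}, absorbs the remaining low-order failure modes.

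The only step needing genuine care is the constant-factor calculation that the per-call success probability is bounded away from zero; once that is established, the conclusion is immediate from independence. A secondary point is to remark that the $\varepsilon$-local modification of Bernshteyn's chain construction outlined in Section~\ref{subsec:col} (and detailed in Appendix~\ref{app:B}) preserves the hypothesis needed by Lemma~\ref{lma:BER}, namely that the truncation edge on each step is chosen uniformly at random from the first $l$ edges of the current bichromatic path.
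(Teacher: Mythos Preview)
Your proposal is correct and follows essentially the same approach as the paper: plug the chosen $T$ and $l$ into Lemma~\ref{lma:BER} to get a per-call success probability of at least $1/2$, then use independence of the restarts to bound the probability of $(c+1)\log n$ consecutive failures by $n^{-(c+1)}$, and finish with a union bound over the palette-sampling events from Lemma~\ref{lma:localPalSize}. The only cosmetic difference is that the paper bounds the second error term by $6/18 = 1/3$ rather than your $6/17$, but either way the per-call success probability exceeds $1/2$ and the rest of the argument is identical.
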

\begin{proof}
Assume first that all the sampled colour-sets form $\varepsilon$-local palettes. Then by Lemma~\ref{lma:BER} a call to $\operatorname{colour}(uv)$ is successful with probability at least 
\[
1-\frac{n}{(\Delta^3\log n)^{2 \log n}}-\frac{6(\Delta+1)^3 \log n}{18(\Delta + 1)^3 \log n} \geq 1-\frac{1}{n}-\frac{1}{3} \geq \frac{1}{2}
\]
Hence a call to $\operatorname{colour}(uv)$ fails with probability at most a half. Hence the probability that it fails $(c+1) \log n$ times in a row is at most $n^{-(c+1)}$. Finally, by sampling large enough palettes, it follows by Lemma~\ref{lma:localPalSize} that each colour-set is an $\varepsilon$-local palette with probability at least $n^{-(c+1)}$. Union-bounding over all of these events yields the result.
\end{proof}
Note that the following lemma then is immediate from Corollary~\ref{cor:Bern} and Lemma~\ref{lma:localPalSize} through the use of basic data structures. 
\begin{lemma} \label{lma:Algo2}
Using $\operatorname{colour}(uv)$ an edge can be inserted in time $O(\varepsilon^{-6} \log ^6 \Delta \log^9 n)$ with high probability.
\end{lemma}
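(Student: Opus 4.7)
The plan is to bound the running time of a single call to \texttt{colour}$(uv)$ by a polynomial factor depending only on the maximum degree inside the palette-induced subgraph $G[S]$, and then multiply by the high-probability bound on the number of recursive calls given by Corollary~\ref{cor:Bern}. The guiding observation is that once we restrict attention to $G[S]$, the effective maximum degree that governs the Vizing-chain construction is $\Delta' := |S| = O(\varepsilon^{-1} \log n \log \Delta)$ rather than the true maximum degree $\Delta$, so the $\Delta^6$ factor in the length parameter $l$ of Algorithm~\ref{alg:Algo1} is really a $\Delta'^6$ factor.

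First I would account for the palette step. By Lemma~\ref{lma:localPalSize} a single $\varepsilon$-local palette $S$ can be sampled in $O(\varepsilon^{-1} \log n \log \Delta)$ time and with high probability is indeed $\varepsilon$-local. Next, inside $G[S]$, the constructed multi-step Vizing chain has chain-length at most $T = 2 \log n$ and each constituent chain has length at most $l = 1 + 18(\Delta'+1)^6 \log n$, so the total number of edges touched during the construction is at most
\[
T \cdot l \;=\; O\!\paren{\Delta'^6 \log^2 n} \;=\; O\!\paren{\tfrac{\log^8 n \, \log^6 \Delta}{\varepsilon^6}}.
\]
Each individual step of the construction (extending a fan by one edge, or walking one edge further along a bichromatic path, or performing the final shifts on the augmenting chain) can be supported in $\tilde{O}(1)$ worst-case time using standard auxiliary data structures: for every vertex $v$ we keep an array indexed by colour $\kappa$ pointing to the incident edge coloured $\kappa$ (if any), together with a structure maintaining $A_\varepsilon(v)$ supporting insertion, deletion, membership, and sampling in $O(\log n)$ time. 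With these, picking representative available colours for fans (possibly restricted to the palette and to $[(1+\varepsilon)d(v)]$), locating the next bichromatic-path edge, detecting overlap with previously processed edges, and truncating at a uniformly random position among the first $l$ edges are all $O(\log n)$ operations per edge touched.

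Combining the two, one invocation of \texttt{colour}$(uv)$ runs in time $O\!\paren{\varepsilon^{-6} \log^8 n \log^6 \Delta}$ with high probability (the sampling cost is absorbed into the construction cost). Corollary~\ref{cor:Bern} guarantees that the recursive call on line \texttt{colour}$(uv)$ in Algorithm~\ref{alg:Algo1} is executed at most $(c+1) \log n$ times with probability at least $1 - n^{-c}$. Multiplying the two bounds and taking a union bound over the failure events (palette sampling failures and Vizing-chain-construction failures across all $O(\log n)$ recursive attempts, each occurring with probability at most $n^{-\Omega(1)}$) yields a total update time of $O\!\paren{\varepsilon^{-6} \log^9 n \log^6 \Delta}$ with high probability, as claimed.

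The main obstacle is really just bookkeeping: verifying that the data structures supporting the fan and bichromatic-path traversals interact correctly with the restriction to $G[S]$ and the $\varepsilon$-locality requirement, so that the per-edge cost inside each call stays polylogarithmic. Everything else is a straightforward aggregation of Lemma~\ref{lma:localPalSize} and Corollary~\ref{cor:Bern}.
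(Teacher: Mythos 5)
Your proposal follows essentially the same route as the paper's (very terse) proof: sample the palette via Lemma~\ref{lma:localPalSize}, observe that the effective degree in $G[S]$ is $|S| = O(\varepsilon^{-1}\log n\log\Delta)$ so that $T\cdot l = O(\varepsilon^{-6}\log^6\Delta\log^8 n)$, support the traversal with colour-indexed arrays per vertex, and multiply by the $O(\log n)$ recursive calls guaranteed by Corollary~\ref{cor:Bern}. The only blemish is a bookkeeping inconsistency: you first charge $O(\log n)$ per edge touched but then silently drop that factor when combining (with it, you would get $\log^{10} n$ total rather than the claimed $\log^9 n$); to land on the stated bound you should charge $O(1)$ per bichromatic-path edge, which the colour-indexed arrays already provide, reserving the more expensive palette-scanning work for the fan vertices, whose total contribution is lower order.
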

\begin{proof}
By Lemma~\ref{lma:localPalSize}, we can sample an $\varepsilon$-local palette in $O(\frac{\log \Delta \log n}{\varepsilon})$ time. We can also use it to construct fans in $O(\frac{\log \Delta \log n}{\varepsilon})$ time per vertex in a fan. We may chase bichromatic paths easily by storing edges in an array indexed by their colour at each vertex. Hence, by Corollary~\ref{cor:Bern} we arrive at the claimed running time. 
\end{proof}

\subsection{Reducing deletions to insertions} \label{sec:RedDel}
In this section, we show how to handle deletions while staying adaptive to the maximum degree $\Delta$. To illustrate the basic idea, consider first the incremental setting, where we only insert edges. 
Suppose, furthermore, that all edges are inserted via Algorithm~\ref{alg:Algo1} from the previous section using local palettes. 
The key observation is then that the number of edges coloured with the colour $\kappa$ is upper bounded by $|\{v: \kappa \in [(1+\varepsilon)d(v)], \kappa \notin A_{\varepsilon}(v)\}|$.
In fact, we can say something stronger: namely that if this invariant holds before an edge is coloured, then it also holds after, and so we conclude that the Vizing chains are in fact $\varepsilon$-local. 
This allows us to use a simple recolouring scheme to restore the invariant, whenever an edge is deleted. 

We first show that the Vizing chains constructed does not invalidate the invariant below, provided that the invariant was true before the shift. That is we show that the constructed Vizing chains are in fact $\varepsilon$-local.
\begin{invariant} \label{inv:boundC}
For all $\kappa$ we have $|\{e:c(e) = \kappa \}| \leq |\{v:  \kappa \in [(1+\varepsilon)d(v)], \kappa \notin A_{\varepsilon}(v)\}|$.
\end{invariant}
\begin{lemma} \label{lma:invMaintainingSingle}
Suppose that Invariant~\ref{inv:boundC} holds for all $\kappa$. Then it also holds after shifting a Vizing chain $F+P$ constructed as above (i.e.\ as in Appendix~\ref{app:B}).
\end{lemma}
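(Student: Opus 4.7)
The plan is to show, for every colour $\kappa$, that the shift of $F+P$ satisfies $\Delta L_\kappa = 0$ and $\Delta R_\kappa \geq 0$, where $L_\kappa$ and $R_\kappa$ denote the left- and right-hand sides of Invariant~\ref{inv:boundC}. Combined with the pre-shift inequality $L_\kappa \leq R_\kappa$, this gives the post-shift inequality.

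First I would observe that the shift of a chain $(e_1,\dots,e_s)$ sets $c_{\text{new}}(e_i) = c_{\text{old}}(e_{i+1})$ for $i < s$ and uncolours $e_s$. Therefore the multiset of colours on the $s-1$ coloured chain edges is $\{c_{\text{old}}(e_2),\dots,c_{\text{old}}(e_s)\}$ both pre- and post-shift; and since edges outside the chain are untouched, $L_\kappa$ is unchanged globally. The task reduces to verifying $\Delta R_\kappa \geq 0$ for every $\kappa$.

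Next I would tabulate the incidence changes at each vertex touched by the chain. At the centre $u$: each fan colour $\kappa_{w_{j-1}}$ simply slides from $uw_j$ to $uw_{j-1}$, so $u$ retains $\kappa_{w_1},\dots,\kappa_{w_{k-1}}$, and the only genuine effect at $u$ is gaining $\kappa_1$ via the newly $\kappa_1$-coloured edge $uw_k$. At each fan leaf $w_j$ ($1 \leq j \leq k-1$), the single chain-edge $uw_j$ swaps colour from $\kappa_{w_{j-1}}$ (or $\neg$ if $j=1$) to $\kappa_{w_j}$, so $w_j$ gains $\kappa_{w_j}$ and loses $\kappa_{w_{j-1}}$. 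At $w_k$: the fan-shift uncolours $uw_k$, and then the path-shift gives $uw_k$ colour $\kappa_1$ and switches $p_2$ from $\kappa_1$ to $\kappa_2$; the net effect at $w_k$ is to lose $\kappa_{w_{k-1}}$, retain $\kappa_1$ (merely transferred from $p_2$ to $uw_k$), and gain $\kappa_2$. Every interior vertex of the $(\kappa_1,\kappa_2)$-bichromatic path has its two incident chain edges swap colours, so it retains both $\kappa_1$- and $\kappa_2$-incidences. Finally, the far endpoint $y$ of $p_t$ loses its incidence to $c_{\text{old}}(p_t) \in \{\kappa_1,\kappa_2\}$ because $p_t$ becomes uncoloured.

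Summing these per-vertex contributions per colour, and using the $\varepsilon$-locality of every representative-available colour chosen in Appendix~\ref{app:B}, I would conclude: for each fan colour $\kappa_{w_j}$ with $j < k$, the $+1$ gain at $w_j$ (because $\kappa_{w_j} \in A_\varepsilon(w_j) \subseteq [(1+\varepsilon)d(w_j)]$) dominates the at-most-$-1$ loss at $w_{j+1}$; for $\kappa_1$, the $+1$ gain at $u$ (because $\kappa_1 \in A_\varepsilon(u)$) absorbs the at-most-$-1$ loss at $y$; and symmetrically for $\kappa_2$, the $+1$ gain at $w_k$ (because $\kappa_2 \in A_\varepsilon(w_k)$) absorbs the possible loss at $y$. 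All other colours are untouched. The main obstacle is the careful bookkeeping around $w_k$, where the fan and the path interact and several incidences change simultaneously; the key structural fact that removes apparent ambiguities is that the colouring is proper, so every vertex carries at most one edge of each colour, which is what lets each ``loss'' be at most $-1$. Two degenerate sub-cases require routine separate treatment: Lemma~\ref{lma:fan2} case 3, where $P$ collapses and we directly colour $uw_k$ with a colour $\kappa \in A_\varepsilon(u) \cap A_\varepsilon(w_k)$ (increasing $L_\kappa$ by $1$ and $R_\kappa$ by $2$, by $\varepsilon$-locality at both endpoints); and the rare situation where the bichromatic path revisits a fan vertex or $u$, which only produces additional retained incidences and therefore only strengthens the bound.
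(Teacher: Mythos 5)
Your overall strategy is the same as the paper's: treat the two sides of Invariant~\ref{inv:boundC} as a ledger, argue the left side is controlled by the shift, and match each lost vertex-contribution on the right side against the new contributions of $u$ (for $\kappa_1$) and $w_k$ (for $\kappa_2$). However, there are two concrete errors in your bookkeeping. First, you classify the second-to-last path vertex $x_{p-1}$ as an ``interior vertex'' that retains both its $\kappa_1$- and $\kappa_2$-incidences. That is false: its two chain edges are $p_{t-1}$ and $p_t$, and after the shift $p_{t-1}$ carries $c_{\mathrm{old}}(p_t)$ while $p_t$ is uncoloured, so $x_{p-1}$ loses its incidence to $c_{\mathrm{old}}(p_{t-1})$ (properness guarantees it had no other edge of that colour). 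Thus there are \emph{two} potential losses at the tail --- one of each colour, at $x_{p-1}$ and at $x_p$ --- not just the single loss at $y=x_p$ that you record. The final inequality survives, because the two losses are of distinct colours and are separately absorbed by the gains at $u$ and at $w_k$, but as written your ledger would certify a strict increase of $R_{\kappa'}$ for the colour $\kappa'=c_{\mathrm{old}}(p_{t-1})$, which is not justified. The paper's proof explicitly singles out ``the last two vertices of $P$, $x_{p-1}$ and $x_p$'' for exactly this reason.

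Second, your blanket claim $\Delta L_\kappa=0$ holds only for the shift itself, i.e.\ for truncated chains. When $F+P$ is augmenting via a full bichromatic path, the lemma must also cover the subsequent colouring of the final edge $p_t$ with the colour $\gamma'\in\{\kappa_1,\kappa_2\}$ available at both its endpoints; then $L_{\gamma'}$ increases by one. You treat only the degenerate augmenting case where the path collapses (Lemma~\ref{lma:fan2}, case~3), not this one. The repair is the paper's: in the augmenting case $x_{p-1}$ retains both incidences and $x_p$ trades a $\gamma$-incidence for a $\gamma'$-incidence, so the gain at whichever of $u,w_k$ is responsible for $\gamma'$ pays for the new $\gamma'$-edge, while the gain for $\gamma$ covers the loss at $x_p$. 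Both gaps are repairable without changing your architecture, but as written the proof asserts false intermediate claims and omits a case the lemma must handle.
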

\begin{proof}
Consider first the fan $F$ on edges $uw_1, \dots, uw_k$. Clearly $|\{e:c(e) = \kappa \}|$ remains unchanged for all choices of $\kappa$ after shifting $F$, and by construction $|\{v:  \kappa \in [(1+\varepsilon)d(v)], \kappa \notin A_{\varepsilon}(v)\}|$ cannot decrease. Indeed, this follows from arguments similar to those in the proof of Lemma~\ref{lma:mainLVT}. 
Namely, $u$'s contribution to this set is unchanged after the fan is shifted. 
For each colour $\kappa$, at most one vertex from $\{v:  \kappa \in [(1+\varepsilon)d(v)], \kappa \notin A_{\varepsilon}(v)\}$ is removed during the fan shift, but one vertex is also sure to be added by construction.

We first consider the case where $F+P$ is truncated. Similarly to Lemma~\ref{lma:mainLVT}, we note that only the contributions of the center of the fan $u$, the first vertex of $P$, $x_0 = w_k$, and the last two vertices of $P$, $x_{p-1}$ and $x_p$ are changed when $P$ is shifted. 

Observe that $|\{e:c(e) = \kappa \}|$ remains unchanged. Indeed, suppose $P$ is $(\kappa_1,\kappa_2)$-bichromatic with $\kappa_1$ available at $u$ and $\kappa_2$ available at $x_0$. If we truncate at an edge coloured $\kappa_1$, we will have equally many edges coloured $\kappa_1$ and $\kappa_2$ before, and so this lost edge is made up for by $e$ being coloured $\kappa_1$. If we instead cut at an edge coloured $\kappa_2$, then we will have exactly one more edge before that is coloured $\kappa_1$ than edges coloured $\kappa_2$, and so $|\{e:c(e) = \kappa_{i} \}|$ remains unchanged for $i \in \{1,2\}$ in this case also.
Now observe that $u$ will contribute to $|\{v:  \kappa_1 \in [(1+\varepsilon)d(v)], \kappa_1 \notin A_{\varepsilon}(v)\}|$ and $x_0$ will contribute to $|\{v:  \kappa_2 \in [(1+\varepsilon)d(v)], \kappa_2 \notin A_{\varepsilon}(v)\}|$ after the shift of the path. Furthermore, in the worst case these contributions are cancelled out by the lost contributions of $x_{p-1}$ and $x_p$. 

In the case where $P$ is augmenting, we note that if $\kappa_1$ was available at $u$, then $u$ will still contribute to $|\{v:  \kappa_1 \in [(1+\varepsilon)d(v)], \kappa_1 \notin A_{\varepsilon}(v)\}|$ and $x_0$ will still contribute to $|\{v:  \kappa_2 \in [(1+\varepsilon)d(v)], \kappa_2 \notin A_{\varepsilon}(v)\}|$ after the shift of the path.
Now, however, in the worst case $x_p$ stops contributing to $|\{v:  \kappa_i \in [(1+\varepsilon)d(v)], \kappa_i \notin A_{\varepsilon}(v)\}|$ for some $i$. Either way if $i = 2$, this does not matter, as then $|\{e:c(e) = \kappa_2\}|$ remains unchanged and the lost contribution is balanced out by $x_0$'s new contribution. Otherwise, if $i = 1$, it still does not matter as then $|\{e:c(e) = \kappa_1\}|$ is unchanged and the lost contribution is balanced out by $u$'s new contribution. 
In other words if $x_p$ stops contributing to $|\{v:  \kappa_1 \in [(1+\varepsilon)d(v)], \kappa_1 \notin A_{\varepsilon}(v)\}|$ then it is $|\{e:c(e) = \kappa_2\}|$ that increases and vice versa. 

We remark here that we do not have to consider the case where $P$ ends back at the fan, since the algorithm due to Bernshteyn rules out this type of augmenting chains.
\end{proof}
Of course the above only holds for a $1$-step Vizing chain, but it is straightforward to extend the arguments to hold for a multi-step Vizing chain:
\begin{lemma} \label{lma:invMaintaining}
Suppose that Invariant~\ref{inv:boundC} holds for all $\kappa$. Then it also holds after shifting a multi-step Vizing chain $F_{1}+P_{1} + \dots + F_{i} + P_{i}$ constructed as above.
\end{lemma}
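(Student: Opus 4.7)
The plan is to proceed by induction on the chain length $i$. The base case $i = 1$ is precisely Lemma~\ref{lma:invMaintainingSingle}, which we may apply directly since a $1$-step multi-step Vizing chain is just a Vizing chain $F_1 + P_1$ constructed as in Appendix~\ref{app:B}.

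For the inductive step, I would use the compositional definition of shifting given in the preliminaries, namely that
\[
\operatorname{Shift}(c, F_1 + P_1 + \dots + F_i + P_i) = \operatorname{Shift}\bigl(\operatorname{Shift}(c, F_1 + P_1 + \dots + F_{i-1} + P_{i-1}),\, F_i + P_i\bigr).
\]
Let $c'$ denote the intermediate colouring obtained after shifting the first $i-1$ Vizing chains. By the inductive hypothesis applied to the $(i-1)$-step Vizing chain $F_1 + P_1 + \dots + F_{i-1} + P_{i-1}$, the colouring $c'$ satisfies Invariant~\ref{inv:boundC} for all $\kappa$. It then remains to show that shifting the single Vizing chain $F_i + P_i$ on $c'$ preserves Invariant~\ref{inv:boundC}, which is exactly Lemma~\ref{lma:invMaintainingSingle} applied to $c'$ and the Vizing chain $F_i + P_i$.

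The one thing to verify in order to invoke Lemma~\ref{lma:invMaintainingSingle} at this last step is that $F_i + P_i$ is itself a Vizing chain of the type constructed in Appendix~\ref{app:B} with respect to $c'$ (rather than with respect to the original $c$). This is immediate from the construction of multi-step Vizing chains outlined in Section~\ref{sec:VC} and Algorithm~\ref{alg:Algo1}: the $i$-th Vizing chain is built on the truncation edge using the post-shift colouring of the previous Vizing chains via Lemma~\ref{lma:fan2}, with representative available colours chosen from $A_\varepsilon(\cdot)$ exactly as in the single-step construction.

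The main (minor) obstacle is simply bookkeeping: making sure that one records that the $i$-th chain is being built with respect to $c'$, not $c$, so that the hypotheses of Lemma~\ref{lma:invMaintainingSingle} are literally satisfied. Beyond this, the proof is a straightforward induction with no new combinatorial content.
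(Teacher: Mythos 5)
There is a genuine gap in the inductive step, and it is exactly the issue the paper's proof spends most of its effort on. Your argument reduces everything to applying Lemma~\ref{lma:invMaintainingSingle} to $F_i + P_i$ with respect to the intermediate colouring $c'$, and you dismiss the verification that $F_i+P_i$ is ``of the type constructed in Appendix~\ref{app:B}'' as bookkeeping, asserting that its representative available colours are chosen from $A_\varepsilon(\cdot)$ exactly as in the single-step construction. That assertion is false for $i \geq 2$: as the construction in Appendix~\ref{app:B} explicitly warns, when the extension goes through case $2)$ of Lemma~\ref{lma:fan2}, the new fan's center $u_i$ is \emph{forced} to take $\kappa_1$ as its representative available colour in order to reuse the $(\kappa_1,\kappa_2)$-bichromatic path, and $\kappa_1$ need not lie in $[(1+\varepsilon)d(u_i)]$. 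The proof of Lemma~\ref{lma:invMaintainingSingle} relies crucially on $\kappa_1 \in A_\varepsilon(u)$ at the step where it argues that, after the shift, $u$ contributes to $|\{v: \kappa_1 \in [(1+\varepsilon)d(v)],\ \kappa_1 \notin A_\varepsilon(v)\}|$ and thereby compensates for the newly $\kappa_1$-coloured edge. Without that membership, shifting $F_i + P_i$ in isolation can genuinely violate Invariant~\ref{inv:boundC} for the colour $\kappa_1$, so the single-step lemma simply does not apply to that step, and your induction breaks.

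The paper's proof repairs this by observing that whenever $u_i$ fails to have $\kappa_1$ $\varepsilon$-available, the \emph{previous} shift produced slack: the vertex $x_{p-1} = u_i$ that became an endpoint of the uncoloured edge did not lose a contribution to the right-hand side of the invariant (precisely because $\kappa_1 \notin [(1+\varepsilon)d(u_i)]$), so the inequality for that colour held \emph{strictly} after the previous step. The paper formalises this by decomposing the multi-step chain into maximal ``$\varepsilon$-damaged'' subchains and running an inner induction that carries a strict inequality for the relevant colour through all the damaged steps, spending the accumulated slack at the final shift. To fix your proof you would need to strengthen the induction hypothesis in the same spirit — a plain ``the invariant holds after $j$ steps'' hypothesis is not strong enough to push through step $j+1$.
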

\begin{proof}
The proof will be by induction. 
The only thing that we have to be careful of, compared to Lemma~\ref{lma:invMaintainingSingle}, is a Vizing chain extended via condition $2)$ in Lemma~\ref{lma:fan2}. 
Here, $\kappa_1$ might not be $\varepsilon$-available at the fan, and so when we analyse the impact of shifting this fan, we are not certain that the center of the fan will contribute to $|\{v:  \kappa_1 \in [(1+\varepsilon)d(v)], \kappa_1 \notin A_{\varepsilon}(v)\}|$, 
This is of course not a problem, since this means that the vertex also was not contributing to this quantity beforehand, and so the previous chain that was shifted can account for this offset.

To be able to handle the above formally, we define an \emph{$\varepsilon$-damaged} subchain to be a maximal subchain $F_j+P_j + \dots F_{j+r}+P_{j+r}$ such that for all $0\leq h \leq r$ $P_{j+r}$ is $(\kappa_1, \kappa_2)$-bichromatic and $\kappa_1 \in A_{\varepsilon}(u_{j})$, but the representative available colour at $u_{j+h}$ is not in $[(1+\varepsilon)d(u_{j+h})]$ for $0<h\leq r$. 

Notice that any subchain $F_j + P_j$ is part of a unique \emph{$\varepsilon$-damaged} subchain, and so we can uniquely decompose our $i$-step Vizing chain into \emph{$\varepsilon$-damaged} subchains. 
In particular, if we can show that any \emph{$\varepsilon$-damaged} subchain is $\varepsilon$-local, then it follows by induction on the number of \emph{$\varepsilon$-damaged} subchains in the decomposition of the $i$-step Vizing chain that the entire $i$-step Vizing chain is $\varepsilon$-local.
\begin{claim}
Let $F_j+P_j + \dots F_{j+r}+P_{j+r}$ be an \emph{$\varepsilon$-damaged} subchain. 
Then $F_j+P_j + \dots F_{j+r}+P_{j+r}$ is an $\varepsilon$-local Vizing chain.
\end{claim}
\begin{proof}
Let $c$ be the pre-shift colouring of the \emph{$\varepsilon$-damaged} subchain. 
We note first that any colour that is not $\kappa_1$ or $\kappa_2$ can only be shifted in fans, which we already saw in the proof of Lemma~\ref{lma:invMaintainingSingle} does not invalidate the invariant for these colours. 
Hence, we only have to focus on $\kappa_1$ and $\kappa_2$.
We will prove the following by induction: after shifting 
$F_j+P_j + \dots F_{j+h}+P_{j+h}$ for $h < r$, we have that 
\[
|\{e:c(e) = \kappa(h+1) \}|<|\{v:  \kappa(h+1) \in [(1+\varepsilon)d(v)], \kappa(h+1) \notin A_{\varepsilon}(v)\}|
\]
and 
\[
|\{e:c(e) = \kappa'(h+1) \}|\leq |\{v:  \kappa'(h+1) \in [(1+\varepsilon)d(v)], \kappa'(h+1) \notin A_{\varepsilon}(v)\}|
\]
where $\kappa(h) \in \{\kappa_1, \kappa_2\}$ is the representative available colour at $u_{j+h}$, and $\kappa'(h) \in \{\kappa_1, \kappa_2\} \setminus \kappa(h)$  is the other colour. 
This provides enough offset to also shift $F_{j+r}+P_{j+r}$ without breaking the invariant. 

The proof is by induction on $h$. We let $u_h$ and $w_{h,k}$ be the center of the fan $F_{j+h}$ respectively the first vertex of $P_{j+h}$. 
We also let $x_{p-1,h}$ and $x_{p,h}$ be the second last respectively last vertex on $P_{j+h}$.
The base case is $h = 0$. 
Recall from the proof of Lemma~\ref{lma:invMaintainingSingle} that for all $\kappa$ $|\{e:c(e) = \kappa \}|$ remains unchanged.
Now $u_h$ will contribute to $|\{v:  \kappa(h) \in [(1+\varepsilon)d(v)], \kappa(h) \notin A_{\varepsilon}(v)\}|$ and $w_{h,k}$ will contribute to $|\{v:  \kappa'(h) \in [(1+\varepsilon)d(v)], \kappa'(h) \notin A_{\varepsilon}(v)\}|$ after the shift of the path by the assumption that $\kappa_1 = \kappa(0) \in A_{\varepsilon}(u)$. In the worst case the contribution to $|\{v:  \kappa'(h+1) \in [(1+\varepsilon)d(v)], \kappa'(h+1) \notin A_{\varepsilon}(v)\}|$ is cancelled out by the lost contribution of $x_{p,h}$.
But since, by assumption, $\kappa(h+1) \notin [(1+\varepsilon)d(u_{h+1})]$, there is no lost contribution from $x_{p-1,h} = u_{h+1}$, and so we now have 
\[
|\{e:c(e) = \kappa(h+1) \}|<|\{v:  \kappa(h+1) \in [(1+\varepsilon)d(v)], \kappa(h+1) \notin A_{\varepsilon}(v)\}|
\]
We proceed to the induction step. Assume the statement holds for $h$. We show it also holds for $h+1 < r$. 
There are two cases: either $\kappa(h) = \kappa(h+1)$ or $\kappa(h) = \kappa'(h+1)$. \\\\
$\mathbf{Case}\text{ } \kappa(h) = \kappa(h+1):$ in this case, we know that that shifting $F_{j+h+1}+P_{j+h+1}$ only increases $|\{v:  \kappa'(h) \in [(1+\varepsilon)d(v)], \kappa'(h) \notin A_{\varepsilon}(v)\}|$ by 1, which, as always, in the worst case is cancelled out by the lost contribution of $x_{p,h+1}$.
But we also know $x_{p-1,h+1} = u_{h+1}$ does not have $\kappa(h+1)$ as an $\varepsilon$-available colour after the shift, and so it must be that $\kappa(h+1) \notin [(1+\varepsilon)d(u_{h+1})]$. 
Hence, as
\[
|\{e:c(e) = \kappa(h+1) \}|<|\{v:  \kappa(h+1) \in [(1+\varepsilon)d(v)], \kappa(h+1) \notin A_{\varepsilon}(v)\}|
\]
was true before the shift, it also holds after the shift. \\\\
$\mathbf{Case}\text{ } \kappa(h) \neq \kappa(h+1):$ in this case, we know that that shifting $F_{j+h+1}+P_{j+h+1}$ only increases $|\{v:  \kappa'(h) \in [(1+\varepsilon)d(v)], \kappa'(h) \notin A_{\varepsilon}(v)\}|$ by 1. 
In the worst case $x_{p,h+1}$ stops contributing to $\kappa'(h+1) = \kappa(h)$, and so $|\{v:  \kappa'(h+1) \in [(1+\varepsilon)d(v)], \kappa'(h+1) \notin A_{\varepsilon}(v)\}|$ might drop by 1. But by assumption, since $\kappa'(h+1) = \kappa(h)$, we had 
\[
|\{e:c(e) = \kappa'(h+1) \}|<|\{v:  \kappa'(h+1) \in [(1+\varepsilon)d(v)], \kappa'(h+1) \notin A_{\varepsilon}(v)\}|
\]
before the shift, and so we still have 
\[
|\{e:c(e) = \kappa'(h+1) \}|\leq |\{v:  \kappa'(h+1) \in [(1+\varepsilon)d(v)], \kappa'(h+1) \notin A_{\varepsilon}(v)\}|
\]
after the shift. We also know $x_{p-1,h+1} = u_{h+1}$ does not have $\kappa(h+1)$ as an $\varepsilon$-available colour after the shift, and so it must be that $\kappa(h+1) \notin [(1+\varepsilon)d(u_{h+1})]$. 
Hence, we find that
\[
|\{e:c(e) = \kappa(h+1) \}|<|\{v:  \kappa(h+1) \in [(1+\varepsilon)d(v)], \kappa(h+1) \notin A_{\varepsilon}(v)\}|
\]
after the shift. 

This proves the statement. The claim now follows by showing that shifting $F_{j+r}+P_{j+r}$ does not invalidate the $\varepsilon$-locality of the subchain. 
To that end, we note that shifting $F_{j+r}+P_{j+r}$ increases the contribution from $w_{k,r}$ to $|\{v:  \kappa'(h+1) \in [(1+\varepsilon)d(v)], \kappa'(h+1) \notin A_{\varepsilon}(v)\}|$ by 1. 
The lost contributions from $x_{p-1,r}$ and $x_{p,r}$ in the worst case drops $|\{v:  \kappa(h+1) \in [(1+\varepsilon)d(v)], \kappa(h+1) \notin A_{\varepsilon}(v)\}|$ and $|\{v:  \kappa'(h+1) \in [(1+\varepsilon)d(v)], \kappa'(h+1) \notin A_{\varepsilon}(v)\}|$ by 1. 
This cancels out the above new contribution for $\kappa'(r)$, and since we had 
\[
|\{e:c(e) = \kappa(r) \}|<|\{v:  \kappa(r) \in [(1+\varepsilon)d(v)], \kappa(r) \notin A_{\varepsilon}(v)\}|
\]
before the shift, we still have 
\[
|\{e:c(e) = \kappa(r) \}|\leq |\{v:  \kappa(r) \in [(1+\varepsilon)d(v)], \kappa(r) \notin A_{\varepsilon}(v)\}|
\]
after. Hence the claim follows.
\end{proof}
As mentioned earlier, the lemma now follows by induction on the number $\varepsilon$-damaged subchains.
\end{proof}
Lemma~\ref{lma:invMaintaining} says that if we make sure to recolour edges in such a way that we adhere to Invariant~\ref{inv:boundC} for all $\kappa$ during deletions, then we can also colour inserted edges in such a way that we adhere to Invariant~\ref{inv:boundC} for all $\kappa$. There are only two things to consider, when maintaining the invariant under deletions. First of all, when deleting an edge $xy$ coloured $\kappa$, one of the following things happen: neither $x$ nor $y$ contribute to $|\{v:  \kappa \in [(1+\varepsilon)d(v)], \kappa \notin A_{\varepsilon}(v)\}|$, exactly one of $x$ and $y$ contribute to $|\{v:  \kappa \in [(1+\varepsilon)d(v)], \kappa \notin A_{\varepsilon}(v)\}|$ or both $x$ and $y$ contribute to $|\{v:  \kappa \in [(1+\varepsilon)d(v)], \kappa \notin A_{\varepsilon}(v)\}|$. If neither $x$ nor $y$ or exactly one of them contribute to $|\{v:  \kappa \in [(1+\varepsilon)d(v)], \kappa \notin A_{\varepsilon}(v)\}|$, then all is fine. A problem only arises if both $x$ and $y$ contribute to $|\{v:  \kappa \in [(1+\varepsilon)d(v)], \kappa \notin A_{\varepsilon}(v)\}|$ -- in this case we will say the edge is a \emph{$\kappa$-heavy} edge. Then if the invariant held with equality beforehand, one would now possibly have to recolour an edge coloured $\kappa$ where neither of the endpoints contribute to $|\{v:  \kappa \in [(1+\varepsilon)d(v)], \kappa \notin A_{\varepsilon}(v)\}|$ in order to restore Invariant~\ref{inv:boundC}. Call such an edge \emph{$\kappa$-light}. Since the inequality held with equality before the deletion, it then follows that such a \emph{$\kappa$-light} edge has to exist, and so we can just uncolour it in order to restore the invariant. Hence, we have reduced the problem to that of colouring this edge, and we already now how to do so while maintaining the invariant. 

Second of all the degree of $u$ and $v$ might drop. This could also invalidate the invariant for some $\kappa'$. However, this can only happen for $O(1)$ choices of $\kappa'$. If the invariant fails to hold because $u$ or $v$ dropped their degree, it is because at least one of them (possibly both) are incident to a $\kappa'$-coloured edge and so they stop contributing to $|\{v:  \kappa \in [(1+\varepsilon)d(v)], \kappa \notin A_{\varepsilon}(v)\}|$. In this case, however we can just recolour any such edges. Since there are at most $O(1)$ such edges, we have reduced the case of handling deletions to the case of colouring uncoloured edges, which we already know how to do. In order to be able to locate $\kappa$-light edges efficiently, we will with $O(1)$ overhead make sure that we update a doubly-linked list $Q_{\kappa}$ for each colour $\kappa$ containing all $\kappa$-light edges. Such lists are straightforwardly maintained under degree increments and decrements as well as under Vizing chain shifts. 
Indeed, we can calculate the $O(1)$ affected colours for each vertex and locate and fix any potentially affected edges in constant time. 
In summary, we have shown that we can maintain Invariant~\ref{inv:boundC} during deletions using the simple Algorithm~\ref{alg:deletions}: 
\begin{algorithm} 
    \caption{Deletions}
    \label{alg:deletions}
    \begin{algorithmic}
            \Function{Delete}{$e = uv$}
                \State Initialise $R$ as a stack.
                \If{$uv$ is $c(uv)$-heavy}
                    \State $e' = $ an arbitrary edge from $Q_{c(uv)}$
                    \State Remove $e'$ from $Q_{c(uv)}$
                    \State Uncolour $e'$
                    \State Push $e'$ to $R$
                \EndIf
                \For{$x \in \{u,v\}$}
                    \For{all $\kappa \in ((1+\varepsilon)(d(x)-1),(1+\varepsilon)d(x)])$}
                        \State Let $e'$ be an edge coloured $\kappa$ incident to $x$.
                        \State Uncolour $e'$
                        \State Push $e'$ to $R$.
                    \EndFor
                \EndFor
                \State Delete $e$ from the graph.
                \While{$R$ non-empty}
                    \State $e' = \texttt{pop}(R)$ 
                    \State $\texttt{colour}(e')$
                \EndWhile
            \EndFunction
        \end{algorithmic}
\end{algorithm}
We have essentially shown the following lemma.
\begin{lemma} \label{lma:delete}
$\operatorname{Delete}$ runs in $O(\varepsilon^{-6} \log ^6 \Delta \log^9 n)$ time, and if Invariant~\ref{inv:boundC} holds before $\operatorname{Delete}$ is run, then it also holds after. 
\end{lemma}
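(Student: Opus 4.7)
The plan is to split the proof into two pieces: first, showing that Invariant~\ref{inv:boundC} is preserved across an entire $\operatorname{Delete}$ call, and second, bounding the running time by $O(\varepsilon^{-6}\log^6\Delta\log^9 n)$ w.h.p. The invariant part is the main work; the timing part will reduce to citing Lemma~\ref{lma:Algo2}.

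For invariant preservation I would first re-express Invariant~\ref{inv:boundC} in a more convenient form. Since the colouring is proper, every vertex is incident to at most one edge of each colour, so if I classify the $\kappa$-coloured edges as $\kappa$-heavy, half-heavy (exactly one endpoint $x$ has $\kappa \in [(1+\varepsilon)d(x)]$), or $\kappa$-light, and denote their counts by $h_\kappa, m_\kappa, \ell_\kappa$, then the LHS of the invariant is $h_\kappa + m_\kappa + \ell_\kappa$ and the RHS is $2h_\kappa + m_\kappa$. Thus the invariant is equivalent to $\ell_\kappa \le h_\kappa$ for every $\kappa$. With this reformulation the analysis separates cleanly. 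For $\kappa^{\star} = c(uv)$, deleting $uv$ decreases $h_{\kappa^{\star}}$ by one (and makes $\kappa^{\star}$ newly available at $u$ and $v$), so if the invariant was tight I need a matching decrease in $\ell_{\kappa^{\star}}$; when this is the case, $\ell_{\kappa^{\star}} = h_{\kappa^{\star}} \ge 1$ guarantees that $Q_{\kappa^{\star}}$ is non-empty, so the algorithm can uncolour a $\kappa^{\star}$-light edge. For each of the $O(1)$ colours $\kappa'$ in the intervals $((1+\varepsilon)(d(x)-1), (1+\varepsilon)d(x)]$ for $x\in\{u,v\}$, I would argue that $x$ is about to drop out of the RHS for $\kappa'$, so uncolouring the unique $\kappa'$-coloured edge incident to $x$ (if any) matches the drop in RHS by dropping the LHS. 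Once all these uncolourings and the graph deletion are done, the invariant holds; the subsequent calls to $\operatorname{colour}$ then preserve it by Lemma~\ref{lma:invMaintaining}.

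For the running time I would observe that the stack $R$ contains only $O(1)$ edges: one from the heavy-case step and at most $2(1+\lceil \varepsilon \rceil) = O(1)$ from the double loop, because each interval has length $1+\varepsilon \le 2$. Each $\operatorname{colour}$ call then costs $O(\varepsilon^{-6}\log^6\Delta\log^9 n)$ w.h.p.\ by Lemma~\ref{lma:Algo2}, and a union bound over the $O(1)$ calls preserves the high-probability bound. The remaining task is to justify constant-time maintenance of the lists $Q_\kappa$: using the heavy/half/light characterisation, the status of an edge $xy$ of colour $\kappa$ depends only on $\kappa, d(x), d(y)$, so a degree change at $u$ or $v$ affects $O(1)$ colour-edge pairs, and each Vizing-chain recolouring of a single edge triggers $O(1)$ updates to the relevant $Q_\kappa$ lists, which is well within the budget allotted by Lemma~\ref{lma:Algo2}.

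The main obstacle is the existence guarantee for the $\kappa^{\star}$-light edge in the heavy case; this is exactly where the identity $\ell_\kappa = h_\kappa$ at tight invariant buys the needed slack, and it is the reason the lists $Q_\kappa$ must be maintained precisely rather than heuristically.
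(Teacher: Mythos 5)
Your overall route is the same as the paper's: restore Invariant~\ref{inv:boundC} by hand before the \texttt{while}-loop, invoke Lemma~\ref{lma:invMaintaining} for the subsequent $\operatorname{colour}$ calls, and charge the running time to Lemma~\ref{lma:Algo2} plus $O(1)$-overhead maintenance of the lists $Q_\kappa$. Your reformulation of the invariant as $\ell_\kappa \le h_\kappa$ is correct and is a genuine improvement in clarity: since a vertex is counted on the right-hand side exactly when $\kappa$ lies in its window \emph{and} it is incident to a (necessarily unique) $\kappa$-coloured edge, the right-hand side equals $2h_\kappa + m_\kappa$ while the left-hand side is $h_\kappa+m_\kappa+\ell_\kappa$. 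Your handling of the $c(uv)$-heavy case, including the argument that $\ell_{\kappa^\star}=h_{\kappa^\star}\ge 1$ guarantees $Q_{\kappa^\star}\neq\emptyset$ exactly when an uncolouring is needed, matches the paper's intent and is sound.

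There is, however, a gap in the degree-drop step: the claim that uncolouring the $\kappa'$-edge $e'=xz$ at $x$ ``matches the drop in RHS by dropping the LHS'' fails precisely when $e'$ is $\kappa'$-heavy. After $d(x)$ drops, $e'$ moves from heavy to half-heavy, so $h_{\kappa'}$ decreases by one while $\ell_{\kappa'}$ is unchanged; uncolouring $e'$ then deletes a half-heavy edge, which changes neither $h_{\kappa'}$ nor $\ell_{\kappa'}$. Equivalently, the uncolouring lowers the left-hand side by one but \emph{also} lowers the right-hand side by one more, because $z$ (which still has $\kappa'$ in its window) gains $\kappa'$ as an $\varepsilon$-available colour. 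Hence if $\ell_{\kappa'}=h_{\kappa'}$ held with equality beforehand, the invariant for $\kappa'$ remains violated by one, and the later call $\operatorname{colour}(e')$ assigns $e'$ a fresh colour and never repairs the $\kappa'$-counts. The repair is the same device you already use for $c(uv)$: in this sub-case one must additionally pop and uncolour a $\kappa'$-light edge from $Q_{\kappa'}$, which exists because $\ell_{\kappa'}=h_{\kappa'}\ge 1$ (the heavy edge $e'$ witnesses $h_{\kappa'}\ge 1$). To be fair, the paper's own discussion preceding the lemma has the identical omission, so this is a case where your sharper $\ell_\kappa\le h_\kappa$ accounting exposes a needed correction rather than a misreading on your part; with that one extra uncolouring added, both the invariance and the $O(1)$ bound on $|R|$ (and hence the stated running time) go through.
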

\begin{proof}
Correctness follows from the above discussion which shows that Invariant~\ref{inv:boundC} is restored before the $\operatorname{while}$-loop is entered. Lemma~\ref{lma:invMaintaining} shows that it also holds after the while-loop has been exited. 
As for the running time, it is completely straightforward to maintain such doubly linked-lists for every colour with only $O(1)$ space and time overhead, provided that we store a pointer at each edge to its position in the relevant doubly-linked lists. 
\end{proof}
Hence, we have shown Theorem~\ref{thm:dynVT}, which we restate for the readers' convenience:
\begin{theorem}[Theorem~\ref{thm:dynVT}]
Let $G$ be a dynamic graph subject to insertions and deletions with current maximum degree $\Delta$. There exists a fully-dynamic and $\Delta$-adaptive algorithm maintaining a proper $(1+\varepsilon)\Delta$-edge-colouring with $O(\varepsilon^{-6} \log ^6 \Delta \log^9 n)$ update time with high probability.
\end{theorem}

\section*{Acknowledgements}
The author would like to thank Eva Rotenberg for helpful discussions and encouragements. 

\bibliographystyle{alpha}
\bibliography{refs}

\appendix

\section{Constructing $\varepsilon$-local Vizing chains} \label{app:B}
For completeness, we outline exactly how to construct an $\varepsilon$-local Vizing chain in this appendix. 

\paragraph{Construction:} Let $c$ be a partial colouring of a graph $G$, and let $e = uv \in E(G)$ be any edge left uncoloured by $c$. 
Then we consider the following Vizing chain on $e$:
we construct a maximal fan $uw_1, \dots, uw_k$ as follows. Initially, we set $w_1 = v$. Now having picked $w_j$, we pick $w_{j+1}$ or conclude our construction with $k = j$ as follows: Note first that for all vertices $y$ we have that $A(y) \cap [\lceil (1+\varepsilon)d(y) \rceil]$ is non-empty, since at most $d(y)$ colours can be excluded from $A(y)$.
We now have three cases: \\\\
\textbf{Case 1:} if there exists a colour $\kappa \in A(w_j) \cap [\lceil(1+\varepsilon)d(w_j)\rceil]$ that is also available at $u$, we pick $\kappa$ as $w_j$'s representative available colour and conclude the construction with $k = j$. \\\\
\textbf{Case 2:} if there exists a colour $\kappa \in A(w_j) \cap [\lceil(1+\varepsilon)d(w_j)\rceil]$ and an index $\ell<j$ such that $c(uw_{\ell}) = \kappa$, we pick $\kappa$ as $w_j$'s representative available colour and conclude the construction with $k = j$. \\\\
\textbf{Case 3:} Otherwise, we pick an arbitrary colour $\kappa \in A(w_j) \cap [\lceil(1+\varepsilon)d(w_j)\rceil]$ as $w_j$'s representative available colour. Note that $u$ has to be incident to an edge $ux$ coloured $\kappa$ and that $x \notin \{w_{\ell}\}_{\ell=1}^{j}$, since then we would be in case $1)$ or case $2)$. Therefore, we can pick $w_{j+1} = x$. \\\\
The above construction gives us a fan $F$ on the edges $uw_1, \dots, uw_{k}$, such that $w_k$'s representative available colour is either available at $u$ or is present at $u$ at the edge $uw_i$ for some $i<k$. In particular, if we stop in case $1)$, we have identified a an augmenting Vizing chain, and we conclude the construction.

If we stop in case $2)$, we pick an arbitrary representative available colour $\kappa_1 \in A(u) \cap [\lceil (1+\varepsilon)d(y) \rceil]$ for $u$ and set $\kappa_2$ equal to $w_k$'s representative available colour. Then we consider the $(\kappa_1,\kappa_2)$-bichromatic path $P'$ rooted at $w_k$. 
Now we can truncate $P'$ and extend via Lemma~\ref{lma:fan2} as usual as long as we are careful to pick all representative colours in Lemma~\ref{lma:fan2} as colours that are $\varepsilon$-available. 
This is extension is completely synchronous to previous constructions, and so we will not elaborate on it here, although we will note that if one ends up in case $2)$ of Lemma~\ref{lma:fan2}, one is forced to pick $\kappa_1$ as the representative available colour at $u$, and this colour might not be $\varepsilon$-available at $u$. 
However, as we show in Section~\ref{sec:RedDel} the constructed Vizing chain is still $\varepsilon$-local.

\end{document}